\documentclass[a4paper,UKenglish,cleveref,autoref,thm-restate]{arxiv}

\nolinenumbers

\bibliographystyle{plainurl}


\usepackage{stmaryrd}
\usepackage{tikz-cd}


\definecolor{cornflowerblue}{rgb}{0.99,0.78,0.07}
\definecolor[named]{lipicsYellow}{rgb}{0.99,0.78,0.07}

\usepackage{mathtools}
\newcommand{\defeq}{\vcentcolon=}

\newcommand{\ket}[1]{  |{#1} \rangle} 
\newcommand{\bra}[1]{ \left \langle#1\right | }

\newcommand{\se}{\subseteq}
\newcommand{\ls}{\leqslant}
\newcommand{\gs}{\geqslant}
\newcommand{\sm}{\setminus}

\DeclareSymbolFont{matha}{OML}{txmi}{m}{it}

\definecolor{fillvertices}{RGB}{250,240,230}

\newcommand{\supp}{\textup{\textsf{supp}}}
\newcommand{\ham}[1]{w(#1)}


\title{Local equivalence of stabilizer states: a graphical characterisation\footnote{Long version, including proofs, of the paper accepted at STACS 2025 \cite{CP25}.}} 

\author{Nathan Claudet}{Inria Mocqua, LORIA, CNRS, Université de Lorraine, F-54000 Nancy, France}{nathan.claudet@inria.fr}{https://orcid.org/0009-0000-0862-1264}{}

\author{Simon Perdrix}{Inria Mocqua, LORIA, CNRS, Université de Lorraine, F-54000 Nancy, France}{simon.perdrix@loria.fr}{https://orcid.org/0000-0002-1808-2409}{}

\ccsdesc[0]{}

\authorrunning{N. Claudet and S. Perdrix}

\Copyright{Nathan Claudet and Simon Perdrix}

\ArticleNo{23} 

\begin{document}

\maketitle

\begin{abstract}  
    Stabilizer states form a ubiquitous family of quantum states that can be graphically represented through the graph state formalism. A fundamental property of  graph states is that applying a local complementation -- a well-known and extensively studied graph transformation -- results in a graph that represents the same entanglement as the original. In other words, the corresponding graph states are LU-equivalent. This property served as the cornerstone for capturing non-trivial quantum properties in a simple graphical manner, in the study of quantum entanglement but also for developing protocols and models based on graph states and stabilizer states, such as measurement-based quantum computing, secret sharing, error correction, entanglement distribution... However, local complementation fails short to fully characterise entanglement: there exist pairs of graph states that are LU-equivalent but cannot be transformed one into the other using local complementations. Only few is known about the equivalence of graph states beyond local complementation. We introduce a generalisation of local complementation which graphically characterises the LU-equivalence of graph states. We use this characterisation to show the existence of a strict infinite hierarchy of equivalences  of graph states. Our approach is based on minimal local sets, which are subsets of vertices that are known to cover any graph, and to be  invariant under local complementation and even LU-equivalence. We use these structures to provide a type to each vertex of a graph, leading to a natural standard form in which the LU-equivalence can be exhibited and captured by means of generalised local complementation. 
\end{abstract}

\section{Introduction}

Stabilizer states, and in particular graph states, form a versatile family of entangled quantum states, that allow for easy and compact representations. They are used as entangled resource states in various quantum information applications, like measurement-based computation \cite{raussendorf2001one, raussendorf2003measurement, briegel2009measurement}, error corrections \cite{schlingemann2001quantum,schlingemann2001stabilizer,cross2008codeword,sarvepalli2011local}, quantum communication network routing \cite{hahn2019quantum,meignant2019distributing, bravyi2024generating, Cautres2024}, and quantum secret sharing \cite{markham2008graph,gravier2013quantum},  to cite a few.  
In all these applications, stabilizer states are used as multi-partite entangled resources, it is thus crucial to understand when two such states have the same entanglement. 
According to standard quantum information theory, two quantum states have the same entanglement if they can be transformed into each other using only local operations, where `local' refers to operations that can be applied to at most one qubit at a time.  Indeed, intuitively, such local operations can only decrease the strength of the entanglement of a quantum state, thus if a state can be transformed into another and back to the original one using only local operations, the two states do have the same entanglement. Concretely, the most general case is the so-called SLOCC-equivalence (stochastic local operations and classical communications) that encompass the use of local unitaries and measurements. In the particular case of stabilizer states, it is enough to consider \emph{LU-equivalence} (local unitaries), as two stabilizer states are SLOCC-equivalent iff there exists $U=U_1\otimes \ldots \otimes U_n$ that transforms one state into the other, where each $U_i$ is a 1-qubit unitary transformation \cite{Hein04}. Therefore, in this paper, we refer to LU-equivalence when stating that two stabilizer states possess the same entanglement.

Stabilizer states can be graphically represented through the graph state formalism, which consists in representing quantum states using graphs where each vertex corresponds to a qubit.  Graph states form actually a sub-family of stabilizer states, but any stabilizer state can be transformed into a graph state using local Clifford unitaries in a straightforward way \cite{Hein04}, as a consequence it is sufficient to consider graph states when dealing with entanglement equivalence of stabilizer states. Notice that understanding the entanglement structure of graph states and how it can be characterised in a graphical way is the most fascinating fundamental question of the graph state formalism \cite{Hein04,VandenNest04,VandenNest05,gross2007lu,Ji07,Tsimakuridze17,tzitrin2018local,DWH:transfo}.

An interesting combinatorial property of entanglement has been proved in \cite{Hein04}: LU-equivalent graph states have the same cut-rank function\footnote{The cut-rank function over a graph $G=(V,E)$ is the function that maps a set $A \se V$ to the rank (in $\mathbb F_2$) of the cut-matrix $\Gamma_A = ((\Gamma_A)_{ab}: a \in A\text{, } b \in V\sm A)$ where $\Gamma_{ab} = 1$ if and only if $(a, b) \in E$.}. However, there exist pairs of graph states that have the same cut-rank function, but that are not LU-equivalent: a counterexample involves two isomorphic Petersen graphs~\cite{Fon-Der-Flaass1996,Hein04}.

A fundamental property of the graph state formalism is that a \emph{local complementation} -- a graph transformation that consists in complementing the neighbourhood of a given vertex -- preserves entanglement: if a graph can be transformed into another  by means of local complementations then the two corresponding graph states are LU-equivalent. More precisely the corresponding graph states are LC-equivalent (local Clifford). Van den Nest \cite{VandenNest04} proved that two graph states are LC-equivalent if and only if the corresponding graphs are related by a sequence of local complementations. LC-equivalence and LU-equivalence of graph states were conjectured to coincide \cite{5pb}, as they coincide for several families of graph states \cite{VandenNest05,Zeng07}. However, a counterexample of order 27 has been discovered using computer assisted methods \cite{Ji07}. Since then, {the existence of a graphical characterisation of LU-equivalence has remained an open question.} Furthermore, families of graph states for which LC-equivalence and LU-equivalence coincide have been a subject of interest \cite{sarvepalli2011local, tzitrin2018local}.

\noindent{\bf Contributions.~} We introduce the \emph{$r$-local complementation}, a graphical transformation parametrised by a positive integer $r$, which coincides with the usual local complementation when $r=1$. We show that an $r$-local complementation can be performed on the corresponding graph states by means of local unitaries in LC$_r = \left< H,~Z\left(\pi/2^r\right) \right>$, the group generated by the local Clifford group augmented with the $Z\left(\pi/2^r\right)$ gate. We additionally show that, conversely, for any fixed $r$, the LC$_r$-equivalence of graph states is captured by $r$-local complementations: if two graph states are LC$_r$-equivalent, then there is a sequence made of (usual) local complementations and at most one $r$-local complementation, that transforms one graph into the other. In particular, the known examples of graph states that are LU-equivalent but not LC-equivalent \cite{Ji07,Tsimakuridze17}, are actually LC$_2$-equivalent, thus they are captured graphically by the $2$-local complementation. 
This leads to the natural question of the existence of pairs of graph states that are LC$_3$-equivalent but not LC$_2$-equivalent, and so on. To answer this question, we show that the $r$-local complementations form a strict hierarchy: for any positive $r$ there exist pairs of LC$_{r+1}$-equivalent graph states that are not LC$_r$-equivalent. Finally, we show that any LU-equivalent graph states are LC$_r$-equivalent for some $r$, even if there exist some local unitaries, like $R_Z(\pi/3)$, that are not contained in LC$_r$, for any $r$. In other words, two graph states are LU-equivalent if and only if the corresponding graphs can be transformed into each other by a sequence of generalised local complementations, leading to a full graphical description of LU-equivalence for graph states. 
As an application, we prove a conjecture on a family of graph states for which LU-equivalence and LC-equivalence coincide. 

\noindent{\bf Related work.} Other graphical extensions of local complementation have been introduced in particular for weighted hypergraph states \cite{Tsimakuridze17}, a wider family of quantum states. The connection between generalised local complementation and local complementation of weighted hypergraph states is discussed in section \ref{sec:r_lc}.  In \cite{gross2007lu,zeng2011transversality}, the authors proved that when considering LU-equivalence of stabilizer states it suffices to consider semi-Clifford unitaries\footnote{A unitary $U$ is semi-Clifford if there exists a Pauli operator $P \in {X, Y, Z}$ such that $UPU^\dagger$ is also a Pauli operator.}. Symmetries of stabilizers and graph states have been explored in \cite{Englbrecht2020}, where a characterisation of local operations that have graph states as fixed points, is provided; the characterisation of the LU-equivalence of graph states was however left as an open question. Shortly after the submission\footnote{The present paper is the long version, including proofs, of the paper accepted at STACS 2025 \cite{CP25}.} of the present paper and its upload on arXiv, Burchardt, de Jong, and Vandré have introduced an algorithm to decide the LU-equivalence of stabilizer states \cite{burchardt2024algorithmverifylocalequivalence}. This independent work uses some similar techniques to ours,  notably minimal local sets. 

\noindent{\bf Structure of the paper.~} In \cref{sec:preliminaries}, we recall some definitions and notations and provide a handful of tools to manipulate graph states. In \cref{sec:r_lc}, we define the generalisation of the local complementation, and we show that $r$-local complementations can be implemented on the corresponding graph states by means of local unitaries in LC$_r$. 
\cref{sec:standardform} is dedicated to the converse result: the graphs corresponding to any pair of LC$_r$-equivalent graph states can be transformed into each other by means of $r$-local complementations. To do so, we show that any graph can be put in a standard form by means of (usual) local complementations. This standard form is based on the so-called minimal local sets, which are subsets of vertices known to be invariant under LU-equivalence. As any graph can be covered by minimal local sets, we associate to any vertex of a graph a type on which is based the standard form. We then prove that if two graph states in standard form are LC$_r$-equivalent, there exists an $r$-local complementation transforming one graph into the other. We also prove that if two graph states on $n$ qubits are LU-equivalent they are LC$_{n/2}$-equivalent. 
Finally, in \cref{sec:hierarchy}, we introduce a family of variants of Kneser graphs, and show, using our graphical characterisation of LC$_r$-equivalence, that for any $r>1$ there exists pairs of graph states that are LC$_r$-equivalent but not LC$_{r-1}$-equivalent.

\section{Preliminaries}
\label{sec:preliminaries}

Let us first give some notations and basic definitions. 
A graph\footnote{We only consider simple (no selfloop) undirected graphs.} $G$ is a pair $(V,E)$, where $V$ is a finite set of vertices, and $E$ is a set of {unordered pairs of distinct vertices called} 
edges. We assume the set of vertices to be totally ordered and use the notation $V=\{u_1,\ldots, u_n\}$ s.t. $u_i\prec u_j$ iff $i<j$. The number $n=|V|$ of vertices is called the order of the graph and $|G|:= |E|$ its size. We use the notation $u\sim_G v$ when the vertices $u$ and $v$ are adjacent in $G$, i.e. $(u,v)\in E$. Given a vertex $u$,  $N_G(u)=\{v\in V~|~(u,v)\in E\}$ is the neighbourhood of $u$. The odd and the common neighbourhoods are two natural generalisations of the notion  of neighbourhoods  to sets of vertices: for any $D \se V$, $Odd_G(D) =  \Delta_{u\in D} N_G(u) = \{v \in V ~|~|N_G(v) \cap D| = 1 \text{ mod } 2\}$ is the odd neighbourhood of $D$, where $\Delta$ denotes the symmetric difference on vertices. Informally, $Odd_G(D)$ is the set of vertices that are the neighbours of an odd number of vertices in $D$. The common neighbourhood of $D$ is denoted $\Lambda_G^D=\bigcap_{u\in D}N_G(u)=\{v\in V~|~\forall u \in D, v\in N_G(u)\}$. Two unadjacent vertices $u$ and $v$ are said twins if they are adjacent to the same vertices i.e. $N_G(u) = N_G(v)$. A local complementation with respect to a given vertex $u$ consists in complementing the subgraph induced by the neighbourhood of $u$, leading to the graph $G\star u= G\Delta K_{N_G(u)}$ where $\Delta$ denotes the symmetric difference on edges and $K_A$ is the complete graph on the vertices of $A$. Local complementation is an involution, i.e. $G \star u \star u = G$. A pivoting with respect to two adjacent vertices $u$ and $v$ is the operation that maps $G$ to $G\wedge uv:= G\star u \star v \star u$. Notice that pivoting is symmetric ($G\wedge uv = G\wedge vu$) and is an involution ($G\wedge uv\wedge uv = G$). With a slight abuse of notation we identify multisets of vertices with their multiplicity function $V\to \mathbb N$. \footnote{Hence, we also identify sets of vertices with their indicator functions $V\to \{0,1\}$.}
A (multi)set $S$ of vertices  is independent if there is no two vertices of $S$ that are adjacent.

Graph states form a standard family of quantum states that can be represented using graphs (Ref. \cite{Hein06} is an excellent introduction to graph states). Given a graph $G$ of order $n$, the corresponding \textbf{graph state} $\ket G$ is the $n$-qubit state: $$\ket G = \frac 1{\sqrt {2^n}}\sum_{x\in \{0,1\}^n}(-1)^{|G[x]|}\ket x$$ where $G[x]$ denotes the subgraph of $G$ induced by $\{u_i~| ~x_i=1\}\subseteq V$. 

The $i^\text{th}$ qubit of $\ket G$ can be associated with the vertex $u_i$ of $G$, so, with a slight abuse of notation, we refer to $V$ as the set of qubits of $\ket G$. Given a 1-qubit operation $R$, like $X:\ket a\mapsto \ket {1-a}$ or $Z:\ket a \mapsto (-1)^a\ket a$, and a subset $D\subseteq V$ of qubits, $R_D:=\bigotimes_{u\in D} R_u$ is the local operation consisting in applying $R$ on each qubit of $D$.

The graph state $\ket G$ is the unique quantum state (up to a global phase) that, for every vertex $u\in V$, is a fixed point of the Pauli operator $X_uZ_{N_G(u)}$. 
More generally, $\ket G$ is an eigenvector of any Pauli operator $X_DZ_{Odd(D)}$ with $D\subseteq V$:

\begin{restatable}{proposition}{stabilizer}
    Let $G = (V,E)$ be a graph. The corresponding graph state $\ket G$ is a fixed point of $\mathcal P^G_D=(-1)^{|G[D]|}X_D Z_{Odd(D)}$ for any $D\subseteq V$,
    where $|G[D]|$ is the number of edges of the subgraph of $G$ induced by $D$.
    \label{prop:stabilizer}
\end{restatable}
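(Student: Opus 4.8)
The plan is to verify directly that $\mathcal P^G_D \ket G = \ket G$ using the explicit sum-over-basis-states formula for $\ket G$. Write $\ket G = \frac{1}{\sqrt{2^n}} \sum_{x \in \{0,1\}^n} (-1)^{|G[x]|} \ket x$, where we freely identify a bit string $x$ with the subset $\{u_i : x_i = 1\} \subseteq V$. The operator $X_D$ maps $\ket x$ to $\ket{x \Delta D}$ (flipping exactly the coordinates in $D$), and $Z_{Odd_G(D)}$ contributes a sign $(-1)^{|x \cap Odd_G(D)|}$. So the strategy is to apply $\mathcal P^G_D$ term by term, reindex the sum by $y = x \Delta D$, and show that the total sign accumulated on $\ket y$ equals $(-1)^{|G[y]|}$, which is the coefficient of $\ket y$ in $\ket G$.

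Concretely, after substituting $x = y \Delta D$, the coefficient of $\ket y$ becomes $(-1)^{|G[D]|} \cdot (-1)^{|(y\Delta D) \cap Odd_G(D)|} \cdot (-1)^{|G[y \Delta D]|}$, and we must check this equals $(-1)^{|G[y]|}$, i.e. that
\[
|G[y \Delta D]| + |(y \Delta D)\cap Odd_G(D)| + |G[D]| \equiv |G[y]| \pmod 2 .
\]
The key combinatorial identity is the standard expansion of the edge count of a symmetric difference of vertex sets: $|G[A \Delta B]| = |G[A]| + |G[B]| + |E(A\sm B, B \sm A)| + (\text{terms that cancel mod }2)$, or more usefully one tracks how $|G[y\Delta D]|$ differs from $|G[y]|$ by toggling the membership of each vertex of $D$ one at a time. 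Adding a single vertex $u$ to a set $S$ changes the induced edge count by $|N_G(u) \cap S|$; doing this successively for the vertices of $D$ (using $D$ independent or not — here $D$ is arbitrary, so internal edges of $D$ appear) produces exactly $|G[D]|$ plus cross terms of the form $|N_G(u) \cap y|$ summed over $u \in D$, and $\sum_{u \in D} |N_G(u) \cap y| \equiv |y \cap Odd_G(D)| \pmod 2$ by definition of the odd neighbourhood. Assembling these pieces gives the displayed congruence.

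The main obstacle is bookkeeping the mod-2 arithmetic cleanly: one must be careful that toggling vertices of $D$ that are already in $y$ versus not in $y$ is handled uniformly (this is why the symmetric-difference formulation is convenient), and that the purely-internal contribution of $D$ is exactly $|G[D]|$ regardless of which vertices of $D$ happen to lie in $y$. A clean way to organise this is to prove the single-vertex case first ($D = \{u\}$: then $|G[x\Delta\{u\}]| \equiv |G[x]| + |N_G(u)\cap x| + [u \in x]\cdot 0$, and the $X_u Z_{N_G(u)}$ sign works out), recovering the well-known stabilizer fact, and then iterate: since each $\mathcal P^G_{\{u\}}$ fixes $\ket G$ and $\mathcal P^G_D = (-1)^{c} \prod_{u \in D} \mathcal P^G_{\{u\}}$ for an appropriate sign $c$ coming from reordering the $X$'s past the $Z$'s and from the $Z_{N_G(u)}$'s overlapping, identifying $c$ with $|G[D]|$ finishes the proof. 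I would likely present the direct computation rather than the iteration, as it avoids tracking Pauli commutation signs, but either route works.
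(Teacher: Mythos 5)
Your route is different from the paper's, and it does work. The paper never expands $\ket G$ in the computational basis: it takes as given that $\ket G$ is fixed by each generator $X_uZ_{N_G(u)}$ and argues by induction on $|D|$, peeling off one vertex $u\in D$, using $Z_{Odd(D)}=Z_{N_G(u)}Z_{Odd(D\setminus u)}$ and $|G[D]|=|G[D\setminus u]|+|N_G(u)\cap(D\setminus u)|$, and letting the sign $(-1)^{|N_G(u)\cap(D\setminus u)|}$ from commuting $X_{D\setminus u}$ past $Z_{N_G(u)}$ cancel the edge-count correction --- essentially the second, iterative route you mention but chose not to present. Your direct computation is more self-contained (it re-proves the single-vertex stabilizer fact rather than assuming it) at the cost of the mod-2 bookkeeping; the paper's induction is shorter but leans on Pauli (anti)commutation and the known generator property.

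One step of your ``assembling'' must be made explicit, because the ingredients you list do not by themselves yield the displayed congruence. Your toggling argument produces
\[
|G[y\Delta D]| \equiv |G[y]| + |G[D]| + |y\cap Odd_G(D)| \pmod 2,
\]
with the cross term evaluated at $y$; but the phase contributed by $Z_{Odd(D)}$ is evaluated at $x=y\Delta D$, so the congruence you actually need contains $|(y\Delta D)\cap Odd_G(D)|$, and
\[
|(y\Delta D)\cap Odd_G(D)| \equiv |y\cap Odd_G(D)| + |D\cap Odd_G(D)| \pmod 2 .
\]
The discrepancy $|D\cap Odd_G(D)|$ counts the vertices of odd degree in the induced subgraph $G[D]$, which is even by the handshake lemma. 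With that one observation your identity closes and the proof is complete; without it, the displayed congruence does not follow from the pieces you assembled.
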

    
The proof of \cref{prop:stabilizer} is given in Appendix \ref{app:proof:stabilizer}. 
The fact that a graph state can be defined as the common eigenvector of Pauli operators witnesses that graph states form a subfamily of stabilizer states\footnote{An $n$-qubit state is a stabilizer state if it is the fixpoint of $n$ independent commuting Pauli operators (or equivalently $2^n$ distinct commuting Pauli operators).}. Conversely, any stabilizer state can be turned into a graph state by means of local Clifford unitaries\footnote{Local Clifford unitaries are tensor products of single-qubit Clifford gates, which map the Pauli group to itself under conjugation. Formally, a single-qubit gate $U$ is Clifford if for any $P \in \mathcal P \defeq \{\pm 1, \pm i\} \times \{I, X, Y, Z \}$, $U P U^\dagger \in \mathcal P$.}  in a straightforward way (see for instance \cite{VandenNest04}). Thus, we will focus in the rest of this paper on graph states. 

The action of measuring, in the standard basis, a qubit of a graph state leads, roughly speaking, to the graph state where the corresponding vertex has been removed. A diagonal measurement can also be used to remove a vertex when this vertex is isolated: 

\begin{proposition}\label{prop:meas}
For any graph $G$ and any vertex $u$,  $\bra 0_u \ket G = \frac1{\sqrt 2} \ket{G\setminus u}$. Moreover,  if $u$ is an isolated vertex (i.e. $N_G(u)=\emptyset$), then $\bra +_u\ket G = \ket{G\setminus u}$ 
where $\bra + = \frac{\bra 0 + \bra 1}{\sqrt 2}$. 
\end{proposition}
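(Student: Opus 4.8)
The plan is to compute both projections directly from the explicit sum-over-bitstrings formula for $\ket G$, using the fact that $\bra 0$ and $\bra +$ act as linear functionals that either select or sum over the relevant coordinate.

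First I would prove the $\bra 0_u$ statement. Write $V = \{u_1,\dots,u_n\}$ and assume WLOG that $u = u_n$ (reordering qubits does not affect the argument). For $x \in \{0,1\}^n$, decompose $x = (x', x_n)$ with $x' \in \{0,1\}^{n-1}$. By definition, $\bra 0_{u_n}\ket G = \frac1{\sqrt{2^n}}\sum_{x' \in \{0,1\}^{n-1}} (-1)^{|G[(x',0)]|}\ket{x'}$, since $\langle 0 | x_n\rangle$ kills every term with $x_n = 1$. Now the key observation is that when $x_n = 0$, the induced subgraph $G[(x',0)]$ contains no vertex $u_n$, so it equals exactly $(G\setminus u_n)[x']$; hence $|G[(x',0)]| = |(G\setminus u)[x']|$. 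Therefore $\bra 0_u\ket G = \frac1{\sqrt 2}\cdot\frac1{\sqrt{2^{n-1}}}\sum_{x'}(-1)^{|(G\setminus u)[x']|}\ket{x'} = \frac1{\sqrt 2}\ket{G\setminus u}$.

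Next I would prove the $\bra +_u$ statement under the hypothesis $N_G(u) = \emptyset$. Again take $u = u_n$. Since $\bra + = \frac{\bra 0 + \bra 1}{\sqrt 2}$, we get $\bra +_{u_n}\ket G = \frac1{\sqrt 2}\cdot\frac1{\sqrt{2^n}}\sum_{x'\in\{0,1\}^{n-1}}\bigl((-1)^{|G[(x',0)]|} + (-1)^{|G[(x',1)]|}\bigr)\ket{x'}$. The crucial point is that because $u_n$ is isolated, adding $u_n$ to the vertex set of the induced subgraph adds no edges: $G[(x',1)] = G[(x',0)] \cup \{u_n\}$ as graphs, so $|G[(x',1)]| = |G[(x',0)]|$, and the two signs coincide. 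The parenthesised factor is therefore $2(-1)^{|G[(x',0)]|} = 2(-1)^{|(G\setminus u)[x']|}$, and we obtain $\bra +_u\ket G = \frac1{\sqrt 2}\cdot\frac{2}{\sqrt 2}\cdot\frac1{\sqrt{2^{n-1}}}\sum_{x'}(-1)^{|(G\setminus u)[x']|}\ket{x'} = \ket{G\setminus u}$.

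There is no serious obstacle here; the only things to be careful about are (i) the bookkeeping of normalisation constants ($\frac1{\sqrt{2^n}} = \frac1{\sqrt 2}\cdot\frac1{\sqrt{2^{n-1}}}$, and the extra $\frac1{\sqrt 2}$ in $\bra +$ cancelling against the $2$ from the coinciding signs), and (ii) the identification $|G[(x',0)]| = |(G\setminus u)[x']|$, which is immediate once one notes that deleting the last coordinate from the index set of the induced subgraph is exactly the same as deleting $u$ from $G$ first. The isolation hypothesis is used precisely and only to guarantee that the $x_n = 0$ and $x_n = 1$ contributions have equal sign; without it, the $Z$-type sign introduced by the edges incident to $u$ would obstruct the cancellation, which is consistent with the general principle that a diagonal $\bra +$ measurement removes a vertex only when it is isolated.
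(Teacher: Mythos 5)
Your proof is correct. The paper states this proposition without giving a proof (it is treated as a standard fact of the graph state formalism), and your direct computation from the definition $\ket G = \frac1{\sqrt{2^n}}\sum_x (-1)^{|G[x]|}\ket x$ is exactly the expected argument: the identification $|G[(x',0)]| = |(G\setminus u)[x']|$, the normalisation bookkeeping, and the use of the isolation hypothesis to make the $x_n=0$ and $x_n=1$ signs coincide are all handled correctly.
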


\begin{remark}
A standard basis measurement consists in applying either $\bra 0$ or $\bra 1$ which correspond the classical outcomes $0$ and $1$ respectively. Whereas the action in the $0$ case is directly described in \cref{prop:meas}, the action in the $1$ case can be recovered thanks to \cref{prop:stabilizer}: $\bra 1_u\ket G = \bra 0_u X_u\ket G = \bra 0_u Z_{N_G(u)}\ket G =  \frac1{\sqrt 2}Z_{N_G(u)}\ket{G\setminus u}$. Thus, it also corresponds to a vertex deletion up to some $Z$ corrections on the neighbourhood of the measured qubit. 
\end{remark}

We are interested in the action of 1-qubit unitaries on graph states, in particular Hadamard $H:\ket{a}\mapsto \frac{\ket 0+(-1)^a\ket1}{\sqrt 2}$, and Z- and X-rotations respectively defined as follows:
\[Z(\alpha):=e^{i\frac \alpha2}\left(\cos\left(\frac \alpha2\right)I-i\sin\left(\frac \alpha2\right)Z\right)\quad  X(\alpha):=HZ(\alpha)H=e^{i\frac \alpha2}\left(\cos\left(\frac \alpha2\right)I-i\sin\left(\frac \alpha2\right)X\right)\]

In particular, local complementations can be implemented by $\pi/2$ X- and Z-rotations:

\begin{proposition}[\cite{VandenNest04}]For any graph $G=(V,E)$ and any $u\in V$,  $\ket{G\star u} = L_u^G \ket {G}$ where $L_u^G := X_u(\frac \pi 2)\bigotimes_{v\in N_G(u)}Z_v(-\frac \pi 2)$. 
\end{proposition}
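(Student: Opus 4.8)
\emph{Proof plan.} The plan is to argue entirely in the stabilizer formalism. Recall that $\ket{G\star u}$ is the unique state, up to a global phase, fixed by all of its canonical stabilizers $S'_v \defeq X_v Z_{N_{G\star u}(v)}$ ($v\in V$), and that the canonical stabilizers $S_v \defeq X_v Z_{N_G(v)}$ generate the $2^n$-element stabilizer group of $\ket{G}$. Since $L_u^G$ is a local Clifford unitary, conjugation by it is an automorphism of the Pauli group; so it suffices to establish (a) that $L_u^G S_v (L_u^G)^\dagger$ stabilizes $\ket{G\star u}$ for every $v\in V$ --- as conjugation is a group isomorphism, this forces $L_u^G\ket{G}$ and $\ket{G\star u}$ to have the same stabilizer group, hence $L_u^G\ket{G} = e^{i\theta}\ket{G\star u}$ --- and (b) that one computational-basis amplitude agrees, pinning $\theta = 0$.

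For (a), I would first record the one-qubit conjugation rules, obtained by a routine $2\times 2$ computation: $X(\tfrac\pi2)$ fixes $X$ and sends $Z\mapsto -Y$, and $Z(-\tfrac\pi2)$ fixes $Z$ and sends $X\mapsto -Y$. Then split into three cases. If $v = u$, then $S_u$ carries $X$ on $u$ (fixed by $X_u(\tfrac\pi2)$) and $Z$ on $N_G(u)$ (each factor fixed by a $Z(-\tfrac\pi2)$), so $L_u^G S_u (L_u^G)^\dagger = S_u = S'_u$, since $N_{G\star u}(u) = N_G(u)$. If $v\notin N_G(u)\cup\{u\}$, then $u\notin N_G(v)$, so $S_v$ carries $X$ on $v$ (no gate there) and $Z$ only on $N_G(v)$ (where only $Z(-\tfrac\pi2)$'s occur, each fixing $Z$); hence $L_u^G S_v (L_u^G)^\dagger = S_v = S'_v$, since local complementation at $u$ toggles no edge incident to $v$. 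The only delicate case is $v\in N_G(u)$: writing $A = N_G(v)\cap N_G(u)$, $B = N_G(v)\setminus(N_G(u)\cup\{u\})$ and $C = N_G(u)\setminus(N_G(v)\cup\{v\})$ --- so that $N_G(v) = \{u\}\sqcup A\sqcup B$, $N_G(u) = \{v\}\sqcup A\sqcup C$, and, from $N_{G\star u}(v) = N_G(v)\,\Delta\,(N_G(u)\setminus\{v\})$, also $N_{G\star u}(v) = \{u\}\sqcup B\sqcup C$ --- the rules send $Z_u\mapsto -Y_u$ and $X_v\mapsto -Y_v$ while fixing $Z_A$ and $Z_B$, giving $L_u^G S_v (L_u^G)^\dagger = Y_uY_v Z_A Z_B$. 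One then checks $S'_u S'_v = (X_uZ_{\{v\}\cup A\cup C})(X_vZ_{\{u\}\cup B\cup C}) = Y_uY_vZ_AZ_B$: the $Z_C$ factors cancel, and on qubits $u$ and $v$ the reorderings give $X_uZ_u = -iY_u$ and $Z_vX_v = iY_v$, of combined phase $(-i)(i) = 1$. Hence $L_u^G S_v (L_u^G)^\dagger = S'_u S'_v$, which again stabilizes $\ket{G\star u}$.

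For (b), I would evaluate $\bra{0\cdots 0}L_u^G\ket{G}$ and compare with $\ip{0\cdots 0}{G\star u} = 2^{-n/2}$. Since $\bra{0}Z(-\tfrac\pi2) = \bra{0}$ and $\bra{0}X(\tfrac\pi2) = \tfrac12\big((1+i)\bra{0} + (1-i)\bra{1}\big)$, the amplitude equals $\tfrac12\big((1+i)\bra{0}_u + (1-i)\bra{1}_u\big)\bra{0}_{V\setminus u}\ket{G}$; evaluating $\bra{0}_u\ket{G}$ and $\bra{1}_u\ket{G}$ with \cref{prop:meas} and the remark following it --- the $Z_{N_G(u)}$ correction in the $\bra{1}$ branch acting trivially against $\bra{0}$ --- both branches contribute $2^{-n/2}$, and $\tfrac12((1+i)+(1-i)) = 1$, so the amplitude is $2^{-n/2}$ and $\theta = 0$. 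The main obstacle is purely bookkeeping: tracking the signs and the set partition in the case $v\in N_G(u)$. Everything else is mechanical; one could instead verify the identity directly on computational-basis amplitudes via the definition of $\ket{G}$, but that route is considerably more tedious.
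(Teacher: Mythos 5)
Your proposal is correct: the conjugation rules, the three-case analysis (in particular the computation $L_u^G S_v (L_u^G)^\dagger = Y_uY_vZ_AZ_B = S'_uS'_v$ for $v\in N_G(u)$, with $N_{G\star u}(v)=\{u\}\sqcup B\sqcup C$), and the phase-fixing via the all-zeros amplitude all check out. Note, however, that the paper does not prove this proposition itself — it is quoted from \cite{VandenNest04} — so there is no in-paper proof to compare against; your stabilizer-conjugation argument is essentially the standard one from that literature. It is worth observing that the paper's own Proposition~\ref{prop:clif} subsumes this statement as the special case $r=1$, $S=\{u\}$, and its proof in Appendix~\ref{sec:proof:clif} proceeds by a different technique: instead of transporting stabilizer generators, it verifies equality of both sides against the product basis $\{Z_R\ket{+}_{\mathcal S}\otimes X_L\ket{0}_{\bar{\mathcal S}}\}$, reducing the claim to a counting identity modulo $2^{r+1}$. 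Your route is more elementary and self-contained for the $r=1$ case (and fixes the global phase explicitly, which the basis-evaluation method handles implicitly by computing all inner products), while the paper's method scales to the generalised $r$-local complementation where a case analysis over stabilizer generators would be much harder to carry out.
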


Similarly, pivoting can be implemented by means of Hadamard transformations (up to Pauli transformations):
\begin{proposition}[\cite{mhalla2012graph,van2005edge}]For any graph $G=(V,E)$ and any $(u,v)\in E$, 
$\ket{G\wedge uv} = H_uH_vZ_{\Lambda_{G}^{\{u,v\}}}\ket {G}$ 
\end{proposition}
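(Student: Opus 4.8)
The plan is to unfold the pivot as $G\wedge uv = G\star u\star v\star u$ and compose three copies of Van den Nest's local‑complementation operator. Set $U := L_u^{G\star u\star v}\,L_v^{G\star u}\,L_u^{G}$; applying the previous proposition three times gives the exact identity $\ket{G\wedge uv} = U\ket G$. Since each of $L_u^{G}$, $L_v^{G\star u}$, $L_u^{G\star u\star v}$ is a tensor product of single‑qubit gates over the fixed factorisation indexed by $V$, so is $U$, and it therefore suffices to read off the single‑qubit gate that $U$ carries on each qubit.

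To do so, I first partition $V\setminus\{u,v\}$ into $A := N_G(u)\setminus(N_G(v)\cup\{v\})$, $B := N_G(v)\setminus(N_G(u)\cup\{u\})$, $C := N_G(u)\cap N_G(v) = \Lambda_G^{\{u,v\}}$, and $D$ (the rest), so that $N_G(u) = \{v\}\cup A\cup C$ and $N_G(v) = \{u\}\cup B\cup C$. A direct check with local complementation --- using that $\star w$ fixes $N(w)$ and only toggles edges inside it, and that $u$ and $v$ stay adjacent throughout --- gives $N_{G\star u}(v) = \{u\}\cup A\cup B$ and $N_{G\star u\star v}(u) = \{v\}\cup B\cup C$. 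Feeding these into the three operators, one finds that $u$ receives $X(\pi/2)Z(-\pi/2)X(\pi/2)$, that $v$ receives $Z(-\pi/2)X(\pi/2)Z(-\pi/2)$, that every vertex of $A$, of $B$, and of $C$ is touched by exactly two of the three operators and hence receives $Z(-\pi/2)^2 = Z(-\pi) = Z$, and that every vertex of $D$ receives the identity. Hence $U = U_u\,U_v\,Z_{A\cup B\cup C}$ with $U_u = X_u(\pi/2)Z_u(-\pi/2)X_u(\pi/2)$ and $U_v = Z_v(-\pi/2)X_v(\pi/2)Z_v(-\pi/2)$, and $A\cup B\cup C = (N_G(u)\cup N_G(v))\setminus\{u,v\}$.

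Next I compute the two residual single‑qubit products by $2\times 2$ matrix multiplication: one gets $X(\pi/2)Z(-\pi/2)X(\pi/2) = e^{i\pi/4}HY$ and $Z(-\pi/2)X(\pi/2)Z(-\pi/2) = e^{-i\pi/4}HY$. The phases cancel, and regrouping the mutually commuting factors (all their supports are disjoint) gives $U = H_uH_v\,Z_C\,(Y_uY_v Z_{A\cup B})$. Finally I recognise $Y_uY_vZ_{A\cup B}$ as a stabiliser of $\ket G$: since $u\sim v$ we have $|G[\{u,v\}]| = 1$ and $Odd_G(\{u,v\}) = N_G(u)\,\Delta\,N_G(v) = \{u,v\}\cup A\cup B$, whence
$$\mathcal P^G_{\{u,v\}} = -X_uX_v Z_{\{u,v\}\cup A\cup B} = -(X_uZ_u)(X_vZ_v)Z_{A\cup B} = Y_uY_v Z_{A\cup B},$$
using $XZ = -iY$. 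By \cref{prop:stabilizer}, $\mathcal P^G_{\{u,v\}}\ket G = \ket G$, so
$$\ket{G\wedge uv} = U\ket G = H_uH_v Z_C\,\mathcal P^G_{\{u,v\}}\ket G = H_uH_v Z_C\ket G = H_uH_v Z_{\Lambda_G^{\{u,v\}}}\ket G.$$

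The only delicate point is the combinatorial bookkeeping of the second step --- computing $N_{G\star u}(v)$ and $N_{G\star u\star v}(u)$ correctly, and then counting, class by class, which of the three operators touches each vertex --- together with tracking the $e^{\pm i\pi/4}$ global phases so that the identity comes out exact rather than merely projective. A more computational alternative that sidesteps the operator manipulations is to work directly with amplitudes: expand $\ket G = \frac1{\sqrt{2^n}}\sum_x(-1)^{|G[x]|}\ket x$, split $x$ as $(x_u,x_v,y)$, apply $H_uH_vZ_{\Lambda_G^{\{u,v\}}}$ and carry out the $4$‑term Fourier sum over $(x_u,x_v)$; comparing with $\frac1{\sqrt{2^n}}\sum_z(-1)^{|(G\wedge uv)[z]|}\ket z$ then reduces to the single congruence $|(G\wedge uv)[y]| \equiv |G[y]| + \alpha(y)\beta(y) + |\supp(y)\cap\Lambda_G^{\{u,v\}}| \pmod 2$, where $\alpha(y)$ and $\beta(y)$ are the parities of $|\supp(y)\cap(N_G(u)\setminus\{v\})|$ and $|\supp(y)\cap(N_G(v)\setminus\{u\})|$; this congruence is just the classical graph‑theoretic description of the pivot (interchange $u$ and $v$, then toggle every edge running between two distinct classes among $A$, $B$, $C$).
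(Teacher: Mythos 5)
Your proof is correct, and it is worth noting that the paper itself does not prove this proposition at all: it is quoted from \cite{mhalla2012graph,van2005edge}. What you give is a self-contained derivation from two facts already available in the paper, namely Van den Nest's implementation $\ket{G\star u}=L_u^G\ket G$ and the stabiliser identity $\mathcal P^G_D\ket G=\ket G$ of \cref{prop:stabilizer}. I checked the delicate steps: with $N_G(u)=\{v\}\cup A\cup C$ and $N_G(v)=\{u\}\cup B\cup C$ one indeed gets $N_{G\star u}(v)=\{u\}\cup A\cup B$ and $N_{G\star u\star v}(u)=\{v\}\cup B\cup C$ (the edge $uv$ is never toggled since the complemented vertex lies outside its own neighbourhood), so each of $A$, $B$, $C$ is hit by exactly two $Z(-\pi/2)$ factors and $D$ by none; with the paper's convention $Z(\alpha)=\mathrm{diag}(1,e^{i\alpha})$ one has $Z(-\pi/2)^2=Z$ exactly, and the $2\times2$ products do come out as $X(\pi/2)Z(-\pi/2)X(\pi/2)=e^{i\pi/4}HY$ and $Z(-\pi/2)X(\pi/2)Z(-\pi/2)=e^{-i\pi/4}HY$, so the global phases cancel. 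Finally $Odd_G(\{u,v\})=\{u,v\}\cup A\cup B$ and $|G[\{u,v\}]|=1$ give $\mathcal P^G_{\{u,v\}}=Y_uY_vZ_{A\cup B}$, which absorbs the leftover Pauli factor and yields the exact (not merely projective) identity $\ket{G\wedge uv}=H_uH_vZ_{\Lambda_G^{\{u,v\}}}\ket G$. This is essentially the route taken in the cited references (unfold the pivot as $\star u\star v\star u$ and clean up the residual local Pauli using a stabiliser of $\ket G$), so your argument both matches the literature and has the advantage of being checkable entirely inside the paper's formalism; the amplitude-based alternative you sketch at the end is plausible but is only a sketch, and the operator argument is the one that should be retained.
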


Beyond $\frac \pi 2$ rotations, notice that any local unitary can be implemented using Hadamard and Z-rotations with arbitrary angles. We consider for any $r\in \mathbb N$ the set LC$_r$ of local unitaries generated by $H$ and $Z(\frac\pi{2^r})$. In particular LC$_1$ is the set of local Clifford operators, moreover, for any $r$, LC$_r\subseteq LC_{r+1}$. Notice that there exist local unitaries, e.g. $Z(\frac \pi3)$, that are not contained in  LC$_r$ for any $r$. We say that two states are LU-equivalent, denoted $\ket{\psi} =_{LU} \ket{\psi'}$ when there exists a local unitary transformation $U$ such that $\ket \psi = U\ket {\psi'}$. Similarly, two states are LC$_r$-equivalent, denoted $\ket\psi=_{LC_r} \ket{\psi'}$ when there exists a local unitary $U\in LC_r$ s.t. $\ket{\psi} = U\ket{\psi'}$.

It is well-known that two graph states $\ket G$ and $\ket{G'}$ are LC$_1$-equivalent if and only if there exists a sequence of local complementations that transforms $G$ into $G'$ \cite{VandenNest04}. We slightly refine this result showing that there exists a reasonably small sequence of local complementation transforming $G$ into $G'$, that corresponds to a particular Clifford operator:

\begin{restatable}{proposition}{LCLC}
If $\ket {G_2}= C\ket {G_1}$ where $C$ is a local Clifford operator up to a global phase, then there exists a sequence of (possibly repeating) vertices $a_1, \cdots, a_m$ such that $G_2 = G_1\star a_1 \star a_2 \star \cdots \star a_{m}$ with $m \ls \left\lfloor 3n/2\right\rfloor$, and the local Clifford operator that implements these local complementations is $C$ up to a Pauli operator.
\label{prop:LCLC}
\end{restatable}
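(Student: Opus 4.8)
The plan is to translate the statement into a bookkeeping problem about the single-qubit Clifford group modulo Pauli, and then to build the required sequence by an elimination-style argument that processes the vertices in pairs.

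\textbf{Reduction modulo Pauli.} I would work throughout modulo Pauli operators. The single-qubit Clifford group modulo the Pauli group is isomorphic to $S_3$ (the permutations of $\{X,Y,Z\}$), generated by the two involutions $\bar H$ and $\overline{X(\pi/2)}$; its Cayley graph for this generating set is a hexagon, so every element is a word of length at most $3$, and in particular $\overline{Z(\pi/2)}=\bar H\,\overline{X(\pi/2)}\,\bar H$. Modulo Pauli the two elementary graph moves act cleanly: $L_u^G$ realises $\overline{X(\pi/2)}$ on $u$ and $\overline{Z(\pi/2)}$ on each current neighbour of $u$, while a pivot $\wedge uv$ (three local complementations) realises $\bar H$ on both $u$ and $v$, the correction $Z_{\Lambda_G^{\{u,v\}}}$ being Pauli. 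The first key point is that it suffices to realise $C$ \emph{up to a Pauli operator} by some sequence $a_1,\dots,a_m$ of local complementations: the composite operator is then $PC$ for some Pauli $P$, so the resulting graph state equals $P\ket{G_2}$; absorbing the $X$-part of $P$ into a $Z$-part via \cref{prop:stabilizer} leaves $Z_S\ket{G_2}$, and inspecting the single-vertex amplitudes shows this is a graph state only when $S=\emptyset$, i.e. $P$ acts trivially on $\ket{G_2}$. Hence $G_1\star a_1\star\cdots\star a_m=G_2$ comes for free, and the whole task reduces to: realise the prescribed element $\bar C\in (S_3)^V$ by at most $\lfloor 3n/2\rfloor$ local complementations.

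\textbf{Construction.} Existence of \emph{some} connecting sequence is Van den Nest's theorem, so only the two refinements — the specific Clifford on each vertex and the length — need work. I would partition $V$ into $\lfloor n/2\rfloor$ pairs, plus one leftover vertex when $n$ is odd, and process them one group at a time using at most $3$ local complementations per pair (and at most $1$ on the leftover), for a total of $3\lfloor n/2\rfloor+(n\bmod 2)=\lfloor 3n/2\rfloor$. For a pair $\{u,v\}$ one achieves the required $(\bar C_u,\bar C_v)$, up to a single-qubit Clifford residue that is pushed onto the not-yet-processed vertices, by a word of length at most $3$ drawn from $\star u$, $\star v$ and $\wedge uv$: since $\star u$ (resp. $\star v$) contributes $\overline{X(\pi/2)}$ on $u$ (resp. $v$) and $\wedge uv$ contributes $\bar H$ on both, the $S_3$-relations let these short words hit every needed pattern, while the $\overline{Z(\pi/2)}$ side effects on neighbours are either cancelled within the group or folded into the residues on unprocessed vertices; the groups are ordered so that a vertex, once finalised, is never touched again. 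The residues telescope through the unprocessed part, and by Van den Nest's characterisation the remaining transformation always keeps its "graphic" shape, so the induction never stalls and the last group absorbs whatever residue is left.

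\textbf{Main obstacle.} The heart of the proof is exactly this side-effect management. Because local complementation acts non-locally, realising a \emph{prescribed} element of $(S_3)^V$ — not merely some element that reaches the correct LC-class — forces a careful choice of the processing order and of the short words so that the spurious $\overline{Z(\pi/2)}$ factors cancel rather than accumulate, and so that the residual transformation stays feasible at every stage (this is where Van den Nest's theorem is genuinely used). Pinning the count to $3$ per \emph{pair} rather than $3$ per vertex is the quantitative crux: a single vertex can genuinely require a length-$3$ word over $\{\bar H,\overline{X(\pi/2)}\}$, and only by combining two vertices — crucially via the pivot, which spends $3$ local complementations to place a Hadamard on both — together with letting one group inherit the cleanup left by the previous one, can the amortised cost be brought down to $3/2$ per vertex.
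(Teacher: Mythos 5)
Your reduction modulo Pauli (and the observation that once the implemented Clifford is $PC$, comparing the amplitude at $\ket{0\cdots0}$ and at single-vertex basis states forces the final graph to be $G_2$) is sound and matches the role of the paper's base case and of \cref{rm:clifford}. The gap is in the construction. First, a fixed partition of $V$ into pairs cannot work as described: the only way your scheme places $\bar H$ on a vertex is a pivot, and pivoting $\wedge uv$ is only available when $u$ and $v$ are adjacent in the \emph{current} graph, which your pre-chosen pairs do not guarantee. The paper's proof hinges on exactly the lemma you are missing here: if $C_u$ is Hadamard-type, then $u$ has a neighbour $v$ \emph{in $G_2$} whose component $C_v$ is not $Z$-diagonal (proved by a stabilizer contradiction of the form $\ket{G_1}=-\ket{G_1}$), so a single pivot clears both at cost $3$. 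Without such an existence statement, the amortised $3/2$ count has no basis; note also that e.g.\ $(\bar C_u,\bar C_v)=(\overline{X(\pi/2)}\,\bar H,\bar H)$ on a pair already needs a pivot plus one more local complementation, so even granting adjacency the ``$3$ per pair'' bound is not achieved by words in $\{\star u,\star v,\wedge uv\}$ alone.

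Second, the side-effect management is asserted rather than proved, and as stated it is false: $\star u$ applies $\overline{Z(\pi/2)}$ to \emph{all} current neighbours of $u$, including vertices you have already finalised, so the residues cannot be confined to unprocessed vertices, and ``the last group absorbs whatever residue is left'' cannot hold with only $3$ moves and its own non-local side effects. Relatedly, your reformulation — realise a \emph{prescribed} element of $(S_3)^V$ — is strictly harder than what is needed and is not in general achievable (on the empty graph the realisable subgroup is generated by $\overline{X(\pi/2)}$ alone); invoking Van den Nest's theorem to say the induction ``never stalls'' does not supply the missing argument. The paper avoids all of this by inducting on the $Z$-weight of the residual operator $C'$ (the number of qubits where it is not of the form $P\,Z(k\pi/2)$): an $X(\pm\pi/2)$-type qubit is cleared by one local complementation on $G_2$, a Hadamard-type qubit by a pivot with the neighbour provided by the lemma above, and the base case shows that a $Z$-weight-zero local Clifford mapping a graph state to a graph state is already Pauli — which is precisely what renders the accumulated $\overline{Z(\pi/2)}$ side effects harmless. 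Your proposal contains neither the adjacency/existence lemma nor this mechanism, so as it stands it does not prove the proposition.
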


The proof of \cref{prop:LCLC} is given in Appendix \ref{app:proof:LCLC}. 

\begin{remark}\label{rm:clifford}
The induced Clifford operator is of the form $\mathcal P_D^GC$ for some subset $D$ of vertices. As $X_u Z_{N_G(u)} = (L_{u}^G)^2$, there exists a sequence at most $\lfloor 7n/2 \rfloor$ local complementations that transforms $G_1$ into $G_2$ and induces exactly the local Clifford operator $C$.  
\end{remark}

Whereas it has been originally conjectured that if two graphs states are LU-equivalent then there exists a sequence of local complementations transforming one in another \cite{5pb}, a couple of counter examples of pairs of LC$_2$- but not LC$_1$-equivalent graph states have been pointed out \cite{Ji07, Tsimakuridze17}. To further understand  the action of local unitaries on graph states, we thus introduce in the next section a generalisation of local complementation to graphically capture  the equivalence of graph states that are LU- but not LC$_1$-equivalent.

\section{Generalising local complementation}
\label{sec:r_lc}

\subsection{Local complementation over independent sets}
As a first step towards a generalisation of  local complementation, we introduce a natural shortcut $G  \star S:=G\star a_1\ldots \star a_k$ to denote the graph obtained after a series of local complementations when $S=\{a_1, \ldots, a_k\}$ is an independent set in $G$. 
The independence condition  is important as local complementations do not commute in general when applied on adjacent vertices.

Notice that the action of a local complementation over $S$ can be described directly as toggling edges that have an odd number of common neighbours in $S$:  
\begin{equation}\label{eq:LCIS}u\sim_{G\star S} v ~\Leftrightarrow~\left(u\sim_{G} v ~~\oplus~~ |S\cap N_G(u)\cap N_G(v)| = 1\bmod 2\right)\end{equation}

\subsection{Towards a 2-local complementation}

We introduce 2-local complementations as a refinement of \emph{idempotent} local complementations,~i.e.~when $G\star S=G$. According to \cref{eq:LCIS}, an idempotent local complementation occurs when each pair $(u,v)$ of vertices has an even number of common neighbours in $S$, one can then consider a new graph transformation consisting in toggling an edge $(u,v)$ when its number of common neighbours in $S$ is equal to 2 modulo 4. However, such an action may not be implementable using local operations in the graph state formalism (see \cref{fig:k4}). To guarantee an implementation by means of local unitary transformations on the corresponding graph states, we add the condition, called $2$-incidence, that any set of 2 or 3 (distinct) vertices has an even number of common neighbours in $S$.

\begin{figure}[t]
\centering

\scalebox{0.9}{
 \begin{tikzpicture}[scale = 0.4]        
        \begin{scope}[every node/.style={circle,minimum size=15pt,thick,draw,fill=lipicsYellow, inner sep = 0pt}]
            \node (e) at (-4,0) {$e$};
            \node (f) at (-1,0) {$f$};
            \node (g) at (2,0) {$g$};
            \node (h) at (5,0) {$h$};
            \node (a) at (-4,5) {$a$};
            \node (b) at (-1,5) {$b$};
            \node (c) at (2,5) {$c$};
            \node (d) at (5,5) {$d$};
         \end{scope}
        \begin{scope}[every node/.style={},
                        every edge/.style={draw=darkgray,very thick}]              
            \path [-] (a) edge node {} (f);
            \path [-] (a) edge node {} (g);
            \path [-] (a) edge node {} (h);
            \path [-] (b) edge node {} (e);
            \path [-] (b) edge node {} (g);
            \path [-] (b) edge node {} (h);
            \path [-] (c) edge node {} (e);
            \path [-] (c) edge node {} (f);
            \path [-] (c) edge node {} (h);
            \path [-] (d) edge node {} (e);
            \path [-] (d) edge node {} (f);
            \path [-] (d) edge node {} (g);

        \end{scope}
\end{tikzpicture}\qquad\qquad\qquad\qquad
\raisebox{-0.7cm}{
    \begin{tikzpicture}[scale = 0.4]        
        \begin{scope}[every node/.style={circle,minimum size=15pt,thick,draw,fill=lipicsYellow, inner sep = 0pt}]
            \node (e) at (-4,0) {$e$};
            \node (f) at (-1,0) {$f$};
            \node (g) at (2,0) {$g$};
            \node (h) at (5,0) {$h$};
            \node (a) at (-4,5) {$a$};
            \node (b) at (-1,5) {$b$};
            \node (c) at (2,5) {$c$};
            \node (d) at (5,5) {$d$};
         \end{scope}
        \begin{scope}[every node/.style={},
                        every edge/.style={draw=darkgray,very thick}]              
            \path [-] (a) edge node {} (f);
            \path [-] (a) edge node {} (g);
            \path [-] (a) edge node {} (h);
            \path [-] (b) edge node {} (e);
            \path [-] (b) edge node {} (g);
            \path [-] (b) edge node {} (h);
            \path [-] (c) edge node {} (e);
            \path [-] (c) edge node {} (f);
            \path [-] (c) edge node {} (h);
            \path [-] (d) edge node {} (e);
            \path [-] (d) edge node {} (f);
            \path [-] (d) edge node {} (g);

            \path [-] (e) edge node {} (f);
            \path [-] (e) edge [bend right]  node {} (g);
            \path [-] (e) edge [bend right=40]  node {} (h);
            \path [-] (f) edge  node {} (g);
            \path [-] (f) edge[bend right]  node {} (h);
            \path [-] (g) edge node {} (h);
        \end{scope}
\end{tikzpicture}}
}
\caption{\label{fig:k4}Example of non LU-equivalent graph states. 
With the notations of \cref{sec:hierarchy}, the graph on the left is a $C_{4,3}$ graph, and the one on the right is a $C'_{4,3}$ graph. Notice that applying a local complementation on the upper part of the bipartition ($S=\{a,b,c,d\}$) leaves the graphs invariant as each pair of vertices on the other part (e.g. $(e,f)$), has two common neighbours in $S$. However, a 2-local complementation over $S$ is not valid in both graphs as $S$ is not $2$-incident. See \cref{subsec:mls} for a proof that these graph states are not LU-equivalent.} 
\end{figure}
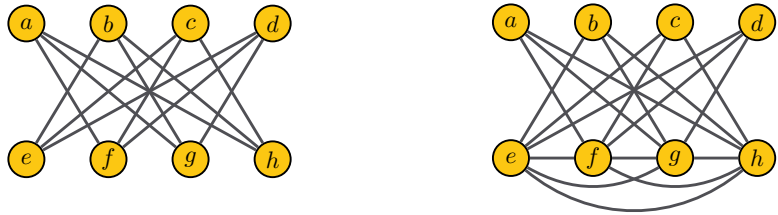

For instance in the following example, a 2-local complementation over $\{a,b\}$ performs the following transformation:

\[
\begin{tikzpicture}[scale = 0.3]

\begin{scope}[every node/.style={circle,minimum size=15pt,thick,draw,fill=lipicsYellow, inner sep = 0pt}]
    \node (a) at (-2.5,4) {$a$};
    \node (b) at (2.5,4) {$b$};
    \node (c) at (-5,0) {$c$};
    \node (d) at (0,0) {$d$};
    \node (e) at (5,0) {$e$};
\end{scope}
\begin{scope}[every node/.style={},
                every edge/.style={draw=darkgray,very thick}]                   
    \path [-] (a) edge node {} (c);
    \path [-] (a) edge node {} (d);
    \path [-] (a) edge node {} (e);
    \path [-] (b) edge node {} (c);
    \path [-] (b) edge node {} (d);
    \path [-] (b) edge node {} (e);
    \path [-] (d) edge node {} (e);
\end{scope}

\begin{scope}[shift={(29,0)},every node/.style={circle,minimum size=15pt,thick,draw,fill=lipicsYellow, inner sep = 0pt}]
    \node (a) at (-2.5,4) {$a$};
    \node (b) at (2.5,4) {$b$};
    \node (c) at (-5,0) {$c$};
    \node (d) at (0,0) {$d$};
    \node (e) at (5,0) {$e$};
\end{scope}
\begin{scope}[every node/.style={},
                every edge/.style={draw=darkgray,very thick}]                   
    \path [-] (a) edge node {} (c);
    \path [-] (a) edge node {} (d);
    \path [-] (a) edge node {} (e);
    \path [-] (b) edge node {} (c);
    \path [-] (b) edge node {} (d);
    \path [-] (b) edge node {} (e);
    \path [-] (d) edge node {} (c);
     \path [-] (c) edge [bend right]node {} (e);
\end{scope}

\draw[-Stealth, line width=2pt] (11.5,3) -- (17,3);
\node (t) at (14.25,4.5) {\small 2-local complementation};
\node (t2) at (14.25,1.5) {\small  over $\{a,b\}$};   

\end{tikzpicture}\]
In the LHS graph $G$, $S=\{a,b\}$ is an independent set. Moreover, a $1$-local complementation over $S$ is idempotent: $G\star S = G$, as $a$ and $b$ are twins. $S$ is also $2$-incident as any pairs and triplets of vertices have an even number of neighbours in $S$. Thus, a $2$-local complementation over $S$ is valid and consists in toggling the edges $(c,d)$, $(c,e)$, and $(d,e)$ as each of them has a number of common neighbours in $S$ equal to $2 \bmod 4$. 

We also consider the case where $S$ is actually a multiset, like in \cref{fig:generalized_lc}  where a 2-local complementation over 
the multiset $\{a,a,b,c\}$ is performed. When $S$ is a multiset, the number of common neighbours in $S$ should be counted with their multiplicity. 

Notice that when a 2-local complementation is invariant one can similarly refine the 2-local complementation into a 3-local complementation, leading to the general definition of generalised local complementation provided in the next section.

\subsection{Defining generalised local complementation}

We introduce a generalisation of local complementation, that we call $r$-local complementation for any positive integer $r$. This generalised local complementation denoted $G\star^r S$ is parametrised by a (multi)set $S$, that has to be independent and also  $r$-\emph{incident}, which is the following counting condition on the number of common neighbours in $S$ of any set that does not intersect $\supp(S)$,  the support of $S$: 

\begin{definition}[$r$-Incidence]
Given a graph $G$, a multiset $S$ of vertices is  $r$-incident, if for any $k\in [0,r)$, and any $K\subseteq V\setminus \supp(S)$ of size $k+2$, $S\bullet \Lambda_G^K$ is a multiple of $2^{r-k-\delta(k)}$,
where $\delta$ is the Kronecker delta\footnote{$\delta(x)\in \{0,1\}$ and $\delta(x)=1 \Leftrightarrow x=0$.
} and $S\bullet \Lambda_G^K$ is the number of vertices of $S$, counted with their multiplicity, that are neighbours to all vertices of $K$.\footnote{$.\bullet.$ is the scalar product: $A\bullet B = \sum_{u\in V}A(u).B(u)$, so $S\bullet \Lambda_G^K = \sum_{u \in \Lambda_G^K}S(u)$.}   
\end{definition}

\begin{definition}[$r$-Local Complementation]
Given a graph $G$ and an $r$-incident independent multiset $S$, let $G\star^rS$ be the graph defined as
\[u\sim_{G\star^r S} v ~\Leftrightarrow~\left(u\sim_{G} v ~~\oplus~~ S \bullet\Lambda_G^{u,v} = 2^{r-1}\bmod 2^{r}\right)\]
\end{definition}

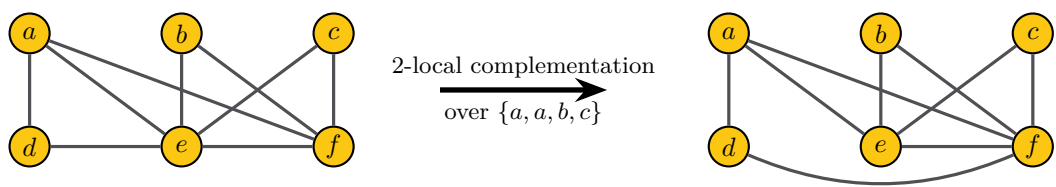
\begin{figure}[t]
\centering
\begin{tikzpicture}[xscale = 0.4,yscale=0.3]

\begin{scope}[every node/.style={circle,minimum size=15pt,thick,draw,fill=lipicsYellow, inner sep = 0pt}]
    \node (U1) at (-5,5) {$a$};
    \node (U2) at (0,5) {$b$};
    \node (U3) at (5,5) {$c$};
    \node (U4) at (-5,0) {$d$};
    \node (U5) at (0,0) {$e$};
    \node (U6) at (5,0) {$f$};
\end{scope}
\begin{scope}[every node/.style={},
                every edge/.style={draw=darkgray,very thick}]                   
    \path [-] (U1) edge node {} (U4);
    \path [-] (U1) edge node {} (U5);
    \path [-] (U1) edge node {} (U6);
    \path [-] (U2) edge node {} (U5);
    \path [-] (U2) edge node {} (U6);
    \path [-] (U3) edge node {} (U5);
    \path [-] (U3) edge node {} (U6);

    \path [-] (U4) edge node {} (U5);
    \path [-] (U5) edge node {} (U6);
\end{scope}

\begin{scope}[shift={(23,0)},every node/.style={circle,minimum size=15pt,thick,draw,fill=lipicsYellow, inner sep = 0pt}]
    \node (U1) at (-5,5) {$a$};
    \node (U2) at (0,5) {$b$};
    \node (U3) at (5,5) {$c$};
    \node (U4) at (-5,0) {$d$};
    \node (U5) at (0,0) {$e$};
    \node (U6) at (5,0) {$f$};
\end{scope}
\begin{scope}[every node/.style={},
                every edge/.style={draw=darkgray,very thick}]                   
    \path [-] (U1) edge node {} (U4);
    \path [-] (U1) edge node {} (U5);
    \path [-] (U1) edge node {} (U6);
    \path [-] (U2) edge node {} (U5);
    \path [-] (U2) edge node {} (U6);
    \path [-] (U3) edge node {} (U5);
    \path [-] (U3) edge node {} (U6);

    \path [-] (U4) edge[bend right] node {} (U6);
    \path [-] (U5) edge node {} (U6);

\end{scope}

\draw[-Stealth, line width=2pt] (8.5,2.5) -- (14,2.5);
\node (t) at (11.25,3.5) {\small 2-local complementation};
\node (t2) at (11.25,1.5) {\small  over $\{a,a,b,c\}$};

\end{tikzpicture}
\caption{Illustration on a 2-local complementation on the multiset $S = \{a,a,b,c\}$. $S$ is 2-incident: indeed $S\bullet \Lambda_G^{\{d,e,f\}} = 2$, which is a multiple of $2^{2-1-0} = 2$. Similarly, $S\bullet \Lambda_G^{\{d,e\}} = S\bullet \Lambda_G^{\{d,f\}} = 2$ and $S\bullet \Lambda_G^{\{e,f\}} = 4$. Edges $de$ and $df$ are toggled as $S\bullet \Lambda_G^{\{d,e\}} = S\bullet \Lambda_G^{\{d,f\}} = 2\bmod 4$, but not edge $ef$ as $S\bullet \Lambda_G^{\{e,f\}} = 0\bmod 4$.}
\label{fig:generalized_lc}    
\end{figure}

A detailed example of a $2$-local complementation is given in \cref{fig:generalized_lc}. 

We will also use $r$-local complementation parametrised with a set (rather than a multiset), in this case $S\bullet \Lambda_G^K = |S\cap \Lambda_G^K|$, and $ S \bullet\Lambda_G^{u,v} = |S\cap N_G(u)\cap N_G(v)|$. Note that we recover the vanilla local complementation when $r=1$.

We say that $G\star^r S$ is valid when $S$ is an $r$-incident independent multiset in $G$. Moreover, generalising the notion of local equivalence on graphs, we say that two graphs $G$, $G'$ are $r$-locally equivalent when there exists a sequence of $r$-local complementations transforming $G$ into $G'$. In the case of the usual local complementation, we simply say that the graphs are locally equivalent. 

Generalised local complementations satisfy a few basic properties (proven in \cref{app:prop_r_lc}), in particular, it is easy to double-check that they are self inverse: $(G\star^r S)\star^r S =G$. 
Moreover, $r$-local complementations can be related to $(r+1)$ and $(r-1)$-local complementations: 

\begin{restatable}{proposition}{rlcmonotonicity}
\label{prop:monotonicity}
If $G\star^r S$ is valid then:
\begin{itemize}
\item  $G\star^{r+1} (2S)$ is valid and induces the same transformation: 
$G\star^{r+1} (2S) = G\star^rS$, where $2S$ is the multiset obtained from $S$ by doubling the multiplicity of each vertex,
\item $G\star^{r-1} S$ is valid (when $r>1$) and  $G\star^{r-1} S = G$. 
\end{itemize}
\end{restatable}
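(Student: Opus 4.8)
The plan is to prove both bullets directly from the definitions of $r$-incidence and $r$-local complementation, since each is essentially a statement about counting common neighbours in a (multi)set modulo powers of $2$.

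\textbf{First bullet.} First I would check that $2S$ is $(r{+}1)$-incident whenever $S$ is $r$-incident. The key observation is that $\supp(2S)=\supp(S)$, so the quantifier ranges over the same sets $K\subseteq V\setminus\supp(S)$; and $(2S)\bullet\Lambda_G^K = 2\,(S\bullet\Lambda_G^K)$ since doubling multiplicities doubles every scalar product. So for $K$ of size $k+2$ with $k\in[0,r{+}1)$, I need $2\,(S\bullet\Lambda_G^K)$ to be a multiple of $2^{(r+1)-k-\delta(k)} = 2^{1+(r-k-\delta(k))}$, i.e.\ $S\bullet\Lambda_G^K$ to be a multiple of $2^{r-k-\delta(k)}$. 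For $k\in[0,r)$ this is exactly the $r$-incidence hypothesis on $S$. The only extra case is $k=r$ (hence $k\ge 1$, so $\delta(k)=0$): here I need $S\bullet\Lambda_G^K$ to be a multiple of $2^{r-r}=2^0=1$, which is vacuous. Hence $2S$ is $(r{+}1)$-incident, so $G\star^{r+1}(2S)$ is valid. For the equality $G\star^{r+1}(2S)=G\star^r S$, I compare the two edge-toggling conditions: the $(r{+}1)$-rule toggles $(u,v)$ iff $(2S)\bullet\Lambda_G^{u,v}\equiv 2^{r}\pmod{2^{r+1}}$, i.e.\ $2\,(S\bullet\Lambda_G^{u,v})\equiv 2^r\pmod{2^{r+1}}$, which (dividing by $2$) is equivalent to $S\bullet\Lambda_G^{u,v}\equiv 2^{r-1}\pmod{2^r}$, precisely the $r$-rule. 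So the two transformations agree edge by edge.

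\textbf{Second bullet ($r>1$).} For validity I need $S$ to be $(r{-}1)$-incident, i.e.\ for $k\in[0,r{-}1)$ and $K$ of size $k+2$, $S\bullet\Lambda_G^K$ is a multiple of $2^{(r-1)-k-\delta(k)}$. But $r$-incidence of $S$ gives the stronger divisibility by $2^{r-k-\delta(k)} = 2\cdot 2^{(r-1)-k-\delta(k)}$, which certainly implies the weaker one; so $S$ is $(r{-}1)$-incident and $G\star^{r-1}S$ is valid. To see $G\star^{r-1}S = G$, note that the $(r{-}1)$-rule toggles $(u,v)$ iff $S\bullet\Lambda_G^{u,v}\equiv 2^{r-2}\pmod{2^{r-1}}$. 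But taking $k=0$, $\delta(0)=1$, in the $r$-incidence of $S$ (using $K=\{u,v\}$ when $u,v\notin\supp(S)$) gives that $S\bullet\Lambda_G^{u,v}$ is a multiple of $2^{r-0-1}=2^{r-1}$, so it is in particular $\equiv 0\pmod{2^{r-1}}$, never $\equiv 2^{r-2}$ (as $r>1$ makes $2^{r-2}$ nonzero mod $2^{r-1}$). Hence no edge is toggled and $G\star^{r-1}S=G$.

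\textbf{Main obstacle.} The only genuinely delicate point is the boundary bookkeeping: ensuring the $k=0$ Kronecker-delta case and the extra $k=r$ case in the first bullet are handled correctly, and confirming that when $u$ or $v$ lies in $\supp(S)$ independence of $S$ already forces $\Lambda_G^{u,v}$ (or at least its intersection with $S$) to behave trivially so the edge condition is unaffected — but since the $r$-local complementation only ever toggles edges whose endpoints avoid $\supp(S)$ by independence, this is straightforward. I would also remark that self-inverseness $(G\star^rS)\star^rS=G$, used implicitly, follows from the same modular argument, but it is among the basic properties already deferred to \cref{app:prop_r_lc}.
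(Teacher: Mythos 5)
Your proof is correct and follows essentially the same route as the paper's: a direct verification that doubling multiplicities converts the $r$-incidence divisibilities into the $(r{+}1)$-incidence ones and that the edge-toggling congruences match after dividing by $2$, and that $r$-incidence at $k=0$ forces $S\bullet\Lambda_G^{u,v}\equiv 0 \pmod{2^{r-1}}$ so no edge is toggled at level $r{-}1$. If anything, you are slightly more explicit than the paper on the boundary bookkeeping (the extra $k=r$ case for $(r{+}1)$-incidence and the pairs meeting $\supp(S)$, where independence makes the scalar product vanish), which the paper's appendix passes over silently.
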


It implies in particular that  two $r$-locally equivalent graphs are $(r+1)$-locally equivalent. An $r$-local complementation over $S$ preserves the neighbourhood of the vertices in $\supp(S)$:

\begin{restatable}{proposition}{rlcsameneighborhood}
    \label{loccompneighS}
If $G\star^r S$ is valid, then for any $u\in \supp(S)$, $N_{G\star^r S}(u)=N_G(u)$. 
\end{restatable}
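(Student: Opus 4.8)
The plan is to compute the neighbourhood of a fixed vertex $u\in\supp(S)$ directly from the definition of $\star^r$, which only toggles edges $(u,v)$ according to the value $S\bullet\Lambda_G^{\{u,v\}}\bmod 2^r$. So the whole statement reduces to showing that for every $v\in V\setminus\{u\}$ one has $S\bullet\Lambda_G^{\{u,v\}}\not\equiv 2^{r-1}\pmod{2^r}$, i.e.\ this quantity is never exactly $2^{r-1}$ modulo $2^r$; then no edge incident to $u$ is toggled and $N_{G\star^r S}(u)=N_G(u)$. First I would split into cases according to whether $v\in\supp(S)$ or not.

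If $v\notin\supp(S)$, then $u\in\supp(S)\setminus\{v\}$, but $u$ itself lies in $\Lambda_G^{\{u,v\}}$ only if $u\sim_G u$, which is impossible since the graphs are simple; more usefully, since $S$ is \emph{independent} and $u\in\supp(S)$, no vertex of $\supp(S)$ is adjacent to $u$, hence $\Lambda_G^{\{u,v\}}\cap\supp(S)=N_G(u)\cap N_G(v)\cap\supp(S)=\emptyset$ (any common neighbour in $\supp(S)$ would be a neighbour of $u$ lying in $\supp(S)$). Therefore $S\bullet\Lambda_G^{\{u,v\}}=\sum_{w\in\Lambda_G^{\{u,v\}}}S(w)=0$, which is certainly not $2^{r-1}\bmod 2^r$ (as $r\geq 1$ forces $2^{r-1}\not\equiv 0$). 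If instead $v\in\supp(S)$, then again by independence of $S$ we cannot have $u\sim_G v$ (both lie in $\supp(S)$), so $\Lambda_G^{\{u,v\}}=\emptyset$ is not the relevant point — rather, $(u,v)\notin E$ means the edge is absent before the operation. But the toggle would \emph{add} the edge $(u,v)$ iff $S\bullet\Lambda_G^{\{u,v\}}=2^{r-1}\bmod 2^r$; and here $\Lambda_G^{\{u,v\}}$ consists of common neighbours of $u$ and $v$, none of which lie in $\supp(S)$ (a common neighbour of $u$ is a neighbour of $u$, hence not in $\supp(S)$ by independence), so once more $S\bullet\Lambda_G^{\{u,v\}}=0\neq 2^{r-1}\bmod 2^r$. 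In both cases the edge status between $u$ and $v$ is unchanged.

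To package this cleanly I would state the single key lemma: for $u\in\supp(S)$ and any $v\in V\setminus\{u\}$, independence of $S$ gives $\Lambda_G^{\{u,v\}}\cap\supp(S)=\emptyset$, hence $S\bullet\Lambda_G^{\{u,v\}}=0$; since $2^{r-1}\not\equiv 0\pmod{2^r}$ for $r\geq 1$, the defining toggle condition fails, so $u\sim_{G\star^r S}v\iff u\sim_G v$. Quantifying over all $v$ yields $N_{G\star^r S}(u)=N_G(u)$.

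I do not expect a serious obstacle here; the only thing to be careful about is the bookkeeping of the multiset/support conventions (that $S\bullet\Lambda_G^K=\sum_{w\in\Lambda_G^K}S(w)$ and that $w\notin\supp(S)$ means $S(w)=0$), and the trivial but necessary observation that $2^{r-1}$ is not a multiple of $2^r$ for $r\ge 1$, so that ``$0$'' really does fall outside the toggling residue class. The use of $r$-incidence is not even needed for this particular statement — independence of $S$ alone suffices — which is worth noting, though I would phrase the proof to match the hypotheses actually assumed (validity of $G\star^r S$, which includes independence).
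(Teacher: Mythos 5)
Your proof is correct and is essentially the paper's own argument: independence of $S$ forces $\Lambda_G^{\{u,v\}}\cap\supp(S)=\emptyset$ for $u\in\supp(S)$, hence $S\bullet\Lambda_G^{\{u,v\}}=0\neq 2^{r-1}\bmod 2^r$ and no edge at $u$ is toggled. The case split on $v\in\supp(S)$ is unnecessary (both cases reduce to the same observation, as your closing ``key lemma'' already notes), and you are right that $r$-incidence plays no role here.
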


Notice that $r$-local complementations can be composed:

\begin{restatable}{proposition}{rlcdisjointunion}
If $G\star^r S_1$ and $G\star^r S_2$ are valid and the disjoint union\footnote{For any vertex $u$, $(S_1\sqcup S_2)(u) = S_1(u)+S_2(u)$.}  $S_1\sqcup S_2$ is independent in $G$, then $G\star^r (S_1\sqcup S_2) = (G\star^r S_1)\star^r S_2$. 
\end{restatable}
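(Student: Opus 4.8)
The plan is to prove both that $(G\star^r S_1)\star^r S_2$ is well defined and that it toggles exactly the same edges as $G\star^r(S_1\sqcup S_2)$, by comparing the two sides edge by edge. Write $S\defeq S_1\sqcup S_2$ (as multiplicity functions, $S=S_1+S_2$). First I would dispose of the validity of $G\star^r S$: the multiset $S$ is independent by hypothesis, and since $\supp(S)=\supp(S_1)\cup\supp(S_2)$ contains each $\supp(S_i)$, every $K\se V\sm\supp(S)$ also lies in $V\sm\supp(S_i)$; as $S\bullet\Lambda_G^K = S_1\bullet\Lambda_G^K + S_2\bullet\Lambda_G^K$, the two divisibility conditions witnessing the $r$-incidence of $S_1$ and of $S_2$ at $K$ add up to the one required of $S$, so $S$ is $r$-incident.

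The comparison rests on two elementary facts I would isolate first. \emph{(i)} For every $w\in\supp(S_2)$ and every $u\in V$, $S_1\bullet\Lambda_G^{w,u}=0$: otherwise some $c\in N_G(w)\cap N_G(u)$ lies in $\supp(S_1)$, and then $c$ and $w$ are two distinct connected vertices of $\supp(S)$, contradicting that $S$ is independent; symmetrically $S_2\bullet\Lambda_G^{w,u}=0$ whenever $w\in\supp(S_1)$. \emph{(ii)} For any distinct $u,v\in V$, $S_1\bullet\Lambda_G^{u,v}$ is a multiple of $2^{r-1}$, hence equal to $0$ or $2^{r-1}$ modulo $2^r$: if it is nonzero then $u,v\notin\supp(S_1)$ (by independence of $S_1$, as in~(i)), and the $r$-incidence of $S_1$ at $K=\{u,v\}$ applies (the case $k=0$, where $2^{r-k-\delta(k)}=2^{r-1}$); the same holds for $S_2$. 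Now let $G_1\defeq G\star^r S_1$. By~(i), for $w\in\supp(S_2)$ the toggle bit of any edge $wx$ under $\star^r S_1$ equals $[\,S_1\bullet\Lambda_G^{w,x}=2^{r-1}\bmod 2^r\,]=0$, so $w\sim_{G_1}x \iff w\sim_G x$; applying this to $w,w'\in\supp(S_2)$ keeps $S_2$ independent in $G_1$, and it also yields $\Lambda_{G_1}^K\cap\supp(S_2)=\Lambda_G^K\cap\supp(S_2)$, whence $S_2\bullet\Lambda_{G_1}^K=S_2\bullet\Lambda_G^K$ for every $K$; taking $K$ disjoint from $\supp(S_2)$ then shows $S_2$ is $r$-incident in $G_1$, so $(G\star^r S_1)\star^r S_2$ is valid.

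Finally, for the edges: for distinct $u,v$, unfolding the two $r$-local complementations and using $S_2\bullet\Lambda_{G_1}^{u,v}=S_2\bullet\Lambda_G^{u,v}$ gives
\[
u\sim_{(G\star^r S_1)\star^r S_2}v \iff u\sim_G v \;\oplus\; \big[\,a=2^{r-1}\bmod 2^r\,\big]\;\oplus\;\big[\,b=2^{r-1}\bmod 2^r\,\big],
\]
where $a\defeq S_1\bullet\Lambda_G^{u,v}$ and $b\defeq S_2\bullet\Lambda_G^{u,v}$. By~(ii), $a$ and $b$ each equal $0$ or $2^{r-1}$ modulo $2^r$, and for such values a direct four-case check yields $[\,a=2^{r-1}\bmod 2^r\,]\oplus[\,b=2^{r-1}\bmod 2^r\,] = [\,a+b=2^{r-1}\bmod 2^r\,]$; since $a+b=S\bullet\Lambda_G^{u,v}$, the right-hand side above is exactly the toggle condition defining $G\star^r S$, and the two graphs coincide. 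The hard part (and the one step that genuinely needs the joint independence of $S_1\sqcup S_2$ rather than just that of $S_1$ and $S_2$ separately) is fact~(i) together with its consequence that $\star^r S_1$ leaves every $S_2$-weighted common-neighbourhood count unchanged; granted that, the remainder is bookkeeping plus the modular identity.
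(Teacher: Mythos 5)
Your proof is correct and follows essentially the same route as the paper's: additivity of the incidence condition for $S_1\sqcup S_2$, invariance of $S_2\bullet\Lambda^K$ under $\star^r S_1$, and the edge-by-edge comparison via the XOR-to-sum identity modulo $2^r$. Your facts (i) and (ii) simply spell out details (why edges incident to $\supp(S_2)$ are untouched, and why each $S_i\bullet\Lambda_G^{u,v}$ is $0$ or $2^{r-1}$ modulo $2^r$) that the paper's proof uses implicitly.
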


Finally, it is easy to see that the multiplicity in $S$ can be upperbounded by $2^r$: if $G\star^r S$ is valid, then $G\star^r S= G\star^r S'$, where, for any vertex $u$, $S'(u)=S(u)\bmod 2^r$. 

\subsection{Implementation of generalised local complementation}

It is well known that local complementations can be implemented on graph states by means of local Clifford operations  \cite{Hein04,Hein06}, we extend this result to $r$-local complementations that can be implemented using $\frac\pi{2^r}$ X- and Z-rotations: 

\begin{restatable}{proposition}{rlcwithclifford}\label{prop:clif}
Given a graph $G=(V,E)$ and an $r$-incident independent multiset $S$ of vertices, \begin{equation*}\label{eq:LCr}\ket{G\star^r S} = \bigotimes_{u\in V}X\left(\frac {S(u)\pi}{2^r}\right)\bigotimes_{v\in V}Z\left(-\frac {\pi}{2^r}\sum_{u \in N_G(v)}S(u)\right)\ket{G}\end{equation*}
\end{restatable}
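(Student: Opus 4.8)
The plan is to verify the claimed identity by computing the action of the stated local unitary on the amplitudes of $\ket G$ directly in the computational basis. Write $U = \bigotimes_{u\in V}X\!\left(\frac{S(u)\pi}{2^r}\right)\bigotimes_{v\in V}Z\!\left(-\frac{\pi}{2^r}\sum_{u\in N_G(v)}S(u)\right)$, and note that the $X$-rotations and $Z$-rotations act on disjoint ``coordinates'' in the sense that they can be composed, but $X(\theta)$ is not diagonal, so the cleanest route is to conjugate. Since $S$ is independent, $\supp(S)$ induces no edges, which will be the structural fact that makes the $X$-rotation block manageable. First I would recall from \cref{prop:stabilizer} that $\ket G$ is fixed by $\mathcal P^G_D = (-1)^{|G[D]|}X_D Z_{\odd D}$ for every $D\se V$; the standard trick is that $\ket{G\star^r S}$ should likewise be the unique (up to phase) common fixed point of a corresponding family of Pauli operators, and to identify $U\ket G$ we check that $U\ket G$ is fixed by the stabilizers of $G\star^r S$.

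The key computation is to evaluate $U \mathcal P^G_D U^\dagger$ for a well-chosen generating family of $D$'s (say $D=\{u\}$ ranging over vertices) and show it equals, up to the correct sign, $\mathcal P^{G\star^r S}_{\{u\}} = (-1)^{0} X_u Z_{N_{G\star^r S}(u)}$ — here using that a single vertex induces no edges. Since $X(\alpha)$ and $Z(\beta)$ rotations conjugate Paulis into linear combinations (e.g. $X(\alpha) Z X(\alpha)^\dagger = \cos\alpha\, Z + \sin\alpha\, Y$, and similarly for the $Z$-rotation acting on $X$), the conjugated operator $U X_u Z_{N_G(u)} U^\dagger$ is generically a sum of many Pauli terms, so a naive term-by-term expansion is hopeless. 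The right move is to expand $U\ket G$ in the computational basis: using $\ket G = 2^{-n/2}\sum_x (-1)^{|G[x]|}\ket x$, the diagonal $Z$-rotations contribute a global phase times $\prod_v e^{-i\pi x_v \sum_{u\in N_G(v)}S(u)/2^r}$, and the $X$-rotations, being $H Z(\cdot) H$, turn into a Walsh--Hadamard-type sum. Collecting the exponents, the phase of the coefficient of $\ket y$ in $U\ket G$ becomes a sum over $x$ of $(-1)^{|G[x]|}$ weighted by $\omega^{\,q(x,y)}$ with $\omega = e^{i\pi/2^r}$ and $q$ a quadratic-ish form in $x$; one then performs the sum over each $x_u$ for $u\in\supp(S)$ (possible because $S$ is independent, so those variables appear only linearly in the ``edge'' part coming from $|G[x]|$). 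The $r$-incidence hypothesis is exactly what makes the resulting Gauss-type sums collapse: the conditions that $S\bullet\Lambda_G^K$ is divisible by $2^{r-|K|+2-\delta}$ for $|K|\le r+1$ are precisely the vanishing of the unwanted higher-order phase contributions, leaving only the quadratic term $(-1)^{[S\bullet\Lambda_G^{u,v}=2^{r-1}\bmod 2^r]}$ on each pair $(u,v)$ — i.e. exactly $(-1)^{|(G\star^r S)[y]|}$ — together with a global phase that one checks is independent of $y$.

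The main obstacle will be the bookkeeping in that multinomial phase sum: one must carefully track which subsets $K\se V\sm\supp(S)$ of which sizes arise when expanding $\big(\sum_{u\in S} x\text{-terms}\big)$-style products up to order $r+1$, match each to the corresponding $S\bullet\Lambda_G^K$, and invoke the precise divisibility in the definition of $r$-incidence (including the $\delta(k)$ correction at $k=0$, which accounts for the factor-of-two subtlety visible in \cref{fig:k4} and in the remark preceding the definition). It is worth organising this as an induction on $r$ using \cref{prop:monotonicity}: the case $r=1$ is the known statement of \cite{VandenNest04}, and the identity $G\star^{r+1}(2S)=G\star^r S$ lets one reduce, but this does not immediately give arbitrary $r$-incident $S$, so the honest proof still requires the direct sum-evaluation; the induction mainly serves as a sanity check and to handle the global-phase constant. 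A cleaner alternative I would try in parallel: factor $U$ as a product over $u\in\supp(S)$ of single-vertex pieces $X_u(\tfrac{S(u)\pi}{2^r})\prod_{v\in N_G(u)}Z_v(-\tfrac{S(u)\pi}{2^r})$, each of which would be a ``generalised $L^G_u$'' raising the angle; composing them and re-summing the edge-toggling contributions pairwise should reproduce the $2^{r-1}\bmod 2^r$ condition, with $r$-incidence ensuring the intermediate graphs stay well-defined (edges genuinely toggled $0$ or $1$ times, not fractionally), and commutativity of these pieces following from independence of $S$ via \cref{loccompneighS}. I expect the amplitude computation to be the load-bearing argument and the factorisation to be the illuminating one.
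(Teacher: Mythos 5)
Your main route is essentially the paper's proof in a slightly different dressing. The paper verifies the equality by pairing both sides against the product basis $\{Z_R\ket{+}_{\mathcal S}\otimes X_L\ket{0}_{\bar{\mathcal S}}\}$ with $\mathcal S=\supp(S)$, which turns all rotation factors into pure phases at once because $\bra{+}X(\theta)=\bra{+}$ and $\bra{0}Z(\theta)=\bra{0}$; your computational-basis amplitude sum reaches the same point after summing out the coordinates in $\supp(S)$, and this collapse is cleaner than you anticipate: by independence each such coordinate enters $|G[x]|$ only linearly, so the sum over $x_u$ yields either $1$ or $(-1)^{y_u}e^{iS(u)\pi/2^r}$ — no Gauss-sum machinery is needed, and $r$-incidence plays no role at this stage. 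In both versions the entire statement then reduces to the same congruence, $S\bullet Odd_G(L)=2^r\bigl(|G[L]|+|(G\star^r S)[L]|\bigr)+\sum_{v\in L}S\bullet N_G(v)\bmod 2^{r+1}$ for $L\subseteq V\setminus\supp(S)$, and this is where your ``subset bookkeeping'' lives: the paper makes it precise via the inclusion--exclusion identity $S\bullet Odd_G(L)=\sum_{R\subseteq L}(-2)^{|R|-1}S\bullet\Lambda_G^R$, after which $r$-incidence kills all terms with $|R|>2$ modulo $2^{r+1}$ and the $|R|=2$ terms produce exactly the toggling parity; to complete your argument you would need to state this identity (or an equivalent) explicitly, and also invoke \cref{loccompneighS} in the main computation, not only in the aside, to identify the $\supp(S)$-to-rest edges of $G\star^r S$ with those of $G$. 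Finally, your ``cleaner alternative'' factorisation into single-vertex pieces does not work as described: for $r>1$ an individual piece $X_u(\tfrac{S(u)\pi}{2^r})\prod_{v\in N_G(u)}Z_v(-\tfrac{S(u)\pi}{2^r})$ does not map graph states to graph states (the intermediate states are weighted hypergraph states, as the paper's remark after \cref{prop:clif} notes), so there are no well-defined intermediate graphs whose edges are ``toggled $0$ or $1$ times''; since you only offered this as illustration, it does not undermine the load-bearing amplitude argument, which is sound in outline and coincides with the paper's.
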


The proof of \cref{prop:clif} is given in Appendix \ref{sec:proof:clif}. 

\begin{remark}The graph state formalism has been extended in various ways, notably through the introduction of \emph{hypergraph states} \cite{rossi2013quantum} and even \emph{weighted hypergraph states} \cite{Tsimakuridze17}. Notice that the formalism of weighted hypergraph states provides a way to decompose a generalised local complementation over 
a multiset $S$ of size $k$ into $k$ elementary transformations on weighted hypergraph states. 
Indeed, as shown in \cref{prop:clif}, a generalised local complementation can be implemented by means of $k$ X-rotations (along with some Z-rotations), moreover  the action of each X-rotation on weighted hypergraph states has been described in \cite{Tsimakuridze17}. However, in such a decomposition, the intermediate weighted hypergraph states are not necessarily graph states.
\end{remark} 

In the next section, we prove the converse result: if a local unitary operation in LC$_r$ transforms a graph state $\ket G$ into a graph state $\ket {G'}$ then there exists a sequence of $r$-local complementations transforming $G$ into $G'$. More so, we show that $r$-local complementation actually captures completely the LU-equivalence of graph states. To prove these results, we make use of graphical tools, namely the so-called minimal local sets and a standard form for graphs.

\section {Graphical characterisation of the action of local unitaries}

\label{sec:standardform}

\subsection{Minimal local sets and types}

\label{subsec:mls}

To characterise the action of local unitaries on graph states, we rely on properties that are invariant under local unitaries. A local set \cite{Perdrix06,claudet2024covering} is one of them.

\begin{definition}Given $G=(V,E)$, a \emph{local set} $L$ is a non-empty subset of $V$ of the form $L = D \cup Odd_G(D)$ for some $D \se V$ called a \emph{generator}.
\end{definition}

Local sets of $G$ are precisely the supports of the Pauli operators $\mathcal P^G_D=(-1)^{|G[D]|}X_D Z_{Odd(D)}$, with $D$ a non-empty subset of qubits, that stabilises $\ket G$. Local sets are invariant under local complementation - hence their name: if $L$ is a local set in a graph $G$, so is in $G \star u$, but possibly with a distinct generator \cite{Perdrix06}. Local sets are the same for graphs that have the same cut-rank function \cite{claudet2024covering}, thus graphs corresponding to LU-equivalent graph states have the same local sets. 

A \emph{minimal local set} is a local set that is minimal by inclusion. Minimal local sets have either 1 or 3 generators, and in the latter case, the minimal local sets are of even size. This was proved originally in the stabilizer formalism in \cite{VandenNest05}, but an alternative graph-theoretic proof can be found in \cite{claudet2024covering}.

\begin{proposition}
    Given a minimal local set $L$, only two cases can occur:
    \begin{itemize}
        \item $L$ has exactly one generator,
        \item $L$ has exactly three (distinct) generators, of the form $D_0$, $D_1$, and $D_0 \Delta D_1$. This can only occur when $|L|$ is even.
    \end{itemize}
\label{prop:MLS2cases}
\end{proposition}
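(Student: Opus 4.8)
The plan is to work entirely in the language of generators. A minimal local set $L$ with generator $D$ is the support $L = D \cup \odd_G(D)$. First I would observe that if $D$ and $D'$ are two distinct generators of the \emph{same} minimal local set $L$, then $D \Delta D'$ is also a generator of a local set, and crucially $\supp(\mathcal P^G_{D\Delta D'}) = \supp(\mathcal P^G_D \mathcal P^G_{D'}) \se L$ (products of stabilizers supported on $L$ are supported on $L$); by minimality this support is either empty or equal to $L$. The empty case would force $\mathcal P^G_{D} = \pm\mathcal P^G_{D'}$, hence $D = D'$ (the $X$-part determines $D$), contradicting distinctness. So $D \Delta D'$ is again a generator of $L$. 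This shows the set of generators of $L$, together with the empty set, is closed under symmetric difference, i.e.\ forms an $\mathbb F_2$-vector space; the number of generators is therefore $2^k - 1$ for some $k \ge 0$ (and $k\ge 1$ since $L$ is a local set so has at least one generator).

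The heart of the argument is then to show $k \le 2$, i.e.\ one cannot have two generators $D_0, D_1$ whose span (minus $\emptyset$) would already give three generators $D_0, D_1, D_0\Delta D_1$, \emph{plus} a further independent one $D_2$. Suppose for contradiction $D_0, D_1, D_2$ are linearly independent generators of $L$. The idea is to extract, from the seven nonzero combinations, a strictly smaller nonempty local set, contradicting minimality of $L$. Concretely, pick any vertex $v \in L$ and consider the "type" of $v$ with respect to the three generators: the triple recording, for each $D_i$, whether $v \in D_i$ (the $X$-part) and whether $v \in \odd_G(D_i)$ (the $Z$-part) — equivalently which of the seven Paulis $X,Y,Z$ (or combinations thereof across the three stabilizers) acts on $v$. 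Because all seven stabilizers $\mathcal P^G_{D}$ for $D$ in the span are supported inside $L$, one can try to find a nontrivial combination that acts trivially on some chosen vertex $v\in L$, shrinking the support. The parity/counting constraints here are exactly those that will later underpin the "type" analysis of Section~\ref{subsec:mls}; I expect the clean way is: among three independent generators the induced map $L \to (\mathbb F_2^2)$ sending $v$ to its $(X,Z)$-coordinates for a well-chosen $2$-dimensional subspace cannot be injective into the nonzero locus on all of $L$ unless $|L|$ is constrained, and a dimension count forces either $k\le 2$ or a proper sub-local-set.

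For the parity statement — when $k = 2$ (three generators $D_0, D_1, D_0\Delta D_1$), $|L|$ is even — I would argue as follows. The three generators pair up the vertices of $L$ by their type; in the two-generator case each $v \in L$ is assigned one of a small number of types (corresponding to nonzero elements describing how $X_v, Z_v$ appear across $\mathcal P^G_{D_0}, \mathcal P^G_{D_1}$), and a commutation/consistency computation (using that the $\mathcal P^G_{D_i}$ pairwise commute, so their symplectic inner products vanish) shows that the number of vertices of each relevant type must have a fixed parity; summing gives $|L|$ even. Alternatively, and perhaps more cleanly, one uses $|L| = |D \cup \odd_G(D)|$ together with $|D_0 \Delta D_1| \equiv |D_0| + |D_1| \pmod 2$ and the handshake-type identity relating $|\odd_G(D)|$ to $|G[D]|$, combined across the three generators.

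**Main obstacle.** The routine part is the vector-space structure of the generator set; the genuine difficulty is ruling out $k \ge 3$ — i.e.\ showing that four or more generators always produce a strictly smaller nonempty local set. This requires the careful bookkeeping of vertex "types" relative to several generators and the observation that some nontrivial stabilizer in the span must vanish on a chosen vertex of $L$; getting the dimension count exactly right (and handling the edge cases where $|L|$ is small) is where I would expect to spend most of the effort, and it is presumably why the authors cite the dedicated treatment in \cite{VandenNest05,claudet2024covering} rather than inlining it.
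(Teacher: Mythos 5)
The paper does not actually prove \cref{prop:MLS2cases} — it defers to \cite{VandenNest05} (stabilizer formalism) and \cite{claudet2024covering} (graph-theoretic) — so there is no in-paper argument to compare against; your route is essentially the standard stabilizer-group one, and it is sound. Your first step is correct as written: for distinct generators $D,D'$ of $L$, $\mathcal P^G_{D}\mathcal P^G_{D'}=\pm\mathcal P^G_{D\Delta D'}$ has nonempty support $(D\Delta D')\cup \odd{D\Delta D'}\se L$, and this support is itself a local set, so minimality forces it to equal $L$; hence the generators together with $\emptyset$ form an $\mathbb F_2$-subspace and their number is $2^k-1$ for some $k\gs 1$.

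The step you flag as the ``main obstacle'' is in fact a short rank argument, and your description of it is slightly off: the relevant map is not $L\to\mathbb F_2^2$ but, for one fixed vertex $v\in L$, the group homomorphism from the span of the generators to the single-qubit Pauli group at $v$ modulo phase (isomorphic to $\mathbb F_2^2$), sending $D$ to the local Pauli of $\mathcal P^G_D$ at $v$. If $k\gs 3$, this homomorphism has a nonempty $D$ in its kernel, and then $\supp(\mathcal P^G_D)=D\cup\odd{D}$ is a nonempty local set contained in $L\sm\{v\}$, contradicting minimality of $L$; hence $k\ls 2$, i.e.\ exactly one or exactly three generators, the latter of the form $D_0$, $D_1$, $D_0\Delta D_1$. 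For the parity claim, your first (commutation) variant is the one to use and it also closes immediately: when $k=2$ all three nontrivial elements have support exactly $L$, so at every $v\in L$ the local Paulis of $\mathcal P^G_{D_0}$ and $\mathcal P^G_{D_1}$ are nontrivial and distinct (were they equal, their product $\pm\mathcal P^G_{D_0\Delta D_1}$ would act trivially at $v$), hence they anticommute at every vertex of $L$; since the global operators commute, the number of anticommuting positions must be even, so $|L|$ is even. Your fallback ``handshake identity'' variant is vaguer and unnecessary. In short: no step of your plan fails, but the two places you leave hedged are exactly where the proof must be pinned down, and both close with the observations above.
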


In the first case, the minimal local set $L$ is said to be of dimension 1; in the second, of dimension 2. The dimension of a minimal local set depends only on the cut-rank function, thus it is invariant by LU-equivalence. Just like (minimal) local sets, the dimension can be a useful tool to prove that two graph states are not LU-equivalent. For example, the two graphs of \cref{fig:k4} have the same minimal local sets, but they do not  have the same dimension. Indeed, $\{a,b,c,h\}$ is a minimal local set of dimension 2 in the first graph, while it is a minimal local set of dimension 1 in the second one, proving that the two graph states are not LU-equivalent.

Minimal local sets provide crucial information on local unitaries acting on graph states: it is known that if a local unitary $U$ transforms  $\ket G$ into $\ket {G'}$ and $L$ is a minimal local set of dimension $2$, then for any $v\in L$, $U_v$ must be a Clifford operator~\cite{VandenNest05}. Minimal local sets of dimension $1$ are more permissive but also provide some constraints on $U$:

\begin{lemma}\label{prop:comU}
Given two graphs $G$, $G'$ and a local unitary $U$ such that $\ket G = U\ket {G'}$, if $L$ is a $1$-dimensional minimal local set with generators respectively $D$ in $G$ and $D'$ in $G'$, then $$\mathcal P_D^G U= U\mathcal P_{D'}^{G'}$$
\end{lemma}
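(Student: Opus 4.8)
The plan is to compare the reduced density matrices of $\ket G$ and $\ket{G'}$ on the region $L$. Since $\ket G=U\ket{G'}$ and $U=\bigotimes_{v\in V}U_v$ is local, tracing out $V\setminus L$ gives $\rho_L = U_L\,\rho'_L\,U_L^\dagger$, where $\rho_L,\rho'_L$ are the reduced states of $\ket G$ and $\ket{G'}$ on $L$ and $U_L:=\bigotimes_{v\in L}U_v$ (this uses only that the part of $U$ acting on $V\setminus L$ commutes through the partial trace). So it suffices to recover $\mathcal P_D^G$ from $\rho_L$ and $\mathcal P_{D'}^{G'}$ from $\rho'_L$.

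For this I would use the standard expansion of a stabiliser state's reduced density matrix: $\rho_L = \frac{1}{2^{|L|}}\sum P$, the sum ranging over the stabilisers $P$ of $\ket G$ whose support is contained in $L$ (all other stabilisers of $\ket G$ being traceless on $V\setminus L$). I claim that set of stabilisers is exactly $\{I,\mathcal P_D^G\}$: by \cref{prop:stabilizer} every non-identity stabiliser of $\ket G$ is of the form $\mathcal P_{D''}^G$ with support the local set $D''\cup Odd_G(D'')$; if that support lies in $L$ it equals $L$ by minimality of $L$, so $D''$ is a generator of $L$, and since $L$ is $1$-dimensional \cref{prop:MLS2cases} forces $D''=D$ (the sign is $+$ because $\mathcal P_{D''}^G$ stabilises $\ket G$). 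Hence $\rho_L=\frac{1}{2^{|L|}}(I+\mathcal P_D^G)$, viewed on the qubits of $L$. Because local sets and their dimension are invariant under LU-equivalence, $L$ is likewise a $1$-dimensional minimal local set in $G'$ with generator $D'$, so the same computation gives $\rho'_L=\frac{1}{2^{|L|}}(I+\mathcal P_{D'}^{G'})$ on the qubits of $L$.

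Substituting into $\rho_L=U_L\rho'_L U_L^\dagger$ and cancelling the identity terms yields $\mathcal P_D^G=U_L\,\mathcal P_{D'}^{G'}\,U_L^\dagger$ on the qubits of $L$; tensoring with the identity on $V\setminus L$, and noting that $\mathcal P_{D'}^{G'}$ acts trivially there so the factor $\bigotimes_{v\notin L}U_v$ commutes through and cancels, gives $\mathcal P_D^G=U\,\mathcal P_{D'}^{G'}\,U^\dagger$, i.e. $\mathcal P_D^G U=U\mathcal P_{D'}^{G'}$. The crux of the argument — and the only place where the hypotheses are really used — is the identification of the stabilisers of $\ket G$ supported inside $L$ with $\{I,\mathcal P_D^G\}$: this needs both the minimality of $L$ (to force such a support to be all of $L$) and its dimension being $1$ (via \cref{prop:MLS2cases}, to force uniqueness of the generator), together with the fact that the dimension of $L$ is the same in $G'$ as in $G$, which is exactly where LU-invariance of (minimal) local sets and of their dimension enters.
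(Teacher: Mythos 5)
Your proposal is correct and follows essentially the same route as the paper: compare the reduced density matrices on $L$, identify them as $\frac{1}{2^{|L|}}(I+\mathcal P_D^G)$ and $\frac{1}{2^{|L|}}(I+\mathcal P_{D'}^{G'})$ (the paper cites this form directly, while you rederive it from minimality, the $1$-dimensionality via \cref{prop:MLS2cases}, and LU-invariance of local sets), then cancel the identity terms to get $\mathcal P_D^G U = U\mathcal P_{D'}^{G'}$. Your version just spells out the justification of the reduced-state formula that the paper delegates to references.
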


\begin{proof}
Following~\cite{VandenNest05,zeng2011transversality}, we consider the density matrix $\rho_G^L$ (resp. $\rho_{G'}^L$) obtained by tracing out the qubits outside $L$. According to \cite{Hein06,Hein04}:
$\rho^{L}_{G} = \frac{1}{2}(I +\mathcal P_{D}^{G} )$ and $\rho^{L}_{G'} = \frac{1}{2}(I +\mathcal P_{D'}^{G'})$. As $\rho^{L}_{G} = U\rho^L_{G'}U^\dagger$, we get $I +P_{D}^{G} = I +UP_{D'}^{G'} U^\dagger$, hence  $\mathcal P_{D}^{G}U =$ $ U\mathcal P_{D'}^{G'} $. 
\end{proof}

So, intuitively, every minimal local set $L$ induces some constraints on a local unitary transformation $U$ acting on the corresponding graph state, more precisely on the qubits of $U$ that are in $L$. It has been recently shown that minimal local sets cover every vertex of a graph \cite{claudet2024covering}, which unlocks the use of minimal local sets in characterizing local unitaries acting on graph states, as it guarantees that minimal local sets impose constraints on every qubit of the local unitary:

\begin{theorem}[\cite{claudet2024covering}]
    \label{thm:MLScover}
    Each vertex of any graph is contained in at least one minimal local set.
\end{theorem}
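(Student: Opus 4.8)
Each vertex of any graph is contained in at least one minimal local set.

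The plan is to show first that every vertex belongs to *some* local set, and then argue that a minimal local set containing it can be extracted. For the first part, fix a graph $G=(V,E)$ and a vertex $u$. I would exhibit an explicit local set through $u$: if $u$ has a neighbour, take $D=\{u\}$, so that $L=\{u\}\cup N_G(u)$ is a local set containing $u$; if $u$ is isolated, then $\{u\}=\{u\}\cup Odd_G(\{u\})$ is already a (trivial, singleton) local set. So every vertex lies in at least one local set. The remaining and genuinely nontrivial work is to pass from "lies in some local set" to "lies in some *minimal* local set", since shrinking a local set to a minimal one by inclusion need not preserve membership of $u$.

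The key step is therefore a descent/exchange argument on local sets. I would take, among all local sets containing $u$, one of minimal cardinality — call it $L=D\cup Odd_G(D)$ — and claim it is minimal by inclusion among *all* local sets (not merely among those containing $u$). Suppose not: then there is a local set $L'=D'\cup Odd_G(D')$ with $L'\subsetneq L$. The idea is to combine the generators: since the local sets of $G$ are exactly the supports of the stabilizers $\mathcal P^G_D$ (as recalled before \cref{prop:MLS2cases}), and the product $\mathcal P^G_D\,\mathcal P^G_{D'}=\pm\mathcal P^G_{D\Delta D'}$ is again a stabilizer, the symmetric difference gives another local set (or the empty set, if $D\Delta D'=\emptyset$, i.e. $D=D'$, which is impossible since $L'\subsetneq L$ forces $D\neq D'$). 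One then checks that $L''=(D\Delta D')\cup Odd_G(D\Delta D')\subseteq L\cup L'=L$, and more importantly that $L''$ is strictly smaller than $L$ while still containing $u$: the containment $L''\subseteq L$ is clear from $Odd_G(D\Delta D')=Odd_G(D)\Delta Odd_G(D')$; the point $u$ survives because $u\in L$ but $u\notin L'$ (as $L'\subsetneq L$ is a proper subset chosen to violate minimality — here one must be slightly careful and instead take $L'$ to be an inclusion-minimal local set contained in $L$, which exists since $L$ is finite, and then $u\in L\setminus L'$ need not hold, so the exchange must be set up to track $u$ deliberately).

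The cleanest way to make the exchange track $u$ is this: among all local sets $L$ with $u\in L$, pick one of minimum size. If $L$ is not inclusion-minimal, pick any inclusion-minimal local set $L'\subsetneq L$; necessarily $u\notin L'$ by minimality of $|L|$. Then $L''=(D\Delta D')\cup Odd_G(D\Delta D')$ is a local set with $L''\subseteq L\cup L' = L$, and $u\in L''$: indeed $u\in L$ means $u\in D$ or $u\in Odd_G(D)$, while $u\notin L'$ means $u\notin D'$ and $u\notin Odd_G(D')$, so membership of $u$ in $D$ or in $Odd_G(D)$ transfers to membership in $D\Delta D'$ or in $Odd_G(D)\Delta Odd_G(D')=Odd_G(D\Delta D')$ respectively. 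Hence $L''$ contains $u$ and $L''\subseteq L$; moreover $L''\neq\emptyset$ precisely because $u\in L''$. It remains to get a strict decrease to reach a contradiction, or to iterate: every point $v\in L'$ that also lies in $Odd_G$-interaction cancels, so $|L''|<|L|$ unless $L'\subseteq$ the cancelled part — in any case $L''$ is a strictly smaller local set containing $u$ (since $L'\setminus L''\supseteq$ something nonempty, as $L'$ is nonempty and disjoint-ish from the "$u$-side"), contradicting minimality of $|L|$. The main obstacle is exactly pinning down this strictness: one needs to verify that $L''\subsetneq L$ and not merely $L''\subseteq L$, which requires showing the elements of $L'$ genuinely drop out, and this is where the combinatorics of symmetric differences of generators and odd neighbourhoods must be handled carefully. (Since the theorem is quoted from \cite{claudet2024covering}, I expect the reference to carry the detailed verification, but the structure above — exhibit a local set, take a size-minimal one through $u$, and use the stabilizer-product/symmetric-difference exchange to force inclusion-minimality — is the proof skeleton I would follow.)
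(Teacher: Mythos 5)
You should first note that the paper does not prove this statement at all: it is quoted from \cite{claudet2024covering}, so there is no in-paper argument to compare against and your proposal has to stand on its own. As written it does not. The easy half (every vertex lies in some local set, e.g. $\{u\}\cup Odd_G(\{u\})$) is fine, and your reduction is correctly set up exactly up to the point you yourself flag: taking $L=D\cup Odd_G(D)$ of minimum cardinality among local sets containing $u$, an inclusion-minimal local set $L'=D'\cup Odd_G(D')\subsetneq L$ (so $u\notin L'$ by minimality of $|L|$), and forming $L''=(D\Delta D')\cup Odd_G(D\Delta D')$, you correctly get $u\in L''\subseteq L$. But the contradiction needs $L''\subsetneq L$, which is precisely the step you leave open, and it is not a detail the reference will supply: it is false in general.

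The analogy with binary linear codes is what breaks. Over $\mathbb F_2$, $\supp(y)\subseteq \supp(x)$ forces cancellation on $\supp(y)$, so $x+y$ is strictly lighter; for stabilizer elements $\mathcal P^G_D$ and $\mathcal P^G_{D'}$ with nested supports, the components on the overlap can be distinct non-identity Paulis ($X$, $Y$, $Z$), so nothing need cancel --- this is exactly the phenomenon behind minimal local sets of dimension $2$ in \cref{prop:MLS2cases}. Concretely, the lemma your descent needs (``a minimum-cardinality local set containing $u$ is inclusion-minimal'') fails: consider the $5$-qubit stabilizer state with generators $X_aX_b$, $Z_uZ_aZ_b$, $X_uX_aZ_c$, $X_uX_bZ_d$, $Z_uX_cX_d$ (pairwise commuting and independent). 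A direct check of the group shows that no element whose support contains $u$ has support of size at most $2$, so the minimum size of a local set containing $u$ is $3$; yet $\{u,a,b\}=\supp(Z_uZ_aZ_b)$ attains this minimum while strictly containing the local set $\{a,b\}=\supp(X_aX_b)$, hence is not inclusion-minimal, and your exchange step can only return $L''=L$. Since every stabilizer state is LC-equivalent to a graph state and supports of stabilizer elements are invariant under local unitaries, this transfers to a $5$-vertex graph. (In that example $u$ does lie in a minimal local set, e.g. $\{u,a,c\}$, consistent with the theorem --- but your selection rule does not find it.) So the skeleton itself is not salvageable as stated; the covering theorem is genuinely nontrivial and is established in \cite{claudet2024covering} by a different, more involved argument, which is what the citation in the paper is carrying.
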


We abstract away the set of minimal local sets, which can be exponentially many in a graph (see  \cite{claudet2024covering}), as a simple labelling of the vertices that we call a \emph{type}.
\begin{definition} Given a graph $G$, a vertex $u$ is of type 
\begin{itemize}
    \item  X if for any generator $D$ of a minimal local set containing $u$, $u\in D \sm Odd(D)$,
    \item  Y  if for any generator $D$ of a minimal local set containing $u$, $u\in D \cap Odd(D)$, 
    \item Z  if for any generator $D$ of a minimal local set containing $u$, $u\in Odd(D) \sm D$, 
    \item $\bot$ otherwise. 
\end{itemize}
\end{definition}
The names X, Y and Z are chosen to match the Pauli operator at vertex $u$ in $\mathcal P^G_D$. Notice that a vertex involved in a minimal local set of dimension $2$ is necessarily of type $\bot$\footnote{If a vertex $v$ has the same type with respect to two distinct generators $D,D'$ of a minimal local set $L$, then $v$ would not be in the local set generated by $D\Delta D'$, contradicting the minimality of $L$.}.
We define $V^G_X \se V$ (resp. $V^G_Y$, $V^G_Z$, $V^G_\bot$) as the set of vertices of type X (resp. Y, Z, $\bot$) in $G$.  $V^G_X$, $V^G_Y$, $V^G_Z$ and $V^G_\bot$ form a partition of $V$: indeed, thanks to \cref{thm:MLScover}, every vertex has a type.

We show in the following a few properties of the vertex types. First notice that vertices of type Z represent at most half the vertices of a graph: 

\begin{lemma} \label{lemma:less_than_half_Z}
    For any graph $G$ of order $n$, $|V^G_Z| \ls \lfloor n/2\rfloor$.
\end{lemma}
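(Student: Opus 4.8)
The statement to prove is that for any graph $G$ of order $n$, the number of vertices of type Z is at most $\lfloor n/2\rfloor$. The plan is to argue by contradiction: suppose $|V^G_Z| > \lfloor n/2\rfloor$, i.e., $|V^G_Z| \gs \lfloor n/2 \rfloor + 1$. The key observation is that if $u$ is of type Z, then every generator $D$ of every minimal local set containing $u$ satisfies $u \in Odd_G(D) \sm D$; in particular the generator $D$ never contains $u$. First I would take an arbitrary minimal local set $L$ and a generator $D$ of it, and ask which of its vertices lie in $V^G_Z$: by definition of type Z, these are vertices $v \in L$ with $v \in Odd_G(D) \sm D$, so in particular $V^G_Z \cap L \se Odd_G(D) \sm D \se Odd_G(D)$.

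The crux is a counting argument on a single carefully chosen minimal local set, or alternatively a global argument. I expect the cleanest route is: consider the set $W = V^G_Z$ itself and examine it as a potential generator-support. The natural idea is to build a local set out of $V^G_Z$ (or a subset thereof) and invoke minimality. More concretely, since $|V^G_Z| > n/2$, the set $Z := V^G_Z$ has the property $|Z| + |Odd_G(Z)| \gs ?$ — here I would want to show $Z \cap Odd_G(Z) = \emptyset$ cannot force a contradiction directly, so instead I would look at a minimal local set $L$ whose existence is guaranteed by \cref{thm:MLScover} covering some vertex, pick its generator $D$, and study the interaction of $D$ with $V^G_Z$. The key structural fact I expect to need: if $D$ is a generator of a minimal local set $L$, then the submatrix of the adjacency matrix restricted appropriately has bounded rank, and the vertices of type Z inside $L$ all sit in $Odd_G(D)\sm D$. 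Counting $|L| = |D| + |Odd_G(D)| - |D\cap Odd_G(D)|$ and noting $V^G_Z \cap L \se Odd_G(D)\sm D$ while $D \cap L = D$ is disjoint from $V^G_Z \cap L$, one gets $|D| \le |L| - |V^G_Z\cap L|$.

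To turn this into the global bound, I would aggregate over a covering family of minimal local sets, or — more likely cleaner — argue directly on the whole vertex set. Here is the approach I'd pursue: let $Z = V^G_Z$. I claim $Z$ is an independent set and $Z \se Odd_G(Z)$ is impossible unless $Z = \emptyset$, or rather I claim that $Z$ generates a local set $L_Z = Z \cup Odd_G(Z)$, and some minimal local set $L \se L_Z$ must contain a vertex $u \in Z$; but then $L$'s generator $D$ has $u \in Odd_G(D)\sm D$, and by minimality $L \se Z \cup Odd_G(Z)$ which combined with $|Z| > n/2$ forces $Odd_G(Z) \se Z$ (since $|Z \cup Odd_G(Z)| \ls n$ and... no, this isn't tight). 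The honest assessment: \textbf{the main obstacle} is establishing that the type-Z vertices cannot all pile up, and I suspect the real argument uses the rank/parity structure — specifically that if $|Z| > n/2$ then for $Z$ viewed in $\mathbb F_2$, the system forcing every vertex of $Z$ to be of type Z (i.e. in the odd neighbourhood but not the support of every relevant generator) is over-determined. The slick version is probably: consider $D = Z$ as a generator; then $Odd_G(Z)$ has size $< n/2 < |Z|$ is not automatic, so instead observe $Z$ must be independent (two adjacent type-Z vertices would contradict something via a minimal local set of size $2$, namely $\{u,v\}$ with generator $\{u\}$ giving $v \in Odd_G(\{u\})$, $u \notin Odd_G(\{u\})$, so $u$ would be type X not Z in that set — contradiction), hence $Z$ independent, hence $Odd_G(Z) \cap Z = \emptyset$, hence $|Z| + |Odd_G(Z)| \ls n$, so $|Odd_G(Z)| \ls n - |Z| < n/2 < |Z|$. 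Now $Z \cup Odd_G(Z)$ is a local set with generator $Z$ of size $|Z| + |Odd_G(Z)|$; take a minimal local set $L$ inside it containing some $u \in Z$, with generator $D$; minimality and $L \se Z \cup Odd_G(Z)$ plus $|Odd_G(Z)| < |Z|$ should, via \cref{prop:MLS2cases} and a rank count on $D$ versus $Odd_G(D)$, yield the contradiction that $u$ is forced into $D$. I would fill in exactly this last rank step: since $u\in Z$ is type Z, $u \in Odd_G(D)\sm D$; but $D \se L \se Z \cup Odd_G(Z)$ with $|D \cap Z| + |D\cap Odd_G(Z)| = |D|$ and $D$ independent forces $Odd_G(D)$ large, overflowing $L$ — contradiction. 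That final overflow count is where I would spend the real care.
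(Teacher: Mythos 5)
There is a genuine gap. The half of the argument you do have is the easy half: since the generator $D$ of a minimal local set is non-empty, any vertex of $D$ fails the type-Z condition for that generator, so no minimal local set can be contained in $V^G_Z$ — this is exactly the first sentence of the paper's proof. What is missing is the other half, and it cannot be replaced by the counting you sketch: the paper invokes a strengthening of \cref{thm:MLScover} from \cite{claudet2024covering}, namely that \emph{any} subset of $\lfloor n/2\rfloor+1$ vertices contains at least one minimal local set. Combined with the first observation, $|V^G_Z|\gs \lfloor n/2\rfloor+1$ is immediately contradictory. Your proposal never identifies this ingredient, and the bound $\lfloor n/2\rfloor$ has no visible source in your argument; you acknowledge as much when you defer the ``final overflow count.''

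Moreover, the substitute you attempt is flawed at a concrete step: you claim $V^G_Z$ is independent because two adjacent type-Z vertices $u,v$ would give a minimal local set $\{u,v\}$ with generator $\{u\}$. But $\{u\}\cup Odd_G(\{u\})=\{u\}\cup N_G(u)$ equals $\{u,v\}$ only when $v$ is the \emph{unique} neighbour of $u$; for adjacent vertices of higher degree no such local set exists, so no contradiction arises. Indeed the claim itself is false: in the paper's graphs $C'_{t,k}$ the type-Z vertices $[1,t]$ form a clique. With independence gone, the subsequent steps ($Odd_G(Z)\cap Z=\emptyset$, the choice of a minimal local set inside $Z\cup Odd_G(Z)$ containing a vertex of $Z$ — which is also not guaranteed — and the rank/overflow count) do not go through. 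To repair the proof you should cite the result of \cite{claudet2024covering} that every set of more than $\lfloor n/2\rfloor$ vertices contains a minimal local set, and then conclude as the paper does.
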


\begin{proof}
    $V^G_Z$ contains no minimal local set. Indeed, as the generator of a local set is non-empty, at least one vertex of a given minimal local set $L$ is not of type Z. Conversely, any subset of the vertices of size $\lfloor n/2 \rfloor+1$ contains at least one minimal local set \cite{claudet2024covering}.
\end{proof}

Two LU-equivalent graph states share the same vertices of type $\bot$.

\begin{lemma}
    \label{lemma:LUbottom}
    Given two LU-equivalent graph states $\ket{G_1}$ and $\ket{G_2}$, $V^{G_1}_\bot = V^{G_2}_\bot$.
\end{lemma}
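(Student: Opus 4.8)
The plan is to combine the cut-rank invariance of minimal local sets with a qubit-by-qubit analysis of the Pauli stabilizers they carry. Fix a local unitary $U = \bigotimes_{v\in V} U_v$ with $\ket{G_1} = U\ket{G_2}$. First I would record the invariances that come for free: since LU-equivalent graph states have the same cut-rank function, $G_1$ and $G_2$ have exactly the same minimal local sets, each with the same dimension; in particular the set of vertices contained in some dimension-$2$ minimal local set is the same for $G_1$ and $G_2$, and by the remark following the definition of type, every such vertex is of type $\bot$ in both graphs. So it remains to treat a vertex $u$ lying in no dimension-$2$ minimal local set, and to show $u\in V^{G_1}_\bot \iff u\in V^{G_2}_\bot$; by the symmetry $\ket{G_2}=U^\dagger\ket{G_1}$ it is enough to prove one implication, say that $u\notin V^{G_2}_\bot$ implies $u\notin V^{G_1}_\bot$.

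For this, assume $u\notin V^{G_2}_\bot$ and let $Q\in\{X,Y,Z\}$ be the single-qubit Pauli recording its (constant) type in $G_2$. For an arbitrary minimal local set $L\ni u$ --- necessarily $1$-dimensional, by the reduction above --- let $D$, $D'$ be its unique generators in $G_1$, $G_2$. \cref{prop:comU} gives $\mathcal P^{G_1}_D = U\,\mathcal P^{G_2}_{D'}\,U^\dagger$. Both sides are, up to a global phase, tensor products of single-qubit Paulis supported exactly on $L$; comparing the tensor factor at $u$, the single-qubit Pauli $\sigma_u$ occurring at $u$ in $\mathcal P^{G_1}_D$ equals $U_uQU_u^\dagger$ up to a phase, and since $u\in L$ it is genuinely one of $X,Y,Z$ and is exactly what fixes the type of $u$ with respect to $D$. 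The key point is that $U_uQU_u^\dagger$ does not depend on the chosen $L$, so $u$ has the same non-$\bot$ type with respect to the generator of every minimal local set containing it in $G_1$; hence $u\notin V^{G_1}_\bot$. Applying the same reasoning to $U^\dagger$ gives the converse, and therefore $V^{G_1}_\bot = V^{G_2}_\bot$.

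The main obstacle I anticipate is the factorwise comparison used in the middle step: deducing from the global operator identity $\mathcal P^{G_1}_D = U\mathcal P^{G_2}_{D'}U^\dagger$ the single-qubit identity $\sigma_u = (\text{phase})\cdot U_uQU_u^\dagger$. This relies on $U$ being a tensor product (so that conjugation does not couple qubits), on $\mathcal P^{G}_D = (-1)^{|G[D]|}X_DZ_{Odd(D)}$ being an honest tensor product of single-qubit Paulis (so $\sigma_u\neq I$ precisely when $u\in L$), and on the elementary fact that two nonzero tensor products of operators that coincide must agree factorwise up to scalars of modulus one. The other point to keep in mind --- easy but essential --- is that the reduction really does apply: because $u\notin V^{G_2}_\bot$, it lies in no dimension-$2$ minimal local set of $G_2$, hence (dimension being a cut-rank invariant) in none of $G_1$ either, which is what allows \cref{prop:comU} to be used uniformly over all minimal local sets containing $u$.
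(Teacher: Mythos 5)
Your proposal is correct and takes essentially the same route as the paper: both arguments split off the dimension-2 case using the LU-invariance of minimal local sets and of their dimension, and both extract the information at the qubit $u$ from \cref{prop:comU}. The only difference is in the final local step — the paper witnesses type $\bot$ by two stabilizers that fail to commute at $u$ (a property preserved under conjugation by a local unitary), while you argue the contrapositive by identifying, factor by factor, the Pauli at $u$ with $U_u Q U_u^\dagger$ up to phase across all minimal local sets containing $u$; both versions are sound.
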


\begin{proof}
Let $U$ be a local unitary s.t. $\ket {G_1} = U\ket{G_2}$, and let $v\in V$ a vertex of type $\bot$ in $G_1$. If $v$ is in a minimal local set of dimension 2 in $G_1$, so is in $G_2$ as the dimension of minimal local sets is invariant under LU-equivalence. Otherwise, there exist two distinct minimal local sets $L$, $L'$, generated respectively by $D_1$ and by $D_1'$ in $G_1$ such that $\mathcal P_{D_1}^{G_1}$ and $\mathcal P_{D'_1}^{G_1}$ do not commute on qubit $v$.\footnote{$V=\bigotimes_u V_u$ commutes with $W = \bigotimes_u W_u$ on qubit $u_0$ if $V_{u_0}$ and $W_{u_0}$ commute.} Let $D_2$ (resp. $D_2'$) be the generator of $L$ (resp. $L'$) in $G_2$. According to \cref{prop:comU}, $\mathcal P_{D_1}^{G_1}= U\mathcal P_{D_2}^{G_2}U^\dagger$ and  $\mathcal P_{D'_1}^{G_1}= U\mathcal P_{D'_2}^{G_2}U^\dagger$, as a consequence, $\mathcal P_{D_2}^{G_2}$ and $\mathcal P_{D'_2}^{G_2}$ do not commute on qubit $v$, so $v$ must be of type $\bot$ in $G_2$. 
\end{proof}

A vertex having the same type in two LU-equivalent graph states implies strong constraints on the local unitaries that relate the two corresponding graph states.

\begin{lemma}
    \label{lemma:LUconstraints}
    If $\ket{G_1} =_{LU}\ket{G_2}$ i.e. $\ket {G_2} = U \ket {G_1}$, then $U=e^{i\phi}\bigotimes_{u\in V} U_u$ where:
    \begin{itemize}
        \item $U_u$ is a Clifford operator if  $u$ is of type $\bot$ in both $G_1$ and $G_2$,
        \item $U_u = X(\theta_u) Z^{b_u}$ if $u$ is of type X in both $G_1$ and $G_2$,
        \item  $U_u = Z(\theta_u) X^{b_u}$ if $u$ is of type Z in both $G_1$ and $G_2$.
    \end{itemize}
    with $b_u \in \{0,1\}$. Additionally, if $\ket{G_1}$ and $\ket{G_2}$ are LC$_r$-equivalent, there exists such a unitary $U$ where every angle satisfies $\theta_u = 0 \bmod \pi/2^r$.
\end{lemma}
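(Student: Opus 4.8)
The plan is to analyze the action of $U_u$ qubit by qubit using the constraints coming from minimal local sets, which are invariant under LU-equivalence. The starting point is the observation that every vertex $u$ belongs to some minimal local set $L$ by \cref{thm:MLScover}, so we get to apply either the dimension-$2$ rigidity result of \cite{VandenNest05} or \cref{prop:comU}. First I would dispose of the $\bot$ case: by \cref{lemma:LUbottom} a vertex of type $\bot$ in $G_1$ is of type $\bot$ in $G_2$, and being of type $\bot$ means either it lies in a dimension-$2$ minimal local set — in which case \cite{VandenNest05} directly forces $U_u$ to be Clifford — or there are two minimal local sets whose stabilizers $\mathcal P^{G_1}_{D_1}, \mathcal P^{G_1}_{D_1'}$ anticommute on qubit $u$. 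In the latter case \cref{prop:comU} gives $\mathcal P^{G_1}_{D_i} U = U \mathcal P^{G_2}_{D_i'}$ for $i=1,2$, so conjugation by $U_u$ sends the two anticommuting single-qubit Paulis at $u$ (which generate the whole single-qubit Pauli group up to phase) to Paulis; hence $U_u$ is Clifford. (This is essentially the argument already run inside the proof of \cref{lemma:LUbottom}.)

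Next I would handle the type-X vertices (type Z being symmetric via a Hadamard, or by repeating the argument with $X$ and $Z$ swapped). Fix $u$ of type X in both $G_1$ and $G_2$. By definition every minimal local set $L \ni u$ has, in $G_1$, a generator $D$ with $u \in D \setminus Odd(D)$, so the Pauli at $u$ in $\mathcal P^{G_1}_D$ is exactly $X$; likewise in $G_2$ it is $X$. Picking any such $L$ (one exists by \cref{thm:MLScover}) and applying \cref{prop:comU} gives $\mathcal P^{G_1}_D U = U\mathcal P^{G_2}_{D'}$, and restricting attention to qubit $u$ we get $X_u U_u = \pm U_u X_u$; since these are stabilizers the sign is $+$, so $U_u$ commutes with $X$. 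Any single-qubit unitary commuting with $X$ is, up to a global phase, of the form $X(\theta_u)$ times possibly an $X$-diagonal Pauli correction — but I should be careful here: the centralizer of $X$ in $U(2)$ is exactly $\{e^{i\gamma}X(\theta)\}$, so actually $U_u = e^{i\gamma_u} X(\theta_u)$. To recover the stated form $U_u = X(\theta_u) Z^{b_u}$ I would instead argue more carefully: $U_u$ need only commute with $X$ \emph{up to the global phase already factored out}, and there may be several minimal local sets forcing commutation with $X$ but the \emph{dimension-2} case is excluded since $u$ is not $\bot$; so I would absorb the global phase into $e^{i\phi}$ and write $U_u = X(\theta_u)$, noting $Z^{b_u}$ appears once we only fix $U$ up to Pauli corrections (or, more precisely, the freedom in choosing the overall local Pauli that stabilizes $\ket{G_2}$). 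Here the clean statement is: up to multiplying $U$ by a stabilizer of $\ket{G_1}$, we may take $U_u$ to be $X(\theta_u)$ on type-X vertices, and the $Z^{b_u}$ records the residual Pauli; symmetrically $Z(\theta_u)X^{b_u}$ on type-Z vertices.

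For the final sentence — the refinement to LC$_r$-equivalence — I would start from the hypothesis that \emph{some} $U \in \mathrm{LC}_r$ satisfies $\ket{G_2} = U\ket{G_1}$. Every generator of $\mathrm{LC}_r$ is built from $H$ and $Z(\pi/2^r)$, so any $U_u \in \mathrm{LC}_r$ is a word in $H$ and $Z(\pi/2^r)$; in particular every $Z$-rotation angle appearing is a multiple of $\pi/2^r$. Then I would run the qubit-by-qubit analysis above on this specific $U$: on a type-X vertex, $U_u$ commutes with $X$ up to phase, so $U_u = e^{i\gamma}X(\theta_u)$; I must check $\theta_u \in (\pi/2^r)\mathbb{Z}$. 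Since $X(\theta) = H Z(\theta) H$ and $U_u$ is a word in $H, Z(\pi/2^r)$, the key point is that the group $\langle H, Z(\pi/2^r)\rangle$ intersected with the centralizer of $X$ modulo phase is exactly $\langle X(\pi/2^r)\rangle$ modulo phase — this is a finite-group computation (the relevant group is essentially a dihedral-type or Clifford-hierarchy group) which I would verify by noting $HZ(\pi/2^r)H = X(\pi/2^r)$ and that conjugating $X(\pi/2^r)$ by $H$ gives $Z(\pi/2^r)$, so the subgroup commuting with $X$ up to phase is generated by $X(\pi/2^r)$ together with phases. The same for type Z with roles swapped, and for $\bot$ vertices $U_u$ is Clifford, so $\theta_u = 0 \bmod \pi/2$, which in particular is $0 \bmod \pi/2^r$. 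Assembling these gives the claimed $U$ with all $\theta_u = 0 \bmod \pi/2^r$.

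The main obstacle, I expect, is the bookkeeping of global phases and Pauli corrections: pinning down exactly which $Z^{b_u}$ (resp. $X^{b_u}$) factor is forced versus which can be absorbed into $e^{i\phi}$ or into the stabilizer freedom of $\ket{G_2}$ requires care, since \cref{prop:comU} only controls $U$ up to the nontrivial phase structure of $\mathcal P^G_D$. The cleanest route is probably to first prove the claim modulo local Paulis and a global phase, and then note that any single-qubit Pauli at $u$ is itself of the form $X(\pi)$ or $Z(\pi)$ (up to phase), hence can be folded into $X(\theta_u)$ (resp. $Z(\theta_u)$) at the cost of adjusting $\theta_u$ by $\pi$, and $\pi = 2^r \cdot (\pi/2^r)$ so this does not spoil the $\theta_u = 0 \bmod \pi/2^r$ condition — except for the $Z^{b_u}$ on a type-X vertex, which anticommutes with $X$ and so genuinely must be kept as a separate factor, which is exactly why the statement has that form.
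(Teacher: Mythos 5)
Your overall strategy is the same as the paper's: cover every vertex by a minimal local set via \cref{thm:MLScover}, invoke the dimension-2 rigidity of \cite{VandenNest05} or \cref{prop:comU} qubit by qubit, and identify the centralizer of $X$ (resp.\ $Z$). However, two steps are genuinely flawed. First, at a type-X vertex your claim that ``since these are stabilizers the sign is $+$'' is wrong: \cref{prop:comU} is an identity between tensor-product operators, so per qubit it only gives $U_uXU_u^\dagger=e^{i\gamma_u}X$ with the phases constrained globally, and the involution argument pins $e^{i\gamma_u}=\pm1$; the $-1$ case cannot be excluded, and it is exactly what produces the $Z^{b_u}$ in the statement (the paper handles it by setting $U'=U_uZ$, which commutes with $X$, so $U_u=e^{i\phi}X(\theta_u)Z$). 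Your alternative explanation --- that $Z^{b_u}$ records ``the freedom in choosing the overall local Pauli that stabilizes $\ket{G_2}$'' --- is not a derivation: multiplying $U$ by a Pauli stabilizer changes the local factors on several qubits simultaneously (and the stabilizer coming from the minimal local set of a type-X vertex has Pauli $X$ there, which does not remove a $Z$ factor), so the signs cannot be normalised qubit by qubit. You do concede at the very end that the $Z^{b_u}$ must be kept, but as written your argument first concludes $U_u=e^{i\gamma}X(\theta_u)$ and never actually establishes the form $X(\theta_u)Z^{b_u}$ in the anticommuting case.

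Second, for the LC$_r$ refinement you justify the key fact --- that the only $X$-rotations occurring in $\mathrm{LC}_r=\langle H,Z(\pi/2^r)\rangle$ have angle a multiple of $\pi/2^r$ --- by calling it ``a finite-group computation'' in ``essentially a dihedral-type'' group. This is false for $r\geqslant 2$: the single-qubit group generated by $H$ and $Z(\pi/4)$ (Clifford$+T$) is infinite, indeed dense in $U(2)$ up to phase, so observing $HZ(\pi/2^r)H=X(\pi/2^r)$ says nothing about which $X$-rotations the group contains. The fact you need is true, but it is an exact-synthesis/number-theoretic statement (entries of $\mathrm{LC}_r$ elements lie in $\mathbb Z[\zeta_{2^{r+1}},1/\sqrt 2]$, and elements of that ring of modulus $1$ are $2^{r+1}$-st roots of unity), not a finite-group check. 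The paper's own proof is admittedly terse on this point (it simply takes the $\mathrm{LC}_r$ witness and notes that each $U_u\in \mathrm{LC}_r$), but it does not rest on a false finiteness claim, whereas your proposed verification would fail as stated.
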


\begin{proof}
    The first case, when $u$ is of type $\bot$, is a variant of the minimal support condition \cite{VandenNest04} and was proved in \cite{Rains97} for the particular case of minimal local sets of dimension 2. 
    If $u$ is of type X, let $L$ be a minimal local set such that $u \in L$. According to \cref{prop:comU}, $U_u X U_u^{\dagger} = e^{i \phi} X$. $U_u \ket +$ and $U_u \ket -$ are eigenvectors of $U_u X U_u^{\dagger}$ of eigenvalues respectively 1 and -1, implying $U_u X U_u^{\dagger} = \pm X$. If $U_u X U_u^{\dagger} = X$, then $U_u = e^{i \phi} X(\theta)$. If $U_u X U_u^{\dagger} = -X$, define ${U'} \defeq U_u Z$. ${U'} X {U'}^{\dagger} = X$, so ${U'} = e^{i \phi} X(\theta)$, thus $U_u = e^{i \phi} X(\theta) Z$. The proof is similar when $u$ is of type Z. If $\ket{G_1}$ and $\ket{G_2}$ are LC$_r$-equivalent, there exists $\ket {G_2} = U \ket {G_1}$ where $U=e^{i\phi}\bigotimes_{u\in V} U_u$, and each $U_u$ is in LC$_r$.
\end{proof}

To fully characterise the unitaries that relate two graph states using \cref{lemma:LUconstraints}, the type of every vertex needs to be the same. This is not the case in general, even for locally equivalent graphs\footnote{A simple example involves the complete graph $K_3$ on 3 vertices and the line graph $L_3$ on three vertices. $K_3$ and $L_3$ are related by a single local complementation, however every vertex of $K_3$ is of type Y, while $L_3$ contains one vertex of type Z and two vertices of type X.}. Thus, we introduce a standard form on graphs, such that two graphs in standard form corresponding to LU-equivalent graph states have the same types.

\subsection{Standard form}

We define a standard form of graphs up to local complementation. A graph  in standard form satisfies the following properties: all vertices are of type X, Z or $\bot$;  all the neighbours of a vertex of type X are of type Z (so in particular the vertices of type X form an independent set);  and  any vertex of type X is smaller than its neighbours according to the underlying total order $\prec$ of the vertices. In other words: 

\begin{definition} A graph $G$ is in said in standard form if 
    \begin{itemize}
        \item $V^G_Y = \emptyset$,
        \item $\forall u \in V^G_X$, any neighbour $v$ of $u$ is of type Z and satisfies $u\prec v$.
    \end{itemize} 
\end{definition}

Note that standard form is not unique in general, in the sense that a class of local equivalence of graphs may contain several graphs in standard form. For example, any graph with only vertices of type $\bot$ is in standard form, along with its entire orbit generated by local complementation. Conversely, each class of local equivalence contains at least a graph in standard form.

\begin{restatable}{proposition}{standardform}
    \label{prop:standardform}
    For any graph $G$, there exists a locally equivalent $G'$ in standard form.
\end{restatable}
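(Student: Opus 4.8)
The plan is to take an arbitrary graph $G$ and transform it into standard form by a sequence of local complementations, working vertex by vertex in increasing order of $\prec$. First I would deal with the two requirements separately. To kill type Y: whenever a vertex $u$ is of type Y, I claim that a local complementation $G \star u$ changes the type of $u$ to something else. Indeed, $G \star u$ fixes the neighbourhood relation appropriately, and since $Odd_{G\star u}(D)$ for a generator $D$ containing $u$ behaves as $D \mapsto D$ but with the membership of $u$ in $Odd$ toggled (because local complementation at $u$ toggles $u \in Odd_G(D)$ precisely when $u \in D$, which is the relevant case here), a type-Y vertex becomes type-X (or $\bot$) after $G \star u$. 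So one can iteratively eliminate all type-Y vertices; the subtlety is that a single local complementation may create new type-Y vertices elsewhere, so I would need a monotone quantity (e.g. process vertices in order, or argue that once the "prefix" up to some vertex has no Y and the X-vertices there are correctly oriented, further operations below don't disturb it — using \cref{loccompneighS}-style neighbourhood-preservation arguments and the fact that types of already-fixed vertices are determined by minimal local sets, which are LC-invariant as sets).

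Second, for the ordering condition: once there are no type-Y vertices, suppose $u$ is of type X and has a neighbour $v$ of type Z with $v \prec u$. Apply a pivot $G \wedge uv$ (equivalently $G \star u \star v \star u$). Pivoting along the edge $uv$ swaps the roles of $u$ and $v$ in a controlled way — in particular it tends to exchange an X-type endpoint with a Z-type endpoint — so after the pivot $v$ becomes type X and $u$ becomes type Z, now with $v \prec u$, i.e. the smaller vertex is the X one. I would verify, using the stabilizer description $\mathcal P^G_D$ and the minimal-local-set characterization of types, that pivoting indeed realizes this swap and that the neighbours of the (new) type-X vertex $v$ are all of type Z. Also I must check that a vertex of type X adjacent to a non-Z, non-$\bot$ vertex can be cleaned up first — but after eliminating Y, the only types left are X, Z, $\bot$, and a type-X vertex cannot be adjacent to a $\bot$ or X vertex without a contradiction with \cref{prop:comU}/minimal-local-set structure, so every neighbour of a type-X vertex is automatically type Z.

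The main obstacle I expect is the bookkeeping that makes the vertex-by-vertex procedure actually terminate: each local complementation or pivot can in principle change the types of many vertices at once, so I need the right potential function or the right invariant on a growing prefix of vertices. The clean way is to process vertices $u_1, u_2, \ldots, u_n$ in order, maintaining the invariant "$u_1, \ldots, u_i$ contains no type-Y vertex, and every type-X vertex among them has all its neighbours of type Z and larger than it," and to show that when handling $u_{i+1}$ one can restore the invariant using only local complementations/pivots at $u_{i+1}$ and vertices $\succ u_{i+1}$, appealing to \cref{loccompneighS} and the LC-invariance of the set of minimal local sets to guarantee that the already-fixed prefix is untouched (its vertices keep their neighbourhoods among the prefix, hence their types). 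Once that invariant holds for all of $V$, $G'$ is in standard form and is locally equivalent to $G$ by construction, proving \cref{prop:standardform}.
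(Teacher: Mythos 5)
There is a genuine gap, and in fact two of your key claims are false. First, the assertion that after eliminating type-Y vertices ``a type-X vertex cannot be adjacent to a $\bot$ or X vertex'' is wrong: in the $4$-cycle $C_4$ every vertex is of type X (the only minimal local sets are the two pairs of opposite vertices, each generated by itself with empty odd neighbourhood), yet $C_4$ has edges, so XX-edges do occur in Y-free graphs. This is precisely why the paper's algorithm contains dedicated steps that pivot away XX-edges and X$\bot$-edges before treating the ordering condition; your plan omits them on the strength of an unproved (and false) structural claim, so the resulting graph need not satisfy the second condition of the standard form. Second, your description of how types move under local complementation is incorrect: complementing at a type-Y vertex $u$ turns $u$ into type Z (not X or $\bot$ --- $\bot$ is impossible since $V_\bot$ is invariant, cf.\ \cref{lemma:LUbottom}), and, more importantly, it turns every type-X neighbour of $u$ into type Y. Hence the naive ``complement at each Y vertex'' loop can create new Y vertices and need not make progress; the paper avoids this by first removing XX- and XY-edges so that, when a local complementation is applied at a Y vertex, that vertex has no X-neighbours, which makes the potential $2|V^G_Y|+|V^G_X|$ strictly decrease.

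The bookkeeping you yourself flag as the main obstacle is also not repaired by the invariant you propose. A local complementation or pivot at a vertex $\succ u_i$ that is adjacent to $u_i$ does change the type of $u_i$ (types are determined by the \emph{generators} of the minimal local sets, which are not invariant even though the sets themselves are), so the claim that the already-processed prefix ``keeps its types'' is false; and \cref{loccompneighS} is about $r$-local complementation preserving the neighbourhoods of vertices in the support of the multiset, which does not apply to this situation. Without a correct potential function or a correct invariant, termination is not established. The paper's proof supplies exactly these missing ingredients: a type-transition table for local complementation and pivoting, a five-step algorithm whose step order guarantees that $2|V^G_Y|+|V^G_X|$ decreases in the first four steps, and a well-ordering argument (the X-vertex involved in a pivot is replaced by a strictly smaller one) for the final reordering step.
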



\begin{proof}    
    To prove the proposition, we introduce an algorithm that transforms the input graph $G=(V,E)$ into a graph in standard form by means of local complementations. The action of the local complementation and the pivoting on the type of the vertices is given in the following table (the types of the unwritten vertices remain unchanged).
    \begin{center}
    \begin{tabular}{|c|c|}
    \hline
    \multicolumn{2}{|c|} {Type of $u$ in}\\
    $~~G~~$& $G\star u$\\
    \hline
    X&X\\
    \hline
    Y&Z\\
    \hline
    Z&Y\\
    \hline
    $\bot$&$\bot$\\
    \hline
    \end{tabular}\qquad\begin{tabular}{|c|c|}
    \hline
    \multicolumn{2}{|c|}{Type of $v {\in}N_G(u)$ in}\\
    $~~~G~~~$& $G\star u$\\
    \hline
    X&Y\\
    \hline
    Y&X\\
    \hline
    Z&Z\\
    \hline
    $\bot$&$\bot$\\
    \hline
    \end{tabular}
    \qquad \begin{tabular}{|c|c|}
    \hline
    \multicolumn{2}{|c|}{Type of $u$ (or $v$) in}\\
    $~~G~~$&$G\wedge uv$\\
    \hline
    X&Z\\
    \hline
    Y&Y\\
    \hline
    Z&X\\
    \hline
    $\bot$&$\bot$\\
    \hline
    \end{tabular}
    \end{center}

    The algorithm reads as follows:
    \begin{enumerate}
    \item If there is an XX-edge (i.e. an edge $uv$ such that both $u$ and $v$ are of type X): apply a pivoting on it. \\
    Repeat until there is no XX-edge left. 
    \item If there is an XY-edge: apply a local complementation of the vertex of type X, then go to step 1.
    \item If there is a vertex of type Y: apply a local complementation on it, then go to step 1.
    \item If there is an X$\bot$-edge: apply a pivoting on it. \\
    Repeat until there is no X$\bot$-edge left. 
    \item If there is an XZ-edge $uv$ such that $v\prec u$, apply a pivoting on $u v$.\\
    Repeat until for every XZ-edge $uv$, $u\prec v$.
    \end{enumerate}

    \noindent{\bf Correctness.~} When step 1 is completed, there is no XX-edge. Step 2 transforms the neighbours of type Y into vertices of type Z. No vertex of type Y is created as there is no XX-edge before the local complementation. When step 2 is completed, there is no XX-edge nor any XY-edge. Step 3 transforms the vertex of type Y into a vertex of type Z. No vertex of type Y is created as there is no XY-edge before the local complementation. When step 3 is completed, there is no vertex of type Y nor any XX-edge. In step 4, applying a pivoting on an X$\bot$-edge transforms the vertex of type X into a vertex of type Z. No XX-edge is created, as the vertex of type X has no neighbour of type X before the pivoting. When step 4 is completed, there is no vertex of type Y and each neighbour of a vertex of type X is of type Z. In step 5, applying a pivoting on an XZ-edge permutes the type of the two vertices, and preserves the fact that each neighbour of a vertex of type X is of type Z. When step 5 is completed, the graph is in standard form.
    
    {\bf Termination.} The quantity $2|V^G_Y|+|V^G_X|$ strictly decreases at each of the first 4 steps, which guarantees to reach step 5. At step 5, $V^G_X$ is updated as follows: exactly one vertex $u$ is removed from the set and is replaced by a vertex $v$ such that $v\prec u$, which guarantees the termination of step 5.
\end{proof}

Standard form implies a similar structure in terms of types assuming LU-equivalence.

\begin{restatable}{proposition}{sametypes}
    \label{lemma:sametypes}
    If $G_1$ and $G_2$ are both in standard form and $\ket{G_1} =_{LU}\ket{G_2}$, each vertex has the same type in $G_1$ and $G_2$, and any vertex $u$ of type X satisfies $N_{G_1}(u) = N_{G_2}(u)$.    
\end{restatable}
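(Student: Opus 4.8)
The plan is to use the constraints from \cref{lemma:LUconstraints} together with the structural restrictions of standard form to rule out type mismatches vertex by vertex. First I would fix a local unitary $U = e^{i\phi}\bigotimes_u U_u$ with $\ket{G_2} = U\ket{G_1}$. By \cref{lemma:LUbottom}, $V^{G_1}_\bot = V^{G_2}_\bot$, so the $\bot$ vertices already agree; since both graphs are in standard form, $V^{G_j}_Y = \emptyset$, and we only need to show that $V^{G_1}_X = V^{G_2}_X$ (equivalently $V^{G_1}_Z = V^{G_2}_Z$ on the complement of $V_\bot$). Suppose toward a contradiction that some vertex $u$ is of type X in $G_1$ but of type Z in $G_2$ (the reverse case is symmetric, swapping the roles of $G_1$ and $G_2$). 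Being type X in $G_1$ means there is a $1$-dimensional minimal local set $L\ni u$ with generator $D_1$ having $u\in D_1\sm \odd{D_1}$, so $\mathcal P^{G_1}_{D_1}$ acts as $X$ on $u$; being type Z in $G_2$ gives, for a generator $D_2$ of the corresponding minimal local set in $G_2$, that $\mathcal P^{G_2}_{D_2}$ acts as $Z$ on $u$. By \cref{prop:comU}, $\mathcal P^{G_2}_{D_2} U = U \mathcal P^{G_1}_{D_1}$, so on qubit $u$ we get $Z\, U_u = U_u\, X$ up to a scalar, i.e. $U_u X U_u^\dagger = \pm Z$. The key consequence is that then $U_u$ is forced to be (a scalar times) a Clifford operator mapping $X$ to $\pm Z$ — in particular $U_u$ is a Hadamard-like gate, and it cannot be diagonal or anti-diagonal.

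Next I would exploit the neighbourhood structure. Since $u$ is of type X in the standard form $G_1$, every neighbour $v\in N_{G_1}(u)$ is of type Z with $u\prec v$, and $N_{G_1}(u)$ is independent. The vertex $u$ of type X in $G_1$ forces, via \cref{lemma:LUconstraints}, that if $u$ were also of type X in $G_2$ then $U_u = X(\theta_u)Z^{b_u}$, which is incompatible with $U_uXU_u^\dagger = \pm Z$; hence the only way for $U_u$ to conjugate $X$ to $\pm Z$ is if $u$ is of type $\bot$ in $G_2$ — but $\bot$ is ruled out by \cref{lemma:LUbottom} since $u\notin V^{G_1}_\bot$. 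This already contradicts the assumption, establishing that a type-X vertex of $G_1$ stays type X (not Z, not $\bot$, and Y is empty) in $G_2$, and symmetrically for type Z; so the types coincide. Then on each type-X vertex $u$, \cref{lemma:LUconstraints} gives $U_u = X(\theta_u)Z^{b_u}$, and on its neighbours (all type Z) $U_v = Z(\theta_v)X^{b_v}$.

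Finally, to get $N_{G_1}(u) = N_{G_2}(u)$ for type-X vertices $u$, I would again apply \cref{prop:comU}. For the $1$-dimensional minimal local set $L\ni u$, its generator $D_1$ in $G_1$ has $u\in D_1\sm\odd{D_1}$; since standard form makes $N_{G_1}(u)\se V^{G_1}_Z$ independent, one can take $D_1 = \{u\}$, so $L = \{u\}\cup N_{G_1}(u)$ and $\mathcal P^{G_1}_{\{u\}} = X_u Z_{N_{G_1}(u)}$. By \cref{lemma:sametypes}'s first part the same $L$ is a $1$-dimensional minimal local set in $G_2$ containing the type-X vertex $u$, and the same reasoning (using that in standard form type-X vertices have independent type-Z neighbourhoods) gives $L = \{u\}\cup N_{G_2}(u)$, hence $N_{G_1}(u) = L\sm\{u\} = N_{G_2}(u)$. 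The main obstacle I anticipate is the careful bookkeeping in the type-mismatch step: one must check every alternative type assignment for $u$ in $G_2$ (X, Y, Z, $\bot$) against the conjugation constraint $U_uXU_u^\dagger\in\{\pm X,\pm Y,\pm Z\}$ derived from whichever minimal local sets witness the types, and confirm that only ``same type'' is consistent — together with handling the subtlety that a vertex could a priori be type $\bot$ in one graph via a dimension-$2$ minimal local set, which is exactly what \cref{lemma:LUbottom} is there to exclude. The rest is a direct combination of the already-established lemmas.
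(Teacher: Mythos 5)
There is a genuine gap in the type-agreement step. After assuming for contradiction that $u$ is of type X in $G_1$ and of type Z in $G_2$, you correctly derive $U_u X U_u^\dagger = \pm Z$ from \cref{prop:comU}, but no contradiction follows from this. \cref{lemma:LUconstraints} only constrains $U_u$ when $u$ has the \emph{same} type in both graphs, so the observation that ``if $u$ were type X in $G_2$ then $U_u = X(\theta_u)Z^{b_u}$'' merely rules out a case you had already excluded by hypothesis, and the conclusion ``hence $u$ must be of type $\bot$ in $G_2$'' is a non sequitur: a Hadamard-like $U_u$ is perfectly compatible with $u$ being of type Z in $G_2$, and nothing in the cited lemmas forbids it. In fact no argument at the level of single-qubit conjugation constraints can close this step, because your proof never uses the ordering requirement of the standard form ($u \prec v$ for a type-X vertex $u$ and its neighbours $v$), and without that requirement the statement is false. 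For instance, the path $a\!-\!b\!-\!c$ and the star $b\!-\!a\!-\!c$ obtained by pivoting on the edge $ab$ (implemented by $H_a\otimes H_b$) both have no type-Y vertices and all neighbours of their type-X vertices of type Z, yet the X/Z types of $a$ and $b$ are swapped; there the unitary does satisfy $U_a X U_a^\dagger = Z$ with $a$ of type Z in the second graph, exactly the situation you claim is impossible. The paper's proof is combinatorial precisely at this point: it takes $L=\{u\}\cup N_{G_1}(u)$, shows it is a minimal local set with $u$ its $\prec$-smallest element, notes that $u\in Odd_{G_2}(D)\setminus D$ for the generator $D$ of $L$ in $G_2$ forces $u$ to be adjacent in $G_2$ to some $v\in D\subseteq N_{G_1}(u)$, argues $v$ must be of type X in $G_2$ (it lies in a generator, is not Y by standard form, and not $\bot$ by \cref{lemma:LUbottom}), and then $u\prec v$ contradicts the ordering condition in $G_2$'s standard form.

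Your final step (equal neighbourhoods for type-X vertices) is essentially the paper's argument and is fine in outline, but the property you should invoke is that a generator of a minimal local set cannot contain a type-Z vertex --- not that $N_{G_1}(u)$ is independent (type-Z neighbours may well be adjacent to one another, and independence is neither guaranteed nor needed). It is the type-Z-ness of $L\setminus\{u\}$ in $G_2$ (available once types are known to agree) that forces the generator of $L$ in $G_2$ to be $\{u\}$, whence $N_{G_2}(u)=L\setminus\{u\}=N_{G_1}(u)$.
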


The proof of \cref{lemma:sametypes} is given in \cref{sec:proof:sametypes}.

\subsection{Graphical characterisation of local equivalence}

Thanks to the standard form, one can accommodate the types of two LU-equivalent graph states, to simplify the local unitaries mapping one  to the other:

\begin{restatable}{lemma}{standardformrotation}\label{lemma:standardform_implies_rotations}
    If $G_1$ and $G_2$ are both in standard form and $\ket{G_1} =_{LU}\ket{G_2}$, there exists $G'_1$locally equivalent to $G_1$ in standard form such that $\ket {G_2} = \bigotimes_{u\in V^{G_1}_X}X(\alpha_u)\bigotimes_{v\in V^{G_1}_Z} Z(\beta_v) \ket{G'_1}$.
\end{restatable}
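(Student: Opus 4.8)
The plan is to start from the general constraint on the connecting unitary provided by \cref{lemma:LUconstraints} together with the fact that in standard form all vertices have the same type (\cref{lemma:sametypes}). So write $\ket{G_2}=U\ket{G_1}$ with $U=e^{i\phi}\bigotimes_{u\in V}U_u$, where $U_u$ is Clifford for $u\in V^{G_1}_\bot$, $U_u=X(\theta_u)Z^{b_u}$ for $u\in V^{G_1}_X$, and $U_u=Z(\theta_u)X^{b_u}$ for $u\in V^{G_1}_Z$; moreover $N_{G_1}(u)=N_{G_2}(u)$ for every $u\in V^{G_1}_X$. The target is to remove, by pre-composing with a suitable sequence of local complementations on $G_1$ (producing a locally equivalent $G'_1$ still in standard form), the Pauli corrections $Z^{b_u}$, $X^{b_u}$ and, crucially, the Clifford factors $U_u$ on the $\bot$-vertices, so that the residual unitary is exactly $\bigotimes_{u\in V^{G_1}_X}X(\alpha_u)\bigotimes_{v\in V^{G_1}_Z}Z(\beta_v)$.

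The key steps, in order. First I would dispose of the $\bot$-block: the restriction of $U$ to $V^{G_1}_\bot$ is a local Clifford operator that, together with the action on the other qubits, must map a graph state to a graph state; using \cref{prop:LCLC} (and \cref{rm:clifford}) I can realise a local Clifford operator supported on $V^{G_1}_\bot$ — one that equals the relevant Clifford up to Paulis — by a bounded sequence of local complementations on vertices of $V^{G_1}_\bot$. The essential point here is that local complementations on $\bot$-vertices keep the graph in standard form: by the type-transformation table in the proof of \cref{prop:standardform}, a local complementation (or pivot) at a $\bot$-vertex leaves its type $\bot$, and since in standard form every neighbour of an X-vertex is a Z-vertex, no X-vertex is adjacent to a $\bot$-vertex, so its neighbourhood and the XZ-order condition are untouched; likewise no Y-vertices are created. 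Hence after this step we obtain $G'_1$ locally equivalent to $G_1$, still in standard form, with $\ket{G_2}=U'\ket{G'_1}$ where $U'$ now acts as the identity (up to Pauli) on $\bot$-vertices and still as $X(\theta_u)Z^{b_u}$, $Z(\theta_v)X^{b_v}$ on the X- and Z-blocks.

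Next I would absorb the leftover Pauli corrections. On a Z-vertex $v$ the correction $X^{b_v}$ is exactly the local-complementation-generated stabilizer/byproduct $X_vZ_{N(v)}$ up to a $Z$ factor (by \cref{prop:stabilizer}, $X_vZ_{N_G(v)}$ stabilises $\ket G$), and since $N_{G'_1}(v)$ consists of X- and $\bot$-vertices this $Z$-string is already of the allowed form $\prod Z(\beta)$ on Z-vertices plus a Clifford-phase piece on $\bot$ that can be swept back into the previous step; on an X-vertex $u$, the correction $Z^{b_u}$ is handled symmetrically, using that $N_{G'_1}(u)\subseteq V^{G'_1}_Z$ so that the byproduct string $Z_uX_{N(u)}$... — more carefully, I would argue that after conjugating by the already-fixed part one may choose the representative $U$ in \cref{lemma:LUconstraints} with all $b_u=0$, because multiplying $U$ by a stabilizer of $\ket{G'_1}$ (which does not change the equation $\ket{G_2}=U\ket{G'_1}$) lets us clear each $b_u$; one has to check these stabilizer multiplications do not reintroduce corrections outside the X/Z pattern, which again follows from the standard-form adjacency structure. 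What remains is $\bigotimes_{u\in V^{G'_1}_X}X(\theta_u)\bigotimes_{v\in V^{G'_1}_Z}Z(\theta_v)$, and since local complementation does not change which vertices are of type X or Z (it only moved $\bot$ around and swapped X$\leftrightarrow$Z-types only along edges we avoided), $V^{G'_1}_X=V^{G_1}_X$ and $V^{G'_1}_Z=V^{G_1}_Z$; renaming $\alpha_u\defeq\theta_u$ and $\beta_v\defeq\theta_v$ gives the claim.

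The main obstacle I expect is the bookkeeping in the middle step: making sure that the Clifford factors on $\bot$-vertices can be realised by local complementations that stay inside the standard form \emph{and} that doing so does not spoil the X/Z form of $U'$ on the other blocks. Concretely, the Pauli byproducts produced by \cref{prop:LCLC}/\cref{rm:clifford} when implementing the $\bot$-Clifford will in general touch the X- and Z-vertices adjacent to the $\bot$-part, and one must verify these byproducts are of the type $X$ on X-vertices / $Z$ on Z-vertices (consistent with \cref{lemma:LUconstraints}) and can therefore be folded into the $b_u$ Pauli corrections handled in the last step — i.e. one should iterate "implement $\bot$-Clifford" and "clear Paulis" until stable, and argue the process terminates. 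This is the same flavour of argument as the termination analysis in \cref{prop:standardform}, and I would present it with the same kind of monotone quantity.
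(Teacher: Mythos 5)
Your skeleton follows the paper's route (invoke \cref{lemma:sametypes} and \cref{lemma:LUconstraints}, reduce the $\bot$-block to local complementations via \cref{prop:LCLC}, absorb the leftover Paulis), but the central technical step that makes both of your problematic moves work is missing. First, \cref{prop:LCLC} applies to two graph states related by a local Clifford; here the Clifford factors $\bigotimes_{u\in V_\bot}C_u$ act only on a subsystem of an entangled state, so to use \cref{prop:LCLC} you must first prove that $e^{i\phi}\bigotimes_{u\in V_\bot}C_u$ maps the \emph{induced} graph state $\ket{G_1[V_\bot]}$ to $\ket{G_2[V_\bot]}$. Your proposal simply asserts that the $\bot$-restriction "must map a graph state to a graph state", which does not follow from the hypotheses as stated. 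The paper obtains this by applying the projection $\bra{+}_{V_X}\bra{0}_{V_Z}$ to both sides of $\ket{G_2}=U\ket{G_1}$ (after first absorbing the $X^{b_v}$ corrections on $Z$-vertices via the stabilizers $X_vZ_{N(v)}$): using $\bra{0}Z(\theta)=\bra 0$, $\bra{+}X(\theta)=\bra{+}$, $\bra{+}Z=0$ and the independence of $V_X$, this single computation simultaneously yields $\ket{G_2[V_\bot]}=e^{i\phi}\bigotimes_{u\in V_\bot}C_u\ket{G_1[V_\bot]}$ and forces every $b_u=0$ on the $X$-vertices. Second, and relatedly, your plan to "clear each $b_u$" by multiplying by stabilizers and iterating with a monotone quantity is not substantiated and is not obviously possible: clearing a $Z$ at an $X$-vertex $u$ requires a stabilizer $X_DZ_{Odd(D)}$ with $u\in Odd(D)$, and such a stabilizer generically reintroduces $Z$'s at other $X$-vertices and $Y$-type factors, so there is no evident terminating procedure. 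The paper does not clear these corrections; it \emph{proves} they vanish, via the projection argument you omit.

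Two smaller points. When you lift the local complementations on $V_\bot$ from the induced subgraph to the full graph, the discrepancy is not a Pauli byproduct but a $Z(-\pi/2)$ rotation on the neighbours of each complemented $\bot$-vertex outside $V_\bot$; in standard form these neighbours are all of type $Z$ (no $X$--$\bot$ edges), so these rotations fold into the angles $\beta_v$ -- this is the paper's $L'_{a_i}$ versus $L_{a_i}$ distinction, and it is worth stating explicitly rather than as a byproduct-bookkeeping loop. Your observation that local complementations at $\bot$-vertices preserve the standard form and the types (so $V^{G'_1}_X=V^{G_1}_X$, $V^{G'_1}_Z=V^{G_1}_Z$) is correct and is indeed needed for the statement; but without the projection step the argument as proposed does not go through.
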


The proof of \cref{lemma:standardform_implies_rotations} is given in Appendix \ref{sec:proof:standardform_implies_rotations}. Note that if $\ket{G_1}$ and $\ket{G_2}$ are LC$_r$-equivalent, we can choose the angles such that $\alpha_u, \beta_v = 0 \bmod \pi/2^r$. Additionally, $G_1$ and $G'_1$ are related by local complementation only on vertices of type $\bot$. They are strong constraints relating the angles of the X- and Z-rotations acting on different qubits:

\begin{restatable}{lemma}{conditionangles} \label{lemma:cond_angles}
Given $G_1$, $G_2$ in standard form, 
    if $\ket {G_2} = \bigotimes_{u\in  V_X^{G_1}}X(\alpha_u)\bigotimes_{v\in V_Z^{G_1}} Z(\beta_v) \ket{G_1}$,
    \begin{itemize}
        \item $\forall v \in V_Z^{G_1},~ \beta_v  = - \sum_{u \in N_{G_1}(v)\cap V_X^{G_1}}\alpha_u \bmod 2\pi$,
        \item $\forall k\in \mathbb N,\forall K \se V_Z^{G_1}$ of size $k+2$, $\sum_{u \in \Lambda_{G_1}^K \cap V_X^{G_1}}\alpha_u  = 0\bmod \dfrac{\pi}{2^{k+\delta(k)}}$.
    \end{itemize}
\end{restatable}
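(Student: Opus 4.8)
The plan is to use the fact that $\ket{G_2}$ is a graph state, hence a fixed point of all its stabilizers, and to push these stabilizer conditions back through the rotations acting on $\ket{G_1}$. Concretely, for each $D \se V$, $\ket{G_2}$ is stabilized by $\mathcal P^{G_2}_D$. Since $\ket{G_2} = W\ket{G_1}$ with $W = \bigotimes_{u\in V_X^{G_1}}X(\alpha_u)\bigotimes_{v\in V_Z^{G_1}} Z(\beta_v)$, the operator $W^\dagger \mathcal P^{G_2}_D W$ stabilizes $\ket{G_1}$. The idea is to choose $D$ judiciously — in particular $D$ supported inside $V_Z^{G_1}$ or a singleton in $V_X^{G_1}$ — so that $W^\dagger \mathcal P^{G_2}_D W$ remains a tensor product of nice one-qubit operators whose eigenvectors can be compared against $\ket{G_1}$'s own stabilizers.

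The first part (the equation $\beta_v = -\sum_{u\in N_{G_1}(v)\cap V_X^{G_1}} \alpha_u \bmod 2\pi$) should follow by considering, for $v\in V_Z^{G_1}$, the stabilizer of $\ket{G_1}$ at $v$, namely $X_v Z_{N_{G_1}(v)}$ (up to sign from \cref{prop:stabilizer} with $D=\{v\}$). Since $G_1$ is in standard form, the neighbours of $v$ split into vertices of type X (which carry $X(\alpha_u)$) and possibly others; the crucial point, using \cref{lemma:sametypes}-type reasoning, is that conjugating $X_v$ by $Z_v(\beta_v)$ produces a relative phase $e^{-i\beta_v}$ on the $\ket 1_v$ component, while on each X-type neighbour $u$ conjugating the $Z_u$ factor by $X_u(\alpha_u)$ likewise produces phases $e^{\pm i\alpha_u}$. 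Matching the requirement that the resulting operator still stabilizes $\ket{G_1}$ forces the angle sum to vanish modulo $2\pi$. I would write this out as a short commutation computation: $W^\dagger (X_v Z_{N_{G_1}(v)}) W$ must equal $X_v Z_{N_{G_1}(v)}$ up to no phase, and tracking the accumulated phases gives the relation.

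The second part is the genuine content and will be the main obstacle. Here I would take $K \se V_Z^{G_1}$ of size $k+2$ and use a stabilizer of $\ket{G_2}$ supported (after the rotations are accounted for) on $K$ together with $\Lambda_{G_1}^K \cap V_X^{G_1}$. The natural candidate generator is $D = \Lambda_{G_1}^K \cap V_X^{G_1}$ (a subset of X-type vertices, hence independent in standard form), for which $\mathcal P^{G_1}_D = \pm X_D Z_{Odd_{G_1}(D)}$; one computes $Odd_{G_1}(D)$ and observes that $K$ is contained in the common neighbourhood structure. Conjugating $X_D$ by $\bigotimes_{u\in D} X_u(\alpha_u)$ is trivial, but conjugating the $Z$-part on $K$ by $\bigotimes_{v\in K} Z_v(\beta_v)$, and crucially conjugating by the $\ket{G_1}\to\ket{G_2}$ edge changes (encoded in how $Odd$ differs), produces a global phase of the form $e^{i \sum_{u\in \Lambda_{G_1}^K\cap V_X^{G_1}} c\,\alpha_u}$ that must be trivial; the arithmetic of why the modulus is $\pi/2^{k+\delta(k)}$ rather than $2\pi$ comes from combining $2^{k+2}$-many contributions (one per subset of $K$ of appropriate parity) that partially cancel. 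This is essentially the graph-state analogue of the $r$-incidence counting condition, and matching the power of $2$ is where the $\delta(k)$ correction (the $k=0$ special case, pairs of vertices) enters, exactly as in the definition of $r$-incidence.

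Concretely I would proceed as follows. First establish the singleton/$v$-stabilizer computation to get the first bullet. Second, fix $K$ and set $D = \Lambda_{G_1}^K \cap V_X^{G_1}$; expand $\mathcal P^{G_2}_D$ as a stabilizer of $\ket{G_2}$ and write $W^\dagger \mathcal P^{G_2}_D W$ in terms of $\ket{G_1}$-operators, substituting the first bullet to eliminate the $\beta_v$. Third, since $G_1$ and $G_2$ are in standard form with the same types (\cref{lemma:sametypes}), the only discrepancy between $\mathcal P^{G_2}_D$ and $\mathcal P^{G_1}_D$ is on the edges inside $V_Z^{G_1}$, and the phase picked up is $(-1)^{|G_2[D']| - |G_1[D']|}$-type together with the accumulated rotation phases; requiring $W^\dagger \mathcal P^{G_2}_D W = \mathcal P^{G_1}_D$ (both stabilize $\ket{G_1}$, which is phase-rigid) forces a single scalar equation. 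Fourth, extract from that scalar equation, by a $2$-adic counting over subsets of $K$, the stated congruence $\sum_{u\in\Lambda_{G_1}^K\cap V_X^{G_1}}\alpha_u = 0 \bmod \pi/2^{k+\delta(k)}$. I expect steps three and four — correctly bookkeeping the phase contributions from the $\ket{G_1}$-versus-$\ket{G_2}$ edge differences and carrying out the $2$-adic cancellation that yields the exact denominator — to be the delicate part; the rest is routine conjugation algebra with $X(\alpha)$, $Z(\beta)$ and Pauli operators.
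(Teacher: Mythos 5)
There is a genuine gap, and it sits exactly where you predicted the difficulty would be: your choice of generator for the second bullet cannot produce any constraint. Take $D=\Lambda_{G_1}^K\cap V_X^{G_1}$. In standard form $D$ is an independent set of type-X vertices all of whose neighbours are of type Z, so $\mathcal P^{G_1}_D=X_DZ_{Odd_{G_1}(D)}$ has Pauli $X$'s only on qubits carrying $X(\alpha_u)$ rotations and Pauli $Z$'s only on qubits carrying $Z(\beta_v)$ rotations; a Pauli $Z$ commutes with a $Z$-rotation (contrary to the phase $e^{i\sum c\,\alpha_u}$ you expect from ``conjugating the $Z$-part on $K$ by the $Z_v(\beta_v)$''), and a Pauli $X$ commutes with an $X$-rotation. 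Hence $W=\bigotimes_{u\in V_X^{G_1}}X(\alpha_u)\bigotimes_{v\in V_Z^{G_1}}Z(\beta_v)$ commutes with $\mathcal P^{G_1}_D$; moreover, by \cref{lemma:sametypes}, $N_{G_1}(u)=N_{G_2}(u)$ for $u$ of type X, so $Odd_{G_1}(D)=Odd_{G_2}(D)$, $|G_1[D]|=|G_2[D]|=0$, and $\mathcal P^{G_2}_D=\mathcal P^{G_1}_D$. Pulling the stabilizer of $\ket{G_2}$ generated by $D$ back through $W$ therefore yields only the tautology $\mathcal P^{G_1}_D\ket{G_1}=\ket{G_1}$: this stabilizer sees neither the angles nor the $G_1$-versus-$G_2$ edge differences (which lie inside $V_Z^{G_1}$, not in $D\cup Odd(D)$'s $X$/$Z$ pattern), so no scalar equation about $\sum_{u\in\Lambda_{G_1}^K\cap V_X^{G_1}}\alpha_u$ can come out of it.

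The informative generators are the sets $K\se V_Z^{G_1}$ themselves: for those, $X_K$ meets the $Z(\beta_v)$'s and $Z_{Odd_{G_1}(K)}$ meets the $X(\alpha_u)$'s with $u\in Odd_{G_1}(K)\cap V_X^{G_1}$. Since conjugating a Pauli by a generic rotation is not a phase times a Pauli (which also makes your first-bullet step ``$W^\dagger(X_vZ_{N_{G_1}(v)})W$ must equal $X_vZ_{N_{G_1}(v)}$ up to no phase'' incorrect as an operator identity, though that part is fixable), the paper extracts a scalar by inserting both $\mathcal P^{G_2}_K$ and $\mathcal P^{G_1}_K$, commuting the Paulis through $W$ via $X(\theta)Z=e^{-i\theta}ZX(-\theta)$ and $Z(\theta)X=e^{-i\theta}XZ(-\theta)$, and projecting with $\bra{+}_{V_X^{G_1}}\bra{0}_{V\setminus V_X^{G_1}}$; this gives $\sum_{v\in K}\beta_v+\sum_{u\in Odd_{G_1}(K)\cap V_X^{G_1}}\alpha_u=|G_1\Delta G_2[K]|\,\pi\bmod 2\pi$ (\cref{lem:cond_angles}), where the edge differences enter only through the parity on the right. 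The second missing ingredient is the passage from these odd-neighbourhood sums to common-neighbourhood sums: one uses $\sum_{u\in Odd_{G_1}(K)}\alpha_u=\sum_{R\subseteq K}(-2)^{|R|-1}\sum_{u\in\Lambda^R_{G_1}}\alpha_u$ and induction on $|K|$, the base case $|K|=2$ giving $2\sum_{u\in\Lambda_{G_1}^K\cap V_X^{G_1}}\alpha_u=0\bmod\pi$ (whence the $\delta(k)$ correction) and the inductive step isolating $2^{|K|-1}\sum_{u\in\Lambda_{G_1}^K\cap V_X^{G_1}}\alpha_u=0\bmod 2\pi$. Your ``$2$-adic counting over subsets of $K$'' is the right intuition for this last step, but without a usable scalar equation from a generator that actually interacts with $W$, there is nothing to count.
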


The proof of \cref{lemma:cond_angles} is given in Appendix \ref{sec:proof:cond_angles}. The constraints coincide with $r$-incidence, so, when angles are multiples of $\pi/2^r$, this unitary transformation implements an $r$-local complementation.

\begin{lemma} \label{lemma:implements_lc}
    Given $G_1$, $G_2$ in standard form, if $\ket {G_2} = \bigotimes_{u\in  V_X^{G_1}}X(\alpha_u)\bigotimes_{v\in V_Z^{G_1}} Z(\beta_v) \ket{G_1}$ and $\forall u \in V_X^{G_1}$, $\alpha_u= 0\bmod \pi/2^r$, then $\bigotimes_{u\in V_X^{G_1}}X(\alpha_u)\bigotimes_{v\in V_Z^{G_1}} Z(\beta_v)$ implements an $r$-local complementation over the vertices of $V_X^{G_1}$.
\end{lemma}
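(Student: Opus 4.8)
The plan is to define a multiset $S$ supported on $V_X^{G_1}$ whose multiplicities encode the angles $\alpha_u$, and then verify that $S$ is $r$-incident and independent, so that $G_1 \star^r S$ is valid, and finally that the transformation it induces coincides with the one implemented by the given unitary. Concretely, write each $\alpha_u = s_u \pi / 2^r \bmod 2\pi$ with $s_u \in \{0,\dots,2^{r+1}-1\}$ (or rather $\bmod\ 2^{r+1}$, but since doubling multiplicities is handled by \cref{prop:monotonicity} we may as well take $s_u \in \mathbb{N}$ and reduce later), and set $S(u) = s_u$ for $u \in V_X^{G_1}$ and $S(u) = 0$ otherwise. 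Since $G_1$ is in standard form, $V_X^{G_1}$ is an independent set, hence $S$ is an independent multiset. This is the easy part.

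Next I would check $r$-incidence of $S$. By definition this requires, for every $k \in [0,r)$ and every $K \subseteq V \setminus \supp(S) = V \setminus V_X^{G_1}$ of size $k+2$, that $S \bullet \Lambda_{G_1}^K$ be a multiple of $2^{r-k-\delta(k)}$. Now $S \bullet \Lambda_{G_1}^K = \sum_{u \in \Lambda_{G_1}^K \cap V_X^{G_1}} s_u$, and by the second bullet of \cref{lemma:cond_angles} we have $\sum_{u \in \Lambda_{G_1}^K \cap V_X^{G_1}} \alpha_u = 0 \bmod \pi/2^{k+\delta(k)}$, i.e. $\sum_{u} s_u \pi/2^r = 0 \bmod \pi/2^{k+\delta(k)}$, which is exactly $\sum_u s_u = 0 \bmod 2^{r-k-\delta(k)}$. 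Strictly, \cref{lemma:cond_angles} quantifies over $K \subseteq V_Z^{G_1}$, while $r$-incidence quantifies over $K \subseteq V \setminus V_X^{G_1}$, so I need to observe that if $K$ contains a vertex outside $V_X^{G_1} \cup V_Z^{G_1}$ (i.e. a $\bot$-vertex) then, since in standard form every neighbour of an X-vertex is a Z-vertex, no X-vertex is a common neighbour of $K$, so $\Lambda_{G_1}^K \cap V_X^{G_1} = \emptyset$ and $S \bullet \Lambda_{G_1}^K = 0$, trivially divisible. Hence $r$-incidence holds, so $G_1 \star^r S$ is valid.

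It then remains to compare the two resulting graphs. By \cref{prop:clif}, $\ket{G_1 \star^r S}$ equals $\bigotimes_{u}X(S(u)\pi/2^r)\bigotimes_v Z\bigl(-\frac{\pi}{2^r}\sum_{u\in N_{G_1}(v)}S(u)\bigr)\ket{G_1}$. The X-part is $\bigotimes_{u\in V_X^{G_1}}X(\alpha_u)$ by construction of $S$ (up to the $2\pi$-reductions, which do not change $X(\cdot)$). For the Z-part, $\sum_{u\in N_{G_1}(v)}S(u) = \sum_{u\in N_{G_1}(v)\cap V_X^{G_1}} s_u$, and this is $0$ unless $v$ has an X-neighbour; in standard form an X-neighbour forces $v$ to be a Z-vertex, so the Z-rotations are supported on $V_Z^{G_1}$, with angle $-\frac{\pi}{2^r}\sum_{u\in N_{G_1}(v)\cap V_X^{G_1}} s_u = -\sum_{u\in N_{G_1}(v)\cap V_X^{G_1}}\alpha_u$, which by the first bullet of \cref{lemma:cond_angles} equals $\beta_v \bmod 2\pi$. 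Therefore $\ket{G_1 \star^r S} = \bigotimes_{u\in V_X^{G_1}}X(\alpha_u)\bigotimes_{v\in V_Z^{G_1}}Z(\beta_v)\ket{G_1} = \ket{G_2}$, and since graph states are determined by their graphs, $G_2 = G_1 \star^r S$, i.e. the unitary implements an $r$-local complementation over $V_X^{G_1}$ (with multiplicities; by the final remark of \cref{sec:r_lc} one may reduce multiplicities mod $2^r$, though here one should keep them mod $2^{r+1}$ and invoke \cref{prop:monotonicity} if one insists on a genuine set).

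The main obstacle is bookkeeping rather than depth: one must be careful about the distinction between $K \subseteq V_Z^{G_1}$ in \cref{lemma:cond_angles} versus $K \subseteq V \setminus \supp(S)$ in the definition of $r$-incidence (resolved by the standard-form adjacency structure), and about $2\pi$ versus $2^{r+1}$ reductions of the integer multiplicities $s_u$ — the cleanest route is to allow $S$ to be a genuine multiset and not attempt to force multiplicities below $2^r$. Everything else is a direct substitution between the angle conditions of \cref{lemma:cond_angles}, the divisibility conditions of $r$-incidence, and the explicit unitary of \cref{prop:clif}.
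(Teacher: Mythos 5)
Your proof is correct and follows essentially the same route as the paper's: define a multiset $S$ on $V_X^{G_1}$ encoding the angles, check independence and $r$-incidence via the second bullet of \cref{lemma:cond_angles}, match the $Z$-angles via its first bullet, and conclude through \cref{prop:clif}; you are in fact more careful than the paper about the mismatch between $K\subseteq V_Z^{G_1}$ in \cref{lemma:cond_angles} and $K\subseteq V\setminus\supp(S)$ in the definition of $r$-incidence, and about the modular reduction of the multiplicities. The only negligible slip is the claim $V\setminus\supp(S)=V\setminus V_X^{G_1}$ (false if some $\alpha_u=0$), but your own standard-form observation — a common neighbourhood of a set containing an X-vertex lies inside a set of Z-vertices, hence meets $V_X^{G_1}$ trivially — handles those extra sets $K$ verbatim.
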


\begin{proof}
    Let us construct an $r$-incident independent multiset S such that $G_2 =  G_1 \star^r S$. We define $S$ on the vertices of $V_X^{G_1}$ such that $\forall u \in V_X^{G_1}$, $\alpha_u = \frac{S(u)\pi}{2^r}\bmod 2\pi$ with $S(u) \in [1,2^r)$. Note that $\sum_{u \in \Lambda_{G_1}^K \cap V_X^{G_1}}\alpha_u = \frac{\pi}{2^r}S\bullet \Lambda_G^K$. Hence, by \cref{lemma:cond_angles}, For any $k\in [0,r)$, and any $K\subseteq V\setminus S$ of size $k+2$, $S\bullet \Lambda_G^K$ is a multiple of $2^{r-k-\delta(k)}$ meaning that S is $r$-incident.    
    Also, by \cref{lemma:cond_angles}, $\beta_v = -\frac{\pi}{2^r} \sum_{u \in N_{G_1}(v)}S(u)\bmod 2\pi$. Thus, $\bigotimes_{u\in V_X^{G_1}}X(\alpha_u)\bigotimes_{v\in V_Z^{G_1}} Z(\beta_v)$ implements an $r$-local complementation on $S$.
\end{proof}

We can now easily relate LC$_r$-equivalence to $r$-local complementations for graphs in standard form:

\begin{lemma}
    \label{lemma:standardform_LCr_lc}
    If $G_1$ and $G_2$ are both in standard form and $\ket{G_1} =_{LC_{r}}\ket{G_2}$, then $G_1$ and $G_2$ are related by a sequence of local complementations on the vertices of type $\bot$ along with a single $r$-local complementation over the vertices of type X.
\end{lemma}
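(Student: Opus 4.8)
The plan is to obtain the statement by composing \cref{lemma:standardform_implies_rotations} and \cref{lemma:implements_lc}, using \cref{lemma:sametypes} to reconcile the vertex types along the way; essentially all of the work has already been done in those lemmas.

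First I would note that LC$_r$-equivalence implies LU-equivalence, so \cref{lemma:standardform_implies_rotations} applies: there is a graph $G_1'$ in standard form, locally equivalent to $G_1$, with $\ket{G_2} = \bigotimes_{u\in V_X^{G_1}}X(\alpha_u)\bigotimes_{v\in V_Z^{G_1}}Z(\beta_v)\ket{G_1'}$, and by the addendum to that lemma (the LC$_r$ case) the angles may be chosen so that $\alpha_u = 0 \bmod \pi/2^r$ for every $u$, while $G_1$ and $G_1'$ are related by local complementations performed only on vertices of type $\bot$. Since $G_1$ and $G_1'$ are both in standard form and LU-equivalent, \cref{lemma:sametypes} tells us they assign the same type to every vertex, so in particular $V_X^{G_1} = V_X^{G_1'}$ and $V_Z^{G_1} = V_Z^{G_1'}$. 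Hence the hypotheses of \cref{lemma:implements_lc} are met with $G_1'$ in the role of ``$G_1$'': the graphs $G_1'$ and $G_2$ are in standard form, $\ket{G_2}$ is obtained from $\ket{G_1'}$ by $X$- and $Z$-rotations on the type-X and type-Z vertices, and every $\alpha_u$ is a multiple of $\pi/2^r$.

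Applying \cref{lemma:implements_lc} (and its proof) then yields an $r$-incident independent multiset $S$ supported on $V_X^{G_1'} = V_X^{G_1}$ with $G_2 = G_1'\star^r S$; that is, a single $r$-local complementation over the vertices of type X transforms $G_1'$ into $G_2$. Combined with the fact that $G_1$ is taken to $G_1'$ by a sequence of local complementations on vertices of type $\bot$, this is exactly the asserted decomposition. I do not anticipate a genuine obstacle here: all the substance lives in \cref{lemma:standardform_implies_rotations} and \cref{lemma:implements_lc}, and the only point requiring care is that the set of type-X vertices is preserved in passing from $G_1$ to $G_1'$ — which is precisely what \cref{lemma:sametypes} supplies, the intervening local complementations being on type-$\bot$ vertices.
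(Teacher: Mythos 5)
Your proposal is correct and follows essentially the same route as the paper: invoke \cref{lemma:standardform_implies_rotations} (with its LC$_r$ addendum for the angle condition) to reduce to X- and Z-rotations on a standard-form graph $G'_1$ reached from $G_1$ by local complementations on type-$\bot$ vertices, then apply \cref{lemma:implements_lc} to read off the single $r$-local complementation over the type-X vertices. Your explicit appeal to \cref{lemma:sametypes} to identify $V_X^{G_1}$ with $V_X^{G'_1}$ only makes precise a point the paper's proof leaves implicit, so there is no substantive difference.
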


\begin{remark}
    The sequence of local complementations commutes with the $r$-local complementation, as vertices of type X and vertices of type $\bot$ do not share edges by definition of the standard form.
\end{remark}

\begin{proof}
    By \cref{lemma:standardform_implies_rotations}, there exists $G'_1$ locally equivalent to $G_1$ in standard form such that $\ket {G_2} = \bigotimes_{u\in V_X}X(\alpha_u)\bigotimes_{v\in V_Z} Z(\beta_v) \ket{G'_1}$ where $V_X$ (resp. $V_Z$) denotes the set of vertices of type X (resp. Z) in $G'_1$ and $G_2$, and $\forall u \in V$ of type X or Z, $\alpha_u, \beta_u = 0\bmod \pi/2^r$. By \cref{lemma:implements_lc}, $\bigotimes_{u\in V_X}X(\alpha_u)\bigotimes_{v\in V_Z} Z(\beta_v)$ implements an $r$-local complementation over the vertices of type X.
\end{proof}

Notice in particular that when there is no vertex of type $\bot$, a single $r$-local complementation is required.

\begin{corollary}\label{cor:constraint_r_lc}
    If  two $\bot$-free $r$-locally equivalent graphs $G_1$ and $G_2$ are both in standard form, they are related by a single $r$-local complementation on the vertices of type X.
\end{corollary}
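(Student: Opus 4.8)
The plan is to obtain this as a quick consequence of \cref{lemma:standardform_LCr_lc}, after reinterpreting $r$-local equivalence of graphs as $=_{LC_r}$ at the level of the corresponding graph states. First I would note that if $G_1$ and $G_2$ are $r$-locally equivalent, i.e. some sequence of $r$-local complementations carries $G_1$ to $G_2$, then by \cref{prop:clif} each of these $r$-local complementations is realised on the graph states by a unitary of the form $\bigotimes_u X(S(u)\pi/2^r)\bigotimes_v Z(-\frac{\pi}{2^r}\sum_{u\in N_{G}(v)}S(u))$, which lies in LC$_r$; since LC$_r$ is a group, composing the whole sequence gives $\ket{G_1} =_{LC_r}\ket{G_2}$.

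Next I would apply \cref{lemma:standardform_LCr_lc}: since $G_1$ and $G_2$ are both in standard form and LC$_r$-equivalent, they are related by a sequence of local complementations on vertices of type $\bot$ together with a single $r$-local complementation over the vertices of type X. Here the $\bot$-free hypothesis is what finishes the argument: by assumption $V^{G_1}_\bot = V^{G_2}_\bot = \emptyset$ (these two emptiness statements are in any case consistent by \cref{lemma:LUbottom}, as LC$_r$-equivalence implies LU-equivalence), so the local-complementation part of the sequence is vacuous, and what remains is precisely one $r$-local complementation over $V^{G_1}_X$, as claimed.

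The only point I would take care to check --- and really the sole candidate for an obstacle --- is that the machinery behind \cref{lemma:standardform_LCr_lc} does not silently create a $\bot$-typed vertex in some intermediate graph. Tracing through, the auxiliary graph $G'_1$ furnished by \cref{lemma:standardform_implies_rotations} is related to $G_1$ by local complementations on type-$\bot$ vertices only; in the $\bot$-free case this forces $G'_1 = G_1$, so no such vertex ever arises and the reduction applies verbatim. Consequently the corollary follows with no additional computation.
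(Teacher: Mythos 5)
Your proof is correct and follows essentially the same route as the paper: pass from $r$-local equivalence to LC$_r$-equivalence of the graph states via \cref{prop:clif}, then apply \cref{lemma:standardform_LCr_lc} and observe that the $\bot$-free hypothesis makes the local-complementation part of the sequence vacuous. Your extra check that $G'_1 = G_1$ in the $\bot$-free case is careful but not needed, since the statement of \cref{lemma:standardform_LCr_lc} already delivers the conclusion.
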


We are ready to prove that $r$-local equivalence coincides with LC$_r$-equivalence.

\begin{theorem}\label{thm:LCr_lc}
    The following properties are equivalent:
    \begin{enumerate}
        \item $\ket{G_1}$ and $\ket{G_2}$ are LC$_r$-equivalent.
        \item $G_1$ and $G_2$ are $r$-locally equivalent.
        \item $G_1$ and $G_2$ are related by a sequence of local complementations along with a single $r$-local complementation.
    \end{enumerate}
\end{theorem}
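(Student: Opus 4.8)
The plan is to prove the cycle of implications $1 \Rightarrow 3 \Rightarrow 2 \Rightarrow 1$, reducing each to results already established in the excerpt. The implication $3 \Rightarrow 2$ is immediate: a sequence of local complementations is in particular a sequence of $1$-local complementations, hence of $r$-local complementations by \cref{prop:monotonicity} (doubling multiplicities), so a sequence of local complementations together with one $r$-local complementation is a sequence of $r$-local complementations. The implication $2 \Rightarrow 1$ follows from \cref{prop:clif}: each $r$-local complementation $G \star^r S$ is implemented on the graph state by a tensor product of $X(S(u)\pi/2^r)$ and $Z(-\frac{\pi}{2^r}\sum_{u\in N_G(v)}S(u))$ gates, and since every angle is a multiple of $\pi/2^r$, each such single-qubit gate lies in $\mathrm{LC}_r = \langle H, Z(\pi/2^r)\rangle$; composing finitely many such local unitaries stays in $\mathrm{LC}_r$, so $r$-local equivalence of the graphs implies $\mathrm{LC}_r$-equivalence of the graph states.

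The substantive direction is $1 \Rightarrow 3$. First I would invoke \cref{prop:standardform} to pick $G_1'$ and $G_2'$ in standard form that are locally equivalent (via usual local complementations) to $G_1$ and $G_2$ respectively. Since local complementation is implemented by Clifford operators in $\mathrm{LC}_1 \subseteq \mathrm{LC}_r$, $\ket{G_1'}$ and $\ket{G_2'}$ are still $\mathrm{LC}_r$-equivalent. Now \cref{lemma:standardform_LCr_lc} applies directly: $G_1'$ and $G_2'$ are related by a sequence of local complementations on vertices of type $\bot$ together with a single $r$-local complementation over vertices of type X. Concatenating: $G_1 \to G_1'$ (local complementations) $\to$ (local complementations on $\bot$ vertices, plus one $r$-local complementation) $\to G_2' \to G_2$ (local complementations), which exhibits $G_1$ and $G_2$ as related by a sequence of local complementations and a single $r$-local complementation — exactly property 3.

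The main obstacle is really upstream of this theorem: it lives in \cref{lemma:standardform_LCr_lc} and the chain of lemmas feeding it (\cref{lemma:standardform_implies_rotations}, \cref{lemma:cond_angles}, \cref{lemma:implements_lc}), which together show that once both graphs are in standard form, the connecting local unitary can be normalised to a product of X-rotations on type-X vertices and Z-rotations on type-Z vertices with angles multiples of $\pi/2^r$, and that such a product is exactly an $r$-local complementation. Given those results as black boxes, the proof of \cref{thm:LCr_lc} itself is a short bookkeeping argument: the only points requiring care are (i) checking that passing to standard form preserves $\mathrm{LC}_r$-equivalence (because Clifford $\subseteq \mathrm{LC}_r$), and (ii) noting that a single $r$-local complementation survives pre- and post-composition with arbitrary sequences of local complementations, so the "sandwiching" by the standard-form reductions does not increase the count of $r$-local complementations beyond one.
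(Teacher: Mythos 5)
Your proposal is correct and follows essentially the same route as the paper: $2\Rightarrow1$ via \cref{prop:clif}, $3\Rightarrow2$ by viewing ordinary local complementations as $r$-local ones, and $1\Rightarrow3$ via \cref{prop:standardform} together with \cref{lemma:standardform_LCr_lc}. Your justification of $3\Rightarrow2$ through \cref{prop:monotonicity} (raising multiplicities so that a $1$-local complementation becomes a valid $r$-local complementation) is in fact a slightly more careful spelling-out of the paper's one-line claim, and your bookkeeping about Clifford $\subseteq \mathrm{LC}_r$ when passing to standard form matches what the paper leaves implicit.
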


\begin{proof}
    We proceed by cyclic proof.
    ($2\Rightarrow1$):~ Follows from \cref{prop:clif}. 
    ($3  \Rightarrow  2$):~ A local complementation is, in particular, an $r$-local complementation. 
    ($1  \Rightarrow  3$):~ Follows from \cref{prop:standardform} along with \cref{lemma:standardform_LCr_lc}.  
\end{proof}

More than just LC$_r$-equivalence, $r$-local complementations can actually characterise the LU-equivalence of graph states.

\begin{restatable}{theorem}{lulcr} \label{thm:LU_imply_LCr}
    If $\ket{G_1}$ and $\ket{G_2}$ are LU-equivalent then $G_1$ and $G_2$ are $(\lfloor n/2 \rfloor-1)$-locally equivalent, where $n$ is the order of the graphs.
\end{restatable}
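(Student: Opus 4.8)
The plan is to route the statement through the standard form and reduce it to a quantisation statement about rotation angles, using the chain of lemmas already established. First I would invoke \cref{prop:standardform} to replace $G_1$ and $G_2$ by locally equivalent graphs in standard form: a local complementation is in particular an $r$-local complementation for every $r\ge1$, and $r$-local equivalence is an equivalence relation for which $r$-local equivalence implies $(r{+}1)$-local equivalence (\cref{prop:monotonicity}), so this substitution cannot affect what we must prove. Then \cref{lemma:standardform_implies_rotations} yields a graph $G_1'$ locally equivalent to $G_1$, in standard form, with $\ket{G_2}=U\ket{G_1'}$ for $U=\bigotimes_{u\in V^{G_1'}_X}X(\alpha_u)\bigotimes_{v\in V^{G_1'}_Z}Z(\beta_v)$. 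By \cref{lemma:implements_lc}, if the $\alpha_u$ can be chosen to be multiples of $\pi/2^{\lfloor n/2\rfloor-1}$, then $U$ implements a $(\lfloor n/2\rfloor-1)$-local complementation, so $G_2=G_1'\star^{\lfloor n/2\rfloor-1} S$ for a suitable multiset $S$ on $V^{G_1'}_X$; combined with $G_1'$ being locally (hence, by \cref{prop:monotonicity}, $(\lfloor n/2\rfloor-1)$-locally) equivalent to $G_1$, this finishes the proof. So the whole content is to choose the angles as multiples of $\pi/2^{\lfloor n/2\rfloor-1}$.

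The second step normalises $U$. A direct expansion of $\ket{G_2}=U\ket{G_1'}$ in the computational basis (cf.\ the proof of \cref{lemma:cond_angles}), which is simple here because $V^{G_1'}_X$ is independent and all neighbours of its vertices lie in $V^{G_1'}_Z$, shows that the family $(\alpha_u)_u$ enters the resulting system of equations only through the sums $\sum_{u\in V^{G_1'}_X\cap\,\mathrm{Odd}_{G_1'}(Y)}\alpha_u$, $Y\subseteq V$. Three consequences: an isolated vertex of $V^{G_1'}_X$ never occurs, so its angle may be set to $0$; a pendant vertex $u$ of $V^{G_1'}_X$ with unique neighbour $v$ always occurs together with $\beta_v$, so its angle may be transferred into $\beta_v$ and set to $0$; and two vertices of $V^{G_1'}_X$ with the same neighbourhood always occur together, so their angles may be merged onto one of them. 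After finitely many such substitutions I obtain a unitary $U'$ of the same form, still satisfying $\ket{G_2}=U'\ket{G_1'}$, in which every $u\in V^{G_1'}_X$ with $\alpha_u\ne0$ has $|N_{G_1'}(u)|\ge2$ and is the unique vertex of $V^{G_1'}_X$ with neighbourhood $N_{G_1'}(u)$.

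The third step is the counting. Fix $u\in V^{G_1'}_X$ with $\alpha_u\ne0$ and set $B=N_{G_1'}(u)\subseteq V^{G_1'}_Z$, so $|B|\ge2$. For $A\subseteq V^{G_1'}_Z$ with $|A|\ge2$ put $f(A)=\sum_{w\in\Lambda_{G_1'}^A\cap V^{G_1'}_X}\alpha_w$; applying \cref{lemma:cond_angles} to $\ket{G_2}=U'\ket{G_1'}$ gives $f(A)\equiv0\bmod\pi/2^{\max(|A|-2,\,1)}$. Möbius inversion over the Boolean lattice of subsets of $V^{G_1'}_Z$ then gives $\alpha_u=\sum_{w:\,N_{G_1'}(w)=B}\alpha_w=\sum_{B\subseteq A\subseteq V^{G_1'}_Z}(-1)^{|A|-|B|}f(A)$, an integer combination of the $f(A)$. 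Every $A$ occurring here satisfies $2\le|A|\le|V^{G_1'}_Z|\le\lfloor n/2\rfloor$ by \cref{lemma:less_than_half_Z}, so $\alpha_u$ is a multiple of $\pi/2^{\max(\lfloor n/2\rfloor-2,\,1)}$, hence of $\pi/2^{\lfloor n/2\rfloor-1}$ (using $\lfloor n/2\rfloor\ge2$; for $n\le3$ the claim degenerates to the classical coincidence of LU- and local equivalence). The angles that were set to $0$ satisfy this trivially, and \cref{lemma:cond_angles} then forces every $\beta_v$ to be a multiple of $\pi/2^{\lfloor n/2\rfloor-1}$ as well; applying \cref{lemma:implements_lc} as above concludes.

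The step I expect to be the obstacle is the normalisation in the second paragraph: making ``the angles matter only through those sums'' precise enough to guarantee that after the substitutions one still has $\ket{G_2}=U'\ket{G_1'}$ exactly (not merely up to a global phase or up to a different graph). The twin case is the delicate one — without merging the angles of twins an individual $\alpha_u$ need not be a multiple of any $\pi/2^r$, so \cref{lemma:implements_lc} could not be invoked. Once the normalisation is in place, the Möbius computation and the bound $\lfloor n/2\rfloor-1$ (which is not tight — $\lfloor n/2\rfloor-2$ already comes out of the argument) are routine.
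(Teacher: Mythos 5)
Your proposal is correct and follows essentially the same route as the paper's proof: reduction to standard form via \cref{prop:standardform} and \cref{lemma:standardform_implies_rotations}, the same three WLOG normalisations (isolated vertices, leaves, false twins — each justified exactly by the stabilizer identity $X_u\ket{G}=Z_{N_G(u)}\ket{G}$, so the exactness worry you flag is not an obstacle), then \cref{lemma:cond_angles}, \cref{lemma:less_than_half_Z} and \cref{lemma:implements_lc}. The only deviation is that you extract the individual $\alpha_u$ from the constrained sums by a one-shot M\"obius inversion over subsets of $V_Z$, where the paper runs a downward induction on $K\subseteq V_Z$; this is an equivalent bookkeeping of the same counting, and it even recovers the paper's parenthetical sharper bound $\pi/2^{\lfloor n/2\rfloor-2}$ when $|V_Z|\geqslant 3$.
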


The proof of \cref{thm:LU_imply_LCr} is given in \cref{sec:proof:LU_imply_LCr}. Remarkably, this result implies that for graph states, LU-equivalence reduces to equivalence up to operators in $\bigcup_{r \in \mathbb N} LC_r$, and even -- as it has been hinted in \cite{gross2007lu} -- to operators of the form $C_1 \bigotimes_{u\in V}Z(\alpha_u) C_2$ where $C_1, C_2$ are local Clifford operators and the angles are multiples of $\pi/2^r$ for some integer $r$. Notice that these local operators are actually those of the well-known Clifford hierarchy (see for example \cite{anderson2024groups}).

\subsection{Graph states whose class of LC and LU-equivalence coincide}

Since the LU-LC conjecture was disproven in \cite{Zeng07}, classes of graph states whose class of $LC_1$- and LU-equivalence coincide has been a subject of interest. We say that $LU \Leftrightarrow LC$ holds for a graph $G$ if $\ket{G} =_\text{LU} \ket{G'} \Leftrightarrow  \ket{G} =_{\text{LC}_1} \ket{G'}$ for any graph $G'$. It is known that $LU \Leftrightarrow LC$  holds for a graph $G$ if either one of the following is true: \textbf{(1)}~$G$ is of order at most 8 \cite{Hein06, CABELLO20092219}; \textbf{(2)}~$G$ is a complete graph \cite{VandenNest05}; \textbf{(3)}~every vertex of $G$ is of type $\bot$ (this is referred as the \textit{minimal support condition}) \cite{VandenNest05}; \textbf{(4)}~the graph obtained by removing all leafs (i.e. vertices of degree 1) vertices of $G$ has only vertices of type $\bot$ \cite{Zeng07};~\textbf{(5)} $G$ has no cycle of length 3 or 4 \cite{Zeng07};~\textbf{(6)} the stabilizer of $\ket G$ has rank less than 6 \cite{Ji07}. A review of these necessary conditions, especially for \textbf{(6)}, can be found in \cite{tzitrin2018local}. Our results imply a new criterion for $LU \Leftrightarrow LC$ based on the standard form introduced in the last section. This criterion is actually a sufficient and necessary condition, meaning it can be used to prove both that $LU \Leftrightarrow LC$ holds for some graph, or the converse.

\begin{restatable}{proposition}{lulc} \label{prop:lu-lc}
    Given a graph $G$, the following are equivalent: \begin{itemize}
        \item $LU \Leftrightarrow LC$ holds for $G$.
        \item For \textbf{some} graph locally equivalent to $G$ that is in standard form, any $r$-local complementation over the vertices of type X can be implemented by local complementations.
        \item For \textbf{any} graph locally equivalent to $G$ that is in standard form, any $r$-local complementation over the vertices of type X can be implemented by local complementations;
    \end{itemize}
\end{restatable}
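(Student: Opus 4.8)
The plan is to prove the cycle of implications $(1)\Rightarrow(3)\Rightarrow(2)\Rightarrow(1)$. Recall first that the right-to-left direction of $LU\Leftrightarrow LC$ is automatic, since LC$_1$-equivalence implies LU-equivalence; hence $LU\Leftrightarrow LC$ holds for $G$ precisely when $\ket G=_{LU}\ket{G'}$ forces $\ket G=_{LC_1}\ket{G'}$ for every graph $G'$, necessarily of the same order $n$. Throughout I use that local equivalence implies LC$_1$-equivalence, hence LU-equivalence.

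For $(1)\Rightarrow(3)$: let $H$ be any graph locally equivalent to $G$ in standard form, and let $S$ be an $r$-incident independent multiset supported on the type-X vertices of $H$, so that $H\star^r S$ is valid. By \cref{prop:clif} we have $\ket{H\star^r S}=_{LC_r}\ket H$, hence in particular $\ket{H\star^r S}=_{LU}\ket H=_{LU}\ket G$. By hypothesis $(1)$, $\ket{H\star^r S}=_{LC_1}\ket G=_{LC_1}\ket H$, so by Van den Nest's theorem \cite{VandenNest04} the graph $H\star^r S$ is obtained from $H$ by a sequence of ordinary local complementations. As $r$ and $S$ were arbitrary, this is exactly $(3)$. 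The implication $(3)\Rightarrow(2)$ is immediate, since \cref{prop:standardform} guarantees that at least one graph locally equivalent to $G$ is in standard form, so the universally quantified $(3)$ entails the existentially quantified $(2)$.

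The substantial implication is $(2)\Rightarrow(1)$. Fix a graph $H_0$ locally equivalent to $G$ in standard form witnessing $(2)$, and let $G'$ be any graph with $\ket G=_{LU}\ket{G'}$; I must show $\ket G=_{LC_1}\ket{G'}$. Using \cref{prop:standardform}, pick $H_1$ locally equivalent to $G'$ and in standard form. Then $\ket{H_0}=_{LU}\ket{H_1}$ with both graphs in standard form and of common order $n$, so by \cref{thm:LU_imply_LCr} together with \cref{thm:LCr_lc} we get $\ket{H_0}=_{LC_r}\ket{H_1}$ for $r=\lfloor n/2\rfloor-1$. Applying \cref{lemma:standardform_LCr_lc} and the commutation observed in the Remark following it, $H_1$ is obtained from $H_0$ by a single $r$-local complementation over an $r$-incident independent multiset $S$ supported on the type-X vertices of $H_0$, followed by ordinary local complementations on type-$\bot$ vertices. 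By the property of $H_0$ provided by $(2)$, $H_0\star^r S$ is locally equivalent to $H_0$; composing with the remaining type-$\bot$ local complementations shows $H_1$ is locally equivalent to $H_0$, hence $\ket{H_1}=_{LC_1}\ket{H_0}$. Since $\ket{G'}=_{LC_1}\ket{H_1}$ and $\ket{H_0}=_{LC_1}\ket G$, I conclude $\ket G=_{LC_1}\ket{G'}$, as desired.

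The main point to get right in $(2)\Rightarrow(1)$ is that the $r$-local complementation delivered by \cref{lemma:standardform_LCr_lc} can legitimately be taken over the type-X vertices of $H_0$ itself, and stays valid there, rather than over some intermediate graph obtained after the type-$\bot$ local complementations. This rests on the defining feature of the standard form: type-X vertices have only type-Z neighbours, so no type-$\bot$ vertex has a type-X neighbour. Consequently a local complementation on a type-$\bot$ vertex changes neither the set of type-X common neighbours of any pair of type-Z vertices nor any edge that $\star^r S$ can toggle, which is exactly the commutation that lets the type-$\bot$ local complementations be slid past $\star^r S$ while keeping $S$ supported on the type-X vertices of $H_0$ and $r$-incident there. (The small cases $n\le 3$, where $\lfloor n/2\rfloor-1\le 0$, are already covered by the known criterion that $LU\Leftrightarrow LC$ holds in low order.) Everything else reduces to routine applications of \cref{prop:clif}, Van den Nest's theorem, \cref{thm:LU_imply_LCr}, and \cref{prop:standardform}.
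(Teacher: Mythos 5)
Your proposal is correct and follows essentially the same route as the paper: the direction $(2)\Rightarrow(1)$ is the paper's own argument (standard form via \cref{prop:standardform}, then \cref{thm:LU_imply_LCr} with \cref{thm:LCr_lc} and \cref{lemma:standardform_LCr_lc}, using the commutation of type-$\bot$ local complementations with the $r$-local complementation over type-X vertices), and $(1)\Rightarrow(3)$ is the paper's converse direction via \cref{prop:clif} and Van den Nest's theorem. The only difference is organizational -- you argue by a cyclic implication among the three items where the paper proves an auxiliary lemma for standard-form graphs and then invokes invariance of $LU \Leftrightarrow LC$ under LC$_1$-equivalence -- and your explicit treatment of the validity of $S$ on $H_0$ itself and of the small-order cases is a welcome (if minor) tightening of the same argument.
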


The proof of \cref{prop:lu-lc} is given in \cref{app:proof:lu-lc}. Checking that $r$-local complementations can be implemented by local complementations is not easy in general, nonetheless it is convenient in many cases, for example when there are few vertices of type X or that they have low degree. Namely, this criterion is stronger than the minimal support condition, as graphs with only vertices of $\bot$ have no vertex of type X. Furthermore, it has been left as an open question in \cite{tzitrin2018local} whether $LU \Leftrightarrow LC$ holds for some instances of repeater graph states. We use our new criterion to easily prove that this is the case. For this purpose, we prove that $LU \Leftrightarrow LC$ holds for a broader class of graph.

\begin{proposition} \label{prop:lu-lc-degree1}
        $LU \Leftrightarrow LC$ holds for graphs where each vertex is either a leaf i.e.~a vertex of degree 1, or is adjacent to a leaf.        
\end{proposition}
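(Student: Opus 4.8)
The plan is to use the new criterion \cref{prop:lu-lc}: it suffices to exhibit, for any graph $G$ of the stated form, \emph{some} locally equivalent graph in standard form in which every $r$-local complementation over the vertices of type X is already realizable by ordinary local complementations. First I would analyze the type structure of such a graph. A leaf $u$ attached to a vertex $v$ generates a minimal local set $\{u,v\}$ (take $D=\{u\}$, so $\odd{D}=\{v\}$, and this is clearly minimal); in this set $u\in D\sm\odd{D}$ and $v\in \odd{D}\sm D$. So intuitively leaves tend to be of type X and their supports of type Z — but of course the global type also depends on \emph{all} minimal local sets through a vertex, so I would first argue that after suitable local complementations / pivotings (the moves used in the proof of \cref{prop:standardform}) the graph can be put in standard form in which the leaves and the vertices adjacent only to the ``leaf side'' organize cleanly.

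The key structural point I would establish is: in a standard form $G'$ locally equivalent to $G$, every vertex of type X has \emph{degree one}. Here is the intended mechanism. Pick $u\in V^{G'}_X$. By the standard form its neighbours are all of type Z and all larger than $u$. Now use the hypothesis: in $G$ (hence in any locally equivalent graph, since having the same cut-rank function is an LC-invariant and leaf/adjacent-to-leaf structure interacts with it) every vertex is a leaf or adjacent to a leaf. I would show that the minimal local set witnessing $u$'s X-type forces $u$ to have a single neighbour: if $u$ had two neighbours $v,w$ of type Z, consider a leaf somewhere in the component; the local set $D\cup\odd{G'}{D}$ for $D=\{u\}$ is $\{u\}\cup N_{G'}(u)$, and minimality of a local set through $u$ combined with the leaf/near-leaf constraint (which bounds how small local sets can be) pins $\deg(u)=1$. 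Concretely, I would argue that a graph of the stated form has the property that for every vertex $x$ there is a minimal local set of size $\le 2$ through $x$ or a nearby vertex, and push this through the type analysis. The cleanest route is probably: show the "core" of $G$ (after deleting all leaves) has only $\bot$-type vertices once enough of the structure is absorbed, connecting to criterion \textbf{(4)} of \cite{Zeng07}, but strengthened — actually the honest approach is to just do the case analysis directly on standard forms and leaves.

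Once every type-X vertex has degree one in $G'$, the rest is easy. An $r$-local complementation over $S\se V^{G'}_X$ toggles an edge $(a,b)$ iff $S\bullet\Lambda^{a,b}_{G'}=2^{r-1}\bmod 2^r$; but $\Lambda^{a,b}_{G'}$ is the common neighbourhood of $a$ and $b$, and a degree-one vertex $u$ lies in $\Lambda^{a,b}_{G'}$ only if both $a$ and $b$ equal its unique neighbour — impossible for $a\ne b$. Hence $S\bullet\Lambda^{a,b}_{G'}=0$ for every pair, so the $r$-local complementation is the identity, which is trivially implemented by the empty sequence of local complementations. Therefore the middle condition of \cref{prop:lu-lc} holds (vacuously, or with the empty sequence), and $LU\Leftrightarrow LC$ holds for $G$. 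Finally I would remark that \cref{prop:lu-lc-degree1} follows as an immediate corollary, since repeater graph states — and more generally the instances left open in \cite{tzitrin2018local} — fall into this class.

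\textbf{Main obstacle.} The delicate step is the structural claim that in a standard form locally equivalent to $G$ every type-X vertex has degree one (equivalently, that one can \emph{choose} a standard form with this property). The subtlety is that local complementations change both the graph and the vertex types, and the "leaf or adjacent-to-leaf" hypothesis is a property of $G$ that must be tracked — or rather, reformulated in an LC-invariant way, e.g. via minimal local sets of small size — through the standardization procedure. I expect the proof to hinge on showing that such graphs always admit minimal local sets of size $\le 2$ covering enough vertices to force the degree-one conclusion, analogously to how \cref{lemma:less_than_half_Z} is argued via the covering theorem \cref{thm:MLScover}.
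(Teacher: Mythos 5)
Your endgame is exactly right and matches the paper: once the type-X vertices are leaves (degree one), no pair $a\ne b$ can have a type-X common neighbour, so every $r$-local complementation over $V_X$ is the identity, and \cref{prop:lu-lc} finishes the job. But the step you yourself flag as the ``main obstacle'' --- establishing that in some standard form the type-X vertices all have degree one --- is not actually proved in your proposal, and the route you sketch (run the standardization algorithm of \cref{prop:standardform} and track the leaf/near-leaf structure, or reformulate it LC-invariantly via small minimal local sets) is both unresolved and harder than necessary. The sentence ``minimality of a local set through $u$ combined with the leaf/near-leaf constraint pins $\deg(u)=1$'' is a hope, not an argument.

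The missing idea is that $G$ \emph{itself} is already in standard form, so no local complementation and no invariance tracking is needed at all. Concretely, the paper partitions $V$ into leaves $L$ and parents $P$ and shows: (i) for every leaf $u$ with parent $v$, $\{u,v\}$ is a minimal local set of dimension $1$ generated by $\{u\}$ (you had this); and (ii) \emph{no} minimal local set can have a generator $D$ containing a parent $v$, because then both $v$ and its leaf $u$ lie in $D\cup Odd_G(D)$, which therefore strictly contains the minimal local set $\{u,v\}$ (strictness uses connectivity and order at least $3$), contradicting minimality. Point (ii) is the decisive step absent from your proposal: it forces every generator of every minimal local set to be a subset of $L$, whence every leaf is of type X and every parent is of type Z --- in particular no vertex is of type Y or $\bot$ among the parents in a way that would obstruct the standard form. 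Choosing any vertex ordering in which each leaf precedes its parent, $G$ is in standard form, its type-X vertices are exactly the leaves, and your final argument applies verbatim via the ``some graph locally equivalent to $G$ in standard form'' clause of \cref{prop:lu-lc} (taking that graph to be $G$). Without (ii), your claim that type-X vertices end up with degree one in a standard form remains unsupported, so as written the proposal has a genuine gap, even though the overall strategy and the concluding computation coincide with the paper's.
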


\begin{proof}
    It is easy to prove (see \cref{app:leaf_standard_form}) that such a graph $G$ is in standard form (for some particular ordering of the vertices) and that the vertices of type X are exactly the leafs. Thus, any $r$-local complementation over the vertices of type X has no effect on $G$. According to \cref{prop:lu-lc}, this implies that $LU \Leftrightarrow LC$ holds for $G$.
\end{proof}

Such graphs include some instances of \emph{repeater graph states} \cite{azuma2023quantum}. A \emph{complete-graph-based repeater graph state} is a graph state whose corresponding graph of order $2n$ is composed of a complete graph of order $n$, along with $n$ leafs appended to each vertex.  A \emph{biclique-graph-based repeater graph state} is a graph state whose corresponding graph of order $4n$ is composed of a symmetric biclique (i.e. a symmetric bipartite complete graph) graph of order $2n$, along with $2n$ leafs appended to each vertex. Complete-graph-based repeater graph states are the all-photonic repeaters introduced in \cite{azuma2015all}. Biclique-graph-based repeater graph states are a variant introduced in \cite{russo2018photonic} that is more efficient in terms of number of edges.
It was left as an open question in \cite{tzitrin2018local} whether $LU \Leftrightarrow LC$ holds for complete-graph-based or biclique-graph-based repeater graph states (although it was proved for some variants). These graph states satisfy the condition of \cref{prop:lu-lc-degree1}, hence we can answer this question by the positive:

\begin{corollary}
    $LU \Leftrightarrow LC$ holds for complete-graph-based repeater graph states and biclique-graph-based repeater graph states.
\end{corollary}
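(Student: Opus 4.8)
The plan is to derive this directly from \cref{prop:lu-lc-degree1}, which asserts that $LU \Leftrightarrow LC$ holds for any graph in which every vertex is either a leaf or is adjacent to a leaf. So the whole proof amounts to checking that the graphs underlying complete-graph-based and biclique-graph-based repeater graph states meet this structural hypothesis.

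First I would unfold the two constructions. For the complete-graph-based repeater graph state, the underlying graph on $2n$ vertices consists of a clique on core vertices $c_1,\dots,c_n$ together with one pendant vertex $\ell_i$ attached to each $c_i$. Every $\ell_i$ has degree $1$ and is thus a leaf, and every core vertex $c_i$ is adjacent to the leaf $\ell_i$; hence every vertex is a leaf or connected to a leaf. For the biclique-graph-based repeater graph state, the underlying graph on $4n$ vertices consists of a complete bipartite graph on core vertices $A \sqcup B$ with $|A|=|B|=n$, together with one pendant vertex attached to each of the $2n$ core vertices; the same argument applies, since the $2n$ pendant vertices are leaves and each core vertex is adjacent to its own leaf.

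Then I would simply invoke \cref{prop:lu-lc-degree1} for each family to conclude, which in particular settles the open question raised in \cite{tzitrin2018local}. The only point needing (minor) care is to pin down the reading of the informal phrase ``leafs appended to each vertex'': it must mean that a fresh degree-one vertex is attached to every core vertex, so that no core vertex is left without an adjacent leaf. Once this is made precise the hypothesis of \cref{prop:lu-lc-degree1} holds verbatim, so there is no real obstacle here — the work has all been done in establishing \cref{prop:lu-lc} and \cref{prop:lu-lc-degree1}.
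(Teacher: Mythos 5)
Your proposal is correct and matches the paper's own argument: the corollary is obtained by observing that in both repeater-graph constructions every pendant vertex is a leaf and every core vertex is adjacent to its own appended leaf, so \cref{prop:lu-lc-degree1} applies directly. The explicit unfolding of the two constructions is just the routine verification the paper leaves implicit.
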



\section{A hierarchy of generalised local equivalences}
\label{sec:hierarchy}

Two $r$-locally equivalent graphs are also $(r+1)$-locally equivalent. This can be seen graphically as a direct consequence of \cref{prop:monotonicity}, or using graph states through \cref{thm:LCr_lc}, as two LC$_{r}$-equivalent states are obviously LC$_{r+1}$-equivalent. The known counter examples to the LU-LC conjecture introduced in \cite{Ji07,Tsimakuridze17} are actually LC$_2$-equivalent graph states that are not LC$_1$-equivalent, thus the corresponding graphs are $2$-locally equivalent but not $1$-locally equivalent. There was however no known examples of graph states that are LC$_3$-equivalent but not LC$_2$-equivalent, and more generally no known examples of graph states that are LC$_r$-equivalent but not LC$_{r-1}$-equivalent for $r>2$. We introduce such examples in this section, by showing that for any $r$ there exist pairs of graphs that are $(r+1)$-locally equivalent but not $r$-locally equivalent leading to an infinite  hierarchy of generalised local equivalences.  Our proof is constructive: for any $r$ we exhibit pairs of graphs that are $(r+1)$-locally equivalent but not $r$-locally equivalent.

We introduce a family of bipartite graphs $C_{t,k}$ parametrised by two integers $t$ and $k$. This family is a variant of the (bipartite) Kneser graphs and uniform subset graphs. The graph $C_{t,k}$ is bipartite: the first independent set of vertices corresponds to the integers from $1$ to $t$, and the second independent set is composed of all the subsets of $[1,t]$ of size $k$. There is an edge between an integer $u\in [1,t]$ and a subset $A\subseteq [1,t]$ of size $k$ if and only if $u\in A$:

\begin{definition} For any $k\gs 1$ and $t\gs k$ two integers, let $C_{t,k} = (V,E)$ be a bipartite graph defined as: $V = [1,t] \cup \binom{[1,t]}{k}$~~and~~ $E = \{(u,A) \in [1,t]\times \binom{[1,t]}{k} ~|~ u\in A\}$. 
\end{definition}

\noindent

We introduce a second family of graphs $C'_{t,k}$, defined similarly to $C_{t,k}$, the independent set $[1,t]$ being replaced by a clique:

\begin{definition} For any $k\gs 1$ and $t\gs k$ two integers, let $C'_{t,k}=(V,E)$ be a graph defined as:  $V = [1,t] \cup \binom{[1,t]}{k}$ and $E = \{(u,A) \in [1,t]\times \binom{[1,t]}{k} ~|~ u\in A\} \cup \{(u,v)\in [1,t]\times [1,t]~|~u\neq v\}$.
\end{definition}

\begin{figure}[h]
\centering

\scalebox{0.7}{
\begin{tikzpicture}[xscale = 0.6,yscale=0.2]    
    \begin{scope}[every node/.style={circle,minimum size=18pt,thick,draw,fill=lipicsYellow, inner sep = 0pt}]
        \node (U1) at (-3,0) {$1$};
        \node (U2) at (-1,0) {$2$};
        \node (U3) at (1,0) {$3$};
        \node (U4) at (3,0) {$4$};
        \node (U12) at (-5,8) {\footnotesize $\{1,2\}$};
        \node (U13) at (-3,8) {\footnotesize$\{1,3\}$};
        \node (U14) at (-1,8) {\footnotesize$\{1,4\}$};
        \node (U23) at (1,8) {\footnotesize$\{2,3\}$};
        \node (U24) at (3,8) {\footnotesize$\{2,4\}$};
        \node (U34) at (5,8) {\footnotesize$\{3,4\}$};
    \end{scope}
    \begin{scope}[every node/.style={},
                    every edge/.style={draw=darkgray,very thick}]              
        \path [-] (U1) edge node {} (U12);
        \path [-] (U1) edge node {} (U13);
        \path [-] (U1) edge node {} (U14);
        \path [-] (U2) edge node {} (U12);
        \path [-] (U2) edge node {} (U23);
        \path [-] (U2) edge node {} (U24);
        \path [-] (U3) edge node {} (U13);
        \path [-] (U3) edge node {} (U23);
        \path [-] (U3) edge node {} (U34);
        \path [-] (U4) edge node {} (U14);
        \path [-] (U4) edge node {} (U24);
        \path [-] (U4) edge node {} (U34);
    \end{scope}
\end{tikzpicture}\qquad\qquad\raisebox{0cm}{
    \raisebox{-0.4cm}{
    \begin{tikzpicture}[xscale = 0.6,yscale=0.2]        
        \begin{scope}[every node/.style={circle,minimum size=18pt,thick,draw,fill=lipicsYellow, inner sep = 0pt}]
        \node (U1) at (-3,0) {$1$};
        \node (U2) at (-1,0) {$2$};
        \node (U3) at (1,0) {$3$};
        \node (U4) at (3,0) {$4$};
            \node (U12) at (-5,8) {\footnotesize$\{1,2\}$};
            \node (U13) at (-3,8) {\footnotesize$\{1,3\}$};
            \node (U14) at (-1,8) {\footnotesize$\{1,4\}$};
            \node (U23) at (1,8) {\footnotesize$\{2,3\}$};
            \node (U24) at (3,8) {\footnotesize$\{2,4\}$};
            \node (U34) at (5,8) {\footnotesize$\{3,4\}$};
        \end{scope}
        \begin{scope}[every node/.style={},
                        every edge/.style={draw=darkgray,very thick}]              
            \path [-] (U1) edge node {} (U12);
            \path [-] (U1) edge node {} (U13);
            \path [-] (U1) edge node {} (U14);
            \path [-] (U2) edge node {} (U12);
            \path [-] (U2) edge node {} (U23);
            \path [-] (U2) edge node {} (U24);
            \path [-] (U3) edge node {} (U13);
            \path [-] (U3) edge node {} (U23);
            \path [-] (U3) edge node {} (U34);
            \path [-] (U4) edge node {} (U14);
            \path [-] (U4) edge node {} (U24);
            \path [-] (U4) edge node {} (U34);

            \path [-] (U1) edge node {} (U2);
            \path [-] (U2) edge node {} (U3);
            \path [-] (U3) edge node {} (U4);
            \path [-] (U4) edge[bend left=75] node {} (U1);
            \path [-] (U1) edge[bend right=68] node {} (U3);
            \path [-] (U2) edge[bend right=68] node {} (U4);
        \end{scope}
\end{tikzpicture}}}
}

\caption{(Left) The graph $C_{4,2}$. (Right) The graph $C'_{4,2}$.}
\label{fig:bikneser}
\end{figure}
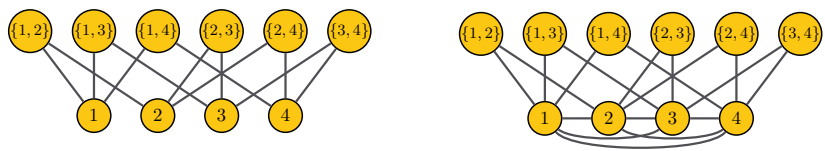
Examples of such graphs with parameters $t=4$ and $k=2$ are given in \cref{fig:bikneser}. For any odd $k \gs 3$, $t\gs k+2$, $C_{t,k}$ and $C'_{t,k}$ are in standard form for any ordering such that, for any $u \in \binom{[1,t]}{k}$ and for any $v \in [1,t]$, $u \prec v$. More precisely the vertices in $\binom{[1,t]}{k}$ are of type X and the vertices in $[1,t]$ are of type Z. The proof is given in \cref{app:c_standard_form}.

The following proposition gives a sufficient condition on $t$ and $k$ for the $r$-local equivalence of $C_{t,k}$ and $C'_{t,k}$.

\begin{proposition}\label{lemma:cond_lcr}
    For any $t\gs k\gs r+2$, $C_{t,k}$ and $C'_{t,k}$ are $r$-locally equivalent if
    \begin{equation*}
            \binom{t - 2}{k - 2}  = 2^{r-1}\bmod 2^r
          \text{~~~and~~~}  \forall i \in [ 1, r-1 ]\text{, ~~}\binom{t - i -2}{k - i -2}  = 0\bmod 2^{r-i}
    \end{equation*}
\end{proposition}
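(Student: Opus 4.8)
The plan is to exhibit, for the given parameters $t$ and $k$, an explicit $r$-incident independent multiset $S$ in $C_{t,k}$ such that $C_{t,k} \star^r S = C'_{t,k}$, and then invoke the definition of $r$-local equivalence directly. Since $C_{t,k}$ and $C'_{t,k}$ have the same vertex set $V = [1,t] \cup \binom{[1,t]}{k}$, and differ exactly on the edges inside $[1,t]$ (absent in $C_{t,k}$, forming a clique in $C'_{t,k}$), what we need is a single $r$-local complementation over a multiset $S$ that toggles precisely the $\binom t2$ edges $(u,v)$ with $u,v \in [1,t]$, $u\ne v$, and no other edge. The natural candidate is to take $S$ supported on the type-X side $\binom{[1,t]}{k}$: for instance $S$ assigns multiplicity $1$ to every $k$-subset $A$, i.e.\ $S = \binom{[1,t]}{k}$ as a set (so all multiplicities are $1$). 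This $S$ is independent in $C_{t,k}$ because $\binom{[1,t]}{k}$ is one side of the bipartition, hence contains no edge.

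Next I would verify the two conditions of an $r$-local complementation. For the edge-toggling condition: given two vertices $x,y$ of $V$, one computes $S \bullet \Lambda_{C_{t,k}}^{\{x,y\}} = |\{A \in \binom{[1,t]}{k} : x \in N(A) \text{ and } y \in N(A)\}|$. If $x = u$, $y = v$ are both integers in $[1,t]$, then $A$ is a common neighbour iff $u,v \in A$, so the count is $\binom{t-2}{k-2}$, which by hypothesis equals $2^{r-1} \bmod 2^r$ — so every such edge is toggled, exactly as needed. If $x = u \in [1,t]$ and $y = A \in \binom{[1,t]}{k}$, then since $A$ has no neighbours inside $\binom{[1,t]}{k}$ (bipartiteness), the common neighbourhood is empty and the count is $0$; similarly if both $x,y$ are $k$-subsets. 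Hence no edge incident to the type-X side is toggled, and $C_{t,k}\star^r S$ has exactly the edges of $C_{t,k}$ plus the clique on $[1,t]$, i.e.\ equals $C'_{t,k}$ — provided $S$ is $r$-incident so that the transformation is valid.

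The $r$-incidence check is the technical heart. We must show that for each $k' \in [0,r)$ and each $K \subseteq V \setminus \supp(S)$ of size $k'+2$, the quantity $S \bullet \Lambda_{C_{t,k}}^K$ is a multiple of $2^{r - k' - \delta(k')}$. Here $\supp(S) = \binom{[1,t]}{k}$, so $K$ must lie entirely in $[1,t]$; write $K = \{v_1,\dots,v_{k'+2}\}$ with $|K| = k'+2$ and $K \subseteq [1,t]$. Then $\Lambda^K_{C_{t,k}} \cap \supp(S)$ consists of the $k$-subsets $A$ containing all of $K$, so $S \bullet \Lambda^K_{C_{t,k}} = \binom{t - (k'+2)}{k - (k'+2)}$. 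For $k' = 0$ this is $\binom{t-2}{k-2}$, which must be a multiple of $2^{r - 0 - 1} = 2^{r-1}$ — indeed it is $2^{r-1}\bmod 2^r$ by hypothesis, hence a multiple of $2^{r-1}$. For $k' = i \in [1,r-1]$, we need $\binom{t-i-2}{k-i-2}$ to be a multiple of $2^{r - i - 0} = 2^{r-i}$, which is exactly the second family of hypotheses. (One should also handle the degenerate cases where $K$ has too many elements for a $k$-subset to contain it, i.e.\ $k'+2 > k$, in which case the binomial coefficient is $0$ and divisibility holds trivially; and remark that $K$ cannot contain any $k$-subset vertex since those are outside $\supp(S)$, but also such a vertex is never in $\Lambda^K$ unless $K$ contains only subsets comparable to it — in fact any $K$ meeting $\binom{[1,t]}{k}$ is still allowed as a set outside $\supp(S)$ only if it avoids $\supp(S)$, which it does not, so this case does not arise.) Thus all $r$-incidence conditions reduce precisely to the stated hypotheses.

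The main obstacle I anticipate is bookkeeping around edge cases and conventions: making sure the multiset/set identification in the definition of $\bullet$ and $\Lambda$ is applied consistently, correctly handling $K$ that are not entirely in $[1,t]$ (ruling them out because $\supp(S)$ is the whole $k$-subset side), treating $\binom{a}{b} = 0$ when $b < 0$ or $b > a$, and confirming that the Kronecker-delta exponent $2^{r-k'-\delta(k')}$ matches the hypothesis exponents for both $k'=0$ and $k' \geq 1$. Once those conventions are pinned down, the proof is a short computation: define $S = \binom{[1,t]}{k}$, check independence (immediate from bipartiteness), check the two displayed hypotheses give $r$-incidence (the $k'=0$ case from the first, the $k'=i$ cases from the second), check the edge-toggle condition (the first hypothesis again, plus the trivial $0$ counts from bipartiteness), and conclude $C_{t,k}\star^r S = C'_{t,k}$, so the two graphs are $r$-locally equivalent.
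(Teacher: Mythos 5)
Your proposal is correct and is essentially the paper's own proof: take $S=\binom{[1,t]}{k}$ with all multiplicities $1$, note independence from bipartiteness, identify $S\bullet\Lambda^K_{C_{t,k}}=\binom{t-k'-2}{k-k'-2}$ for $K\subseteq[1,t]$ of size $k'+2$ so that the two hypotheses give exactly $r$-incidence, and use $S\bullet\Lambda^{u,v}=\binom{t-2}{k-2}=2^{r-1}\bmod 2^r$ to see that precisely the clique edges on $[1,t]$ are toggled. One cosmetic remark: for a pair of $k$-subset vertices the common neighbourhood is $A\cap B\subseteq[1,t]$, which need not be empty, but $S\bullet\Lambda^{A,B}=0$ anyway since $S$ is supported on the other side, so your conclusion stands.
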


\begin{proof}
    $C_{t,k}$ and $C'_{t,k}$ are related by an $r$-local complementation on the set $\binom{[1,t]}{k}$ (which can be seen as the multiset where each vertex of $\binom{[1,t]}{k}$ appears once). Let $K \se [1,t]$ of size $k'+2$. $\binom{[1,t]}{k}\bullet \Lambda_G^K = \left|\left\{x \in \binom{[1,t]}{k}~|~K\se x\right\}\right| = \binom{t-k'-2}{k-k'-2}$ is a multiple of $2^{r-k'-\delta(k')}$ by hypothesis. Thus, $\binom{[1,t]}{k}$ is $r$-incident. Besides, for any $u,v \in [1,t]$, $\binom{[1,t]}{k} \bullet \Lambda_G^{u,v} = \binom{t-2}{k-2} = 2^{r-1}\bmod 2^{r}$ by hypothesis. Thus, $C_{t,k}\star^r\binom{[1,t]}{k} = C'_{t,k}$.
\end{proof}

The following proposition provides a sufficient condition on $t$ and $k$ for the non $r$-local equivalence of $C_{t,k}$ and $C'_{t,k}$.

\begin{proposition}
    \label{lemma:cond_not_lcr}
    For any odd $k \gs 3$, $t\gs k+2$, $C_{t,k}$ and $C'_{t,k}$ are not $r$-locally equivalent if $\binom{t}{2}$ is odd and $\binom{k}{2} = 0\bmod 2^{r}$.
\end{proposition}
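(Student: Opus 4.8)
The plan is to show non-$r$-local equivalence by exhibiting a graph invariant that distinguishes $C_{t,k}$ from $C'_{t,k}$ and that is preserved under $r$-local complementation (but not, necessarily, under $(r{+}1)$-local complementation). Since $r$-local equivalence implies LC$_r$-equivalence (Theorem~\ref{thm:LCr_lc}), and since LC$_r$-equivalence implies LU-equivalence, it suffices to prove that $\ket{C_{t,k}}$ and $\ket{C'_{t,k}}$ are \emph{not} LC$_r$-equivalent; but to get the sharper statement we actually want something that fails only at level $r$. The natural candidate invariant is the type (or dimension) structure of minimal local sets: recall that the dimension of a minimal local set is invariant under LU-equivalence (\cref{lemma:LUbottom} and the surrounding discussion), and this is exactly what was used in the $K_4$-style example of \cref{fig:k4} (which is precisely $C_{4,3}$ vs.\ $C'_{4,3}$). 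So the skeleton is: identify a particular subset of vertices that is a minimal local set in both graphs, and show its dimension differs between $C_{t,k}$ and $C'_{t,k}$ under the stated parity hypotheses.

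Concretely, I would first establish, under the hypothesis that $\binom{t}{2}$ is odd, that the whole vertex set $[1,t]$ together with some minimal collection of $k$-subsets forms a local set, and more importantly that an appropriately chosen set $L$ — e.g.\ $[1,t]$ together with all $k$-subsets contained in some fixed $(k{+}2)$-element set, or a $4$-element analogue as in \cref{fig:k4} — is a \emph{minimal} local set common to both graphs. The hypothesis $\binom{t}{2}$ odd controls the parity of pairwise common neighbourhoods among the clique vertices, which is exactly the quantity that differs between $C_{t,k}$ (no edges among $[1,t]$) and $C'_{t,k}$ (complete graph among $[1,t]$): the difference is the complete graph $K_{[1,t]}$, and $\binom{t}{2}$ odd forces this difference to be detectable. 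Then I would compute the dimension of $L$ in each graph. The hypothesis $\binom{k}{2}=0\bmod 2^r$ should force the generators of $L$ to coincide in one graph but split into three in the other (dimension $1$ vs.\ dimension $2$), because $\binom{k}{2}$ counts the number of common neighbours, inside the $k$-subset vertices, of a pair of clique vertices — and its residue mod $2^r$ is precisely what an $r$-local complementation can toggle. Since $\binom{k}{2}\equiv 0 \bmod 2^r$, no sequence of $r$-local complementations can create the edge that would change the dimension, so the dimension is a genuine $r$-local-equivalence invariant here, yet it differs between the two graphs.

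The main obstacle I anticipate is the bookkeeping needed to pin down exactly which set $L$ is minimal and to compute its generators precisely in both graphs — minimal local sets can be subtle, and one must verify minimality, not merely that $L$ is a local set. I would handle this by mimicking the explicit $C_{4,3}$/$C_{4,3}'$ analysis referenced after \cref{fig:k4} and in \cref{subsec:mls}: take $L$ to be (roughly) the union of two clique vertices, say $\{1,2\}$, together with the $k$-subsets on which they "disagree" in a way that makes $D$ and $Odd_G(D)$ cover $L$ with no room to shrink. A cleaner alternative, which I would pursue in parallel, is to avoid minimality bookkeeping by instead invoking \cref{prop:lu-lc}-style reasoning combined with \cref{cor:constraint_r_lc}: both graphs are in standard form with vertices of $\binom{[1,t]}{k}$ of type X and $[1,t]$ of type Z (proved in \cref{app:c_standard_form}), and they are $\bot$-free; hence by \cref{cor:constraint_r_lc} they are $r$-locally equivalent iff they are related by a \emph{single} $r$-local complementation over the type-X vertices. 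Such a complementation is governed by a multiset $S$ supported on $\binom{[1,t]}{k}$ satisfying the $r$-incidence conditions of \cref{lemma:cond_angles}, and toggling edge $(u,v)$ for $u,v\in[1,t]$ requires $S\bullet\Lambda_G^{u,v}=2^{r-1}\bmod 2^r$. One then shows that the $r$-incidence constraints, combined with $\binom{t}{2}$ odd (which forces \emph{every} pair $(u,v)$ to be toggled, since $C'_{t,k}$ differs from $C_{t,k}$ by the full $K_t$) and $\binom{k}{2}\equiv 0\bmod 2^r$ (which constrains the relevant sums of $S$-weights), are mutually inconsistent — yielding the non-equivalence. I expect the second route to be the more robust one to write up, with the inconsistency argument being the crux of the calculation.
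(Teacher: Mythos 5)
Your second route is essentially the paper's proof, and it is the right reduction: both graphs are $\bot$-free and in standard form with $\binom{[1,t]}{k}$ of type X, so by \cref{cor:constraint_r_lc} an $r$-local equivalence would have to be a single $r$-local complementation $C_{t,k}\star^r S=C'_{t,k}$ with $S$ supported on $\binom{[1,t]}{k}$, and since $C'_{t,k}$ differs from $C_{t,k}$ by the full clique $K_{[1,t]}$, every pair $u,v\in[1,t]$ must be toggled, i.e. $S\bullet\Lambda_{C_{t,k}}^{u,v}=2^{r-1}\bmod 2^r$. But the decisive step is missing: you only assert that these conditions are ``mutually inconsistent'' with the $r$-incidence constraints, and the mechanism you point to is not the right one -- the paper's contradiction does not use $r$-incidence at all. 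It is a double count: summing the toggling condition over all pairs gives $\sum_{u,v\in[1,t]}S\bullet\Lambda_{C_{t,k}}^{u,v}=\binom{k}{2}\sum_{x}S(x)$, because each $k$-set $x$ is a common neighbour of exactly $\binom{k}{2}$ pairs of $[1,t]$; on the other hand this sum is $\binom{t}{2}2^{r-1}\equiv 2^{r-1}\bmod 2^r$ precisely because $\binom{t}{2}$ is odd, while $\binom{k}{2}\equiv 0\bmod 2^r$ forces the left-hand side to vanish mod $2^r$ -- a contradiction. Note also that $\binom{t}{2}$ odd does not ``force every pair to be toggled'' (that follows from the definition of $C'_{t,k}$ alone); its only role is in this summation. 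Without this counting argument, the crux of the proposition is unproven.

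Your first route is unworkable in general and should be dropped: the dimension of a minimal local set is an LU-invariant, and in the intended application (\cref{thm:existence_Ctk}) the graphs $C_{t,k}$ and $C'_{t,k}$ satisfying the hypotheses are $r$-locally equivalent, hence LU-equivalent, so no LU-invariant -- dimension, types, or local sets -- can certify failure of $(r-1)$-local equivalence. The dimension argument works for $C_{4,3}$ versus $C'_{4,3}$ only because those graph states are not LU-equivalent at all, and that pair does not satisfy the hypotheses of this proposition; your claim that the dimension is ``a genuine $r$-local-equivalence invariant here'' conflates level-sensitive and LU-invariant quantities.
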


\begin{proof}       
    Let us suppose by contradiction that $C_{t,k}$ and $C'_{t,k}$ are $r$-locally equivalent. 
    By \cref{cor:constraint_r_lc}, they are related by a single $r$-local complementation on 
    the vertices in $\binom{[1,t]}{k}$. Let $S$ be the multiset in $\binom{[1,t]}{k}$ such that $C_{t,k}\star^rS = C'_{t,k}$. For each $u,v \in [1,t]$, $S \bullet\Lambda_{C_{t,k}}^{u,v} = \sum_{x \in \Lambda_{C_{t,k}}^{u,v}}S(x) = 2^{r-1}\bmod 2^{r}$. Summing over all pairs $u,v \in [1,t]$, and, as by hypothesis $\binom{t}{2}$ is odd, $\sum_{u,v\in [1,t]} S \bullet\Lambda_{C_{t,k}}^{u,v} = \binom{k}{2}\sum_{x \in \binom{[1,t]}{k}}S(x)  = \binom{t}{2}2^{r-1}\bmod 2^r= 2^{r-1} \bmod 2^{r}$. The first part of the equation is true because $\sum_{u,v\in [1,t]} S \bullet\Lambda_{C_{t,k}}^{u,v} = \sum_{u,v\in [1,t]} \sum_{x \in \Lambda_{C_{t,k}}^{u,v}}S(x) = \sum_{x \in \binom{[1,t]}{k}} \sum_{u,v\in x} S(x) $. This leads to a contradiction as $\binom{k}{2} = 0 \bmod 2^{r}$ by hypothesis. 
\end{proof}

It remains to find, for any $r$, parameters $t$ and $k$ such that the corresponding graphs are $r$-locally equivalent but not $(r-1)$-locally equivalent. Fortunately, such parameters exist.

\begin{restatable}{theorem}{existenceCtk}
    For any $r \gs 2$, $C_{t,k}$ and $C'_{t,k}$ are $r$-locally equivalent but not $(r-1)$-locally equivalent when $k = 2^r +1$ and $t = 2^r + 2^{\lceil \log_2(r+2) \rceil} -1$. Thus, $\ket{C_{t,k}}$ and $\ket{C'_{t,k}}$ are LC$_{r}$-equivalent but not LC$_{r-1}$-equivalent.
\label{thm:existence_Ctk}
\end{restatable}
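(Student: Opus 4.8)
The strategy is to verify, for the explicit choices $k = 2^r + 1$ and $t = 2^r + 2^{\lfloor \log_2 r\rfloor + 1} - 1$, the hypotheses of \cref{lemma:cond_lcr} (for positive $r$-local equivalence) and of \cref{lemma:cond_not_lcr} (for the negative $(r-1)$ statement), and then invoke \cref{thm:LCr_lc} to transfer these graph statements to the corresponding LC-equivalences of graph states. So the whole proof reduces to a collection of binomial-coefficient-mod-$2$-power computations, for which the natural tool is Kummer's theorem: the $2$-adic valuation $v_2\binom{a+b}{a}$ equals the number of carries when adding $a$ and $b$ in base $2$. I would state Kummer's theorem once at the start as the workhorse.

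First I would handle the positive direction via \cref{lemma:cond_lcr}. We need $\binom{t-2}{k-2} = 2^{r-1} \bmod 2^r$, i.e. $v_2\binom{t-2}{k-2} = r-1$ exactly, and for each $i \in [1, r-1]$ we need $\binom{t-i-2}{k-i-2} \equiv 0 \bmod 2^{r-i}$, i.e. $v_2\binom{t-i-2}{k-i-2} \ge r-i$. With $k - 2 = 2^r - 1$ and $t - k = 2^{\lfloor\log_2 r\rfloor+1} - 2$, write $\binom{t-i-2}{k-i-2} = \binom{(2^r - 1 - i) + (2^{\lfloor\log_2 r\rfloor+1} - 2)}{2^r - 1 - i}$ and count carries when adding $m_i := 2^r - 1 - i$ and $N := 2^{\lfloor\log_2 r\rfloor+1} - 2$ in binary. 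Here $N$ is the number whose binary expansion is $1\cdots10$ with $\lfloor\log_2 r\rfloor$ ones (bits $1$ through $\lfloor\log_2 r\rfloor$), and $m_i = 2^r - 1 - i$ is a string of $r$ ones minus $i$; its low-order bits depend on $i$. The key observation is that for $i$ in the relevant range the low bits of $m_i$ are essentially the complement of $i$ within $r$ bits, so the number of carries in $m_i + N$ tracks the interaction of $N$'s ones with the high run of ones in $m_i$; one checks the count is exactly $r-1$ when $i = 0$ and at least $r - i$ for $1 \le i \le r-1$. This is the step I expect to be the main obstacle: it is a somewhat delicate binary-carry bookkeeping, and the specific value $t = 2^r + 2^{\lfloor\log_2 r\rfloor+1} - 1$ is presumably engineered precisely so that these carry counts come out right, so getting the casework clean (perhaps separating $i < \lfloor\log_2 r\rfloor$ from larger $i$) is where the real work lies.

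For the negative direction I would apply \cref{lemma:cond_not_lcr} with $r$ replaced by $r-1$: it suffices that $k$ is odd and $k \ge 3$ (clear, since $k = 2^r+1$), that $t \ge k+2$ (clear), that $\binom{t}{2}$ is odd, and that $\binom{k}{2} \equiv 0 \bmod 2^{r-1}$. For $\binom{k}{2} = \binom{2^r+1}{2} = 2^{r-1}(2^r+1)$, the factor $2^r+1$ is odd so $v_2\binom{k}{2} = r-1$, giving $\binom{k}{2} \equiv 0 \bmod 2^{r-1}$ as needed. For $\binom{t}{2} = t(t-1)/2$ with $t = 2^r + 2^{\lfloor\log_2 r\rfloor+1} - 1$ odd, we need $t - 1 = 2^r + 2^{\lfloor\log_2 r\rfloor+1} - 2 = 2(2^{r-1} + 2^{\lfloor\log_2 r\rfloor} - 1)$, and since $r \ge 2$ forces $\lfloor\log_2 r\rfloor \le r-1$ with $2^{r-1} + 2^{\lfloor\log_2 r\rfloor} - 1$ odd, we get $v_2(t-1) = 1$, hence $\binom{t}{2} = t \cdot \frac{t-1}{2}$ is a product of two odd numbers, so odd. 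This confirms all four hypotheses, so $C_{t,k}$ and $C'_{t,k}$ are not $(r-1)$-locally equivalent.

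Finally, combining the two directions: \cref{lemma:cond_lcr} gives that $C_{t,k}$ and $C'_{t,k}$ are $r$-locally equivalent, and \cref{lemma:cond_not_lcr} (applied at level $r-1$) gives that they are not $(r-1)$-locally equivalent. By the equivalence $(1)\Leftrightarrow(2)$ of \cref{thm:LCr_lc}, $r$-local equivalence of $C_{t,k}$ and $C'_{t,k}$ is the same as LC$_r$-equivalence of $\ket{C_{t,k}}$ and $\ket{C'_{t,k}}$, and likewise non-$(r-1)$-local equivalence is non-LC$_{r-1}$-equivalence; this yields the stated separation $\ket{C_{t,k}} =_{LC_r} \ket{C'_{t,k}}$ but $\ket{C_{t,k}} \neq_{LC_{r-1}} \ket{C'_{t,k}}$, completing the construction of the strict infinite hierarchy.
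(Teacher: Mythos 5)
Your plan follows the paper's route exactly: verify the binomial congruences required by \cref{lemma:cond_lcr} and \cref{lemma:cond_not_lcr} (the latter applied at level $r-1$), then transfer to graph states via \cref{thm:LCr_lc}; your appeal to Kummer's theorem is the same tool as the paper's \cref{lemma:paritybinom}, since the carry count equals $w(s)+w(m-s)-w(m)$. Your negative direction is complete and correct: $\binom{k}{2}=2^{r-1}(2^r+1)$ has $2$-adic valuation exactly $r-1$, and $\binom{t}{2}=t\cdot\frac{t-1}{2}$ is odd because $t$ is odd and $v_2(t-1)=1$.

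For the positive direction, however, you only assert the outcome of the carry bookkeeping (``one checks the count is exactly $r-1$ when $i=0$ and at least $r-i$ for $1\le i\le r-1$'') and explicitly defer it as the main obstacle. This is a genuine gap, and not merely an omitted routine computation: the asserted bound is false for some $r$ covered by the theorem. Take $r=3$, so $k=9$ and $t=11$: for $i=2$ one needs $\binom{t-i-2}{k-i-2}=\binom{7}{5}=21\equiv 0 \bmod 2^{r-i}=2$, which fails since $21$ is odd (adding $k-2-i=5=101_2$ and $t-k=2=010_2$ produces no carry), so the uniform multiset over $\binom{[1,t]}{k}$ is not $3$-incident and \cref{lemma:cond_lcr} does not apply. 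In general the carry count at $i=2^{\lfloor\log_2 r\rfloor+1}-2$ is $0$, so the required bound $\geq r-i$ fails exactly when this $i$ lies in $[1,r-1]$, i.e.\ when $r+1$ is a power of two ($r=3,7,15,\dots$); for all other $r\geq2$ the verification does go through, using $w(2^r-1-i)=r-w(i)$, $w(2^{\lfloor\log_2 r\rfloor+1}-2)=\lfloor\log_2 r\rfloor$ and $w(t-2-i)=\lfloor\log_2 r\rfloor+2-w(i+2)$ (the last identity valid only while $i+2<2^{\lfloor\log_2 r\rfloor+1}$), together with $w(i)-w(i+2)+2\le i$. The published proof relies on exactly this computation and overlooks the same boundary case, so the step you deferred is precisely where the argument needs repair (e.g.\ by adjusting $t$ or excluding the exceptional values of $r$) rather than a piece of bookkeeping that can be waved through; as written, your proposal does not establish the hypotheses of \cref{lemma:cond_lcr} for the stated parameters, and hence does not prove the theorem.
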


The proof of \cref{thm:existence_Ctk} is given in \cref{app:proof:existence_Ctk}.

\begin{remark}
    For the case $r=2$, we obtain the pair $(C_{7,5},C'_{7,5})$, which is the 28-vertex counter-example to the LU-LC conjecture from \cite{Tsimakuridze17}. In particular, this translates to an example of a pair of graph states that are LC$_2$-equivalent but not LC$_1$-equivalent. 
\end{remark}

Thus, while Ji et al. proved that there exist pairs of graph states that are LU-equivalent but not LC$_1$-equivalent \cite{Ji07}, we showed a finer result -- the existence of an infinite strict hierarchy of graph states equivalence between LC$_1$- and LU-equivalence. There exist LC$_2$-equivalent graph states that are not LC$_1$-equivalent, LC$_3$-equivalent graph states that are not LC$_2$-equivalent... On the other end, for any integer $r$, there exist LU-equivalent graph states that are not LC$_r$-equivalent.

\section{Conclusion}

In this paper, we have introduced of a graphical characterisation of LU-equivalence for graph states, and consequently for stabilizer states. To achieve this, we have introduced a generalisation of the local complementation, and leveraged the structures of minimal local sets in graphs.  
A key outcome of this characterisation is the establishment of a strict hierarchy of equivalences for graph states, which significantly advances our understanding of the gap between the LC- and LU-equivalence. 
Thanks to this graphical characterisation, we have also proven a conjecture regarding families of graphs states where LC- and LU-equivalence coincide. 

This graphical characterisation has additional potential applications, such as the search of minimal examples of graph states that are LU-equivalent but not LC-equivalent. The smallest known examples are of order 27, and it is known there is no counter example of order less than 8. More generally one can wonder whether there are smaller examples of graphs that are LC$_{r}$ equivalent but not LC$_{r-1}$ as the ones we have introduced are of order exponential en $r$. In terms of complexity, determining whether two graph states are LC-equivalent can be done in polynomial time. The graphical characterisation of LU-equivalence also offers new avenues for exploring the complexity of the LU-equivalence problem, as allows one to study this computational problem from a purely graphical standpoint.

\section*{Acknowledgements}
The authors want to thank David Cattaneo and Mehdi Mhalla for fruitful discussions on previous versions of this paper. This work is supported by the \emph{Plan
France 2030} through the PEPR integrated project EPiQ ANR-22-
PETQ-0007 and the HQI platform ANR-22-PNCQ-0002; and by the
European projects NEASQC and HPCQS.



\bibliography{reflipics}

\appendix

\section{Proof of Proposition \ref{prop:stabilizer}}
\label{app:proof:stabilizer}

\stabilizer*

\begin{proof}
    We show that for any $D\subseteq V$, $(-1)^{|G[D]|}X_D Z_{Odd(D)}\ket G = \ket G$, by induction on the size of $D$. When $D=\emptyset$ the property is trivially true. Since $X$ and $Z$ anticommute, we have for any sets $A$ and $B$, $X_AZ_B = (-1)^{|A\cap B|}Z_BX_A$. As a consequence, for any non-empty $D$, we have, given $u\in D$, 
    \begin{equation*}
        \begin{split}
            (-1)^{|G[D]|}X_D Z_{Odd(D)}\ket G & = (-1)^{|G[D\setminus u]|}(-1)^{|N_G(u)\cap D\setminus u|}X_u X_{D\setminus u}Z_{N_G(u)}Z_{Odd(D\setminus u)}\ket G\\
            & =(-1)^{|G[D\setminus u]|}X_uZ_{N_G(u)}X_{D\setminus u}Z_{Odd(D\setminus u)}\ket G\\
            & = X_uZ_{N_G(u)}\ket G                   
        \end{split}
    \end{equation*}    
    by induction hypothesis. The fact that $\ket G$ is a fixed point of $X_uZ_{N_G(u)}$ terminates the proof.
\end{proof}

\section{Proof of Proposition \ref{prop:LCLC}}
\label{app:proof:LCLC}

\LCLC*

\begin{proof}
    First, we define the Z-weight $|C|_Z$ of an operator $C$, as the number of positions in $C$ that do not commute nor anti-commute with $Z$, i.e.~the number of $u\in V$, s.t. $\forall r\in\{-1,1\}$, $CZ_u\neq r Z_u C$. Notice that $C = e^{i\theta}\bigotimes_{u\in V} C_u$, where  every $1$-qubit Clifford  $C_u$ is  either of the form $PZ(\frac {k\pi}2)$, or $ PX(\pm\frac \pi 2)Z(\frac {k\pi}2)$, or $PHZ(\frac {k\pi}2)$, where $P$ is a Pauli operator and $k\in [0,3)$. Moreover, $|C_u|_Z=0$ if and only if $C_u=PZ(\frac {k\pi}2)$. 

    We prove, by induction on the Z-weight of $C$,  that $\ket {G_2}= C\ket {G_1}$ implies that there exists at most $\frac32|C|_Z$ local complementations that transforms $G_1$ into $G_2$ and that induces $C$ up to Pauli. 
    \\-- If $|C|_Z=0$, then $C$ must be a Pauli operator. Indeed, assume there exists $u$ s.t. $C_u=PZ(\pm\frac \pi2)$ where $P$ a Pauli. Then $\ket {G_2} = C\ket{G_1}=CX_uZ_{N_{G_1}(u)}\ket{G_1} = \mp iX_uZ_uZ_{N_{G_1}(u)}C\ket{G_1} =  \mp iX_uZ_uZ_{N_{G_1}(u)}\ket{G_2}$, leading to a contradiction as both $\ket {G_2}$ and $X_uZ_uZ_{N_{G_1}(u)}\ket{G_2}$ are in $\mathbb R^{2^n}$ where $n$ is the order of $G_2$. As consequence, $C$ is a Pauli operator, implying that $G_1= G_2$. 
    \\-- If there exists $u$ s.t. $C_u=PX(\pm\frac \pi 2)Z(\frac {k\pi}2)$, notice that $\ket{G_2\star u} = L_u^{G_2}\ket {G_2} =  C'\ket{G_1}$ with $C'=X_u(\frac \pi 2)Z_{N_{G_2}(u)}(-\frac \pi 2)C$. We have $|C'_u|_Z=0$ and for any $v\neq u$,  $|C'_v|_Z = |C_v|_Z$, so $|C'|_Z=|C|_Z-1$. By induction, there exists a sequence of (possibly repeating) vertices $a_1, \cdots, a_m$ such that $G_2\star u = G_1\star a_1 \star a_2 \star \cdots \star a_{m}$ and  the local Clifford operator that implements these $m$ local complementations is $PC'$ for some Pauli operator $P$. As a consequence $G_2= G_1\star a_1 \star a_2 \star \cdots \star a_{m}\star u$ and  the local Clifford operator that implements these $m+1$ local complementations is $X_u(\frac \pi 2)Z_{N_{G_2}(u)}(-\frac \pi 2)PC' =P'C$ for some Pauli $P'$,  moreover, we have $m+1\ls \frac32 |C'|_Z + 1 \ls \frac32 |C|_Z$. 
    \\-- Otherwise (i.e. $|C|_Z\neq 0$ and there is no $C_w$ of the form $PX(\pm\frac \pi 2)Z(\frac {k\pi}2)$), there exists $u$ s.t. $C_u= PHZ(\frac {k\pi}2)$. Notice that it implies that there exists also $v\in N_{G_2(u)}$ s.t.~$|C_v|_Z\neq 0$. Indeed, by contradiction, assume $\forall w\in N_{G_2}(u)$, $|C_w|_Z =0$, then $\ket {G_1} = C^\dagger \ket {G_2} = C^\dagger X_uZ_{N_{G_2}(u)}\ket {G_2} =  C^\dagger X_uZ_{N_{G_2}(u)}C\ket {G_1} = (-1)^a Z_uZ_{N_{G_2(u)}}\ket{G_1}$ for some $a \in \{0,1\}$. Thus, $\ket{G_1}= X_uZ_{N_{G_1(u)}}\ket{G_1} = (-1)^a X_uZ_{N_{G_1(u)}} Z_uZ_{N_{G_2(u)}}\ket{G_1} = - (-1)^a Z_uZ_{N_{G_2(u)}}\ket{G_1} = -\ket{G_1}$ leading to a contradiction. 
    Moreover, notice that $\ket{G_2\star u\star v\star u} = H_uH_vZ_{N_{G_2}(u)\cap N_{G_2}(v)}\ket {G_2} =  C'\ket{G_1}$ with $C'=H_uH_vZ_{N_{G_2}(u)\cap N_{G_2}(v)}C$. We have $|C'_u|_Z=|C'_v|_Z=0$ and for any other $w$, $|C'_w|_Z =|C_w|_Z$, so $|C'|_Z=|C|_Z-2$. By induction,  there exists a sequence of (possibly repeating) vertices $a_1, \cdots, a_m$ such that $G_2\star u\star v \star u = G_1\star a_1 \star a_2 \star \cdots \star a_{m}$ and  the local Clifford operator that implements these $m$ local complementations is $PC'$ for some Pauli operator $P$. As a consequence $G_2= G_1\star a_1 \star a_2 \star \cdots \star a_{m}\star u\star v\star u$ and  the local Clifford operator that implements these $m+3$ local complementations is $H_uH_vZ_{N_{G_2}(u)\cap N_{G_2}(v)}PC' =P'C$ for some Pauli $P'$,  moreover, we have $m+3\ls \frac32 |C'|_Z + 3 = \frac32 |C|_Z$. 
    ~\\It remains to prove \cref{rm:clifford}: The local Clifford $C'$ that implements the sequence of local complementations is equal to $PC$ for some Pauli $P$. Notice that $\ket{G_2} = PC \ket {G_1} = P\ket {G_2}$,
    thus according to \cref{prop:stabilizer}, there exists $D$ s.t.~$P= (-1)^{|G_2[D]|}X_DZ_{Odd(D)}$. 
    \end{proof}

\section{Basic properties of $r$-local complementation}
\label{app:prop_r_lc}

\rlcmonotonicity*

\begin{proof}
$2S$ is $(r+1)$-incident as for any $k\in [0,r+1)$ and any $K\subseteq V\setminus \supp(S)$ of size $k+2$, $2S\bullet \Lambda_G^K$ is a multiple of $2\times 2^{r-k-\delta(k)} = 2^{r+1-k-\delta(k)}$. Moreover, $S \bullet\Lambda_G^{u,v} = 2^{r-1}\bmod 2^{r}$ if and only if $2S \bullet\Lambda_G^{u,v} = 2^{r}\bmod 2^{r+1}$. $S$ is $(r-1)$-incident as for any $k\in [0,r-1)$ and any $K\subseteq V\setminus \supp(S)$ of size $k+2$, $S\bullet \Lambda_G^K$ is a multiple of $2^{r-k-\delta(k)}$, so is a multiple of $2^{r-1-k-\delta(k)}$. Moreover, for any vertices $u,v$, $S \bullet\Lambda_G^{u,v} = 2^{r-1}\bmod 2^{r}$ so $S \bullet\Lambda_G^{u,v} = 0\bmod 2^{r-1}$    
\end{proof}

\rlcsameneighborhood*

\begin{proof}
        If $u \in \supp(S)$, for any vertex $v$, $S \bullet\Lambda_G^{u,v} = 0$. Indeed, $N_G(u) \cap \supp(S) = \emptyset$ thus $\Lambda_G^{u,v} \cap \supp(S) = \emptyset$. 
\end{proof}

\rlcdisjointunion*

\begin{proof}
$S_1\sqcup S_2$ is $r$-incident as for any $k\in [0,r)$ and any $K\subseteq V\setminus \supp(S)$ of size $k+2$, $(S_1\sqcup S_2)\bullet \Lambda_G^K = S_1\bullet \Lambda_G^K + S_2\bullet \Lambda_G^K$ is a multiple of $2^{r-k-\delta(k)}$. $S_2$ is $r$-incident in $G\star^r S_1$ as for any set $K\subseteq V\setminus \supp(S)$ of size $k+2$, $S_2\bullet \Lambda_{G\star^r S_1}^K = S_2\bullet \Lambda_{G}^K$. Besides, for any vertices $u,v$, $S_1\sqcup S_2 \bullet\Lambda_G^{u,v} = S_1\bullet \Lambda_G^K + S_2\bullet \Lambda_G^K$ and 
\begin{equation*}
    \begin{split}
        u\sim_{(G\star^r S_1)\star^r S_2} v~ & \Leftrightarrow~\left(u\sim_{G\star^r S_1} v ~~\oplus~~ S_2\bullet\Lambda_{G\star^r S_1}^{u,v} = 2^{r-1}\bmod 2^{r}\right) ~\\
        & \Leftrightarrow~\left(u\sim_{G} v ~~\oplus~~ S_1\bullet\Lambda_{G}^{u,v} = 2^{r-1}\bmod 2^{r} ~~\oplus~~ S_2\bullet\Lambda_{G}^{u,v} = 2^{r-1}\bmod 2^{r}\right) \\
        & \Leftrightarrow~\left(u\sim_{G} v ~~\oplus~~ S_1\bullet\Lambda_{G}^{u,v} + S_2\bullet\Lambda_{G}^{u,v} = 2^{r-1}\bmod 2^{r}\right)        
    \end{split}
\end{equation*}
\end{proof}

\section{Proof of Proposition \ref{prop:clif}}\label{sec:proof:clif}

To prove  \cref{prop:clif} we use the following lemma :
\begin{lemma}\label{cor:i-e-oddbullet}
For any graph $G$,  any $K\subseteq V(G)$, and any multiset $S$ of vertices,
\begin{equation}\label{eq:ieodd}S\bullet Odd_G(K)= \sum_{R\subseteq K}(-2)^{|R|-1}S\bullet\Lambda_G^R\end{equation}
\end{lemma}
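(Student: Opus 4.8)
The plan is to establish the identity one vertex at a time and then lift it to the weighted statement by bilinearity of $\bullet$. First I would record that for any vertex set $W$ we have $S\bullet W=\sum_{v\in V}S(v)\,\mathbf 1[v\in W]$, so that both sides of \eqref{eq:ieodd} are obtained from the corresponding membership indicators by applying the linear functional $W\mapsto\sum_v S(v)\mathbf 1[v\in W]$. Hence it suffices to prove, for each fixed $v\in V$, the scalar identity
\[\mathbf 1[v\in Odd_G(K)]=\sum_{\emptyset\neq R\subseteq K}(-2)^{|R|-1}\,\mathbf 1[v\in\Lambda_G^R]\]
(the sum running over \emph{nonempty} $R$; equivalently one may adopt the convention $\Lambda_G^\emptyset=\emptyset$). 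Multiplying this by $S(v)$, summing over $v$, and exchanging the order of the two summations then produces $\sum_{\emptyset\neq R\subseteq K}(-2)^{|R|-1}S\bullet\Lambda_G^R$ on the right-hand side, which is exactly \eqref{eq:ieodd}.

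\textbf{The pointwise step.} Next I would fix $v$ and set $m:=|N_G(v)\cap K|$. The key observation is that for $R\subseteq K$, the condition $v\in\Lambda_G^R$ holds precisely when $R\subseteq N_G(v)$, i.e.\ when $R\subseteq N_G(v)\cap K$; so only the $2^m$ subsets of $N_G(v)\cap K$ contribute to the sum. The right-hand side of the scalar identity is therefore $\sum_{\emptyset\neq R\subseteq N_G(v)\cap K}(-2)^{|R|-1}$, and using the binomial identity $\sum_{R\subseteq T}(-2)^{|R|}=(1-2)^{|T|}=(-1)^{|T|}$ this equals $-\frac12\bigl((-1)^m-1\bigr)$. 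That quantity is $0$ when $m$ is even and $1$ when $m$ is odd, which is exactly $\mathbf 1[v\in Odd_G(K)]$ by the definition of the odd neighbourhood. Combined with the reduction of the first paragraph, this completes the argument.

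\textbf{Expected difficulty.} I do not anticipate any genuine obstacle: the whole proof rests on the one-line evaluation $\sum_{R\subseteq T}(-2)^{|R|}=(-1)^{|T|}$. The only points demanding care are the bookkeeping of the empty subset (which is why the exponent is $|R|-1$ and why the sum effectively runs over nonempty $R$), and the fact that it is $N_G(v)\cap K$ — rather than $N_G(v)$ or $K$ taken separately — that governs the sum; this is precisely what turns the alternating sum into the parity $m\bmod 2$, i.e.\ into membership in $Odd_G(K)$.
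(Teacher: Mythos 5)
Your proof is correct and follows essentially the same route as the paper, which simply invokes the inclusion–exclusion identity for symmetric differences $\sum_{v\in \Delta_{u\in K} N_G(u)}\alpha_u = \sum_{R\subseteq K}(-2)^{|R|-1}\sum_{u\in \bigcap_{w\in R} N_G(w)}\alpha_u$; your pointwise indicator argument with the binomial evaluation $\sum_{R\subseteq T}(-2)^{|R|}=(-1)^{|T|}$ is exactly a proof of that identity, with the details (and the empty-set convention, which the paper leaves implicit) spelled out.
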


\begin{proof}
\cref{eq:ieodd} can be derived using the following inclusion exclusion identity for symmetric difference:  $\sum_{v\in \Delta_{u\in K} N_G(u)}\alpha_u = \sum_{R\subseteq K}(-2)^{|R|-1}\sum_{u\in \bigcap_{v\in R} N_G(v)}\alpha_u $. 
\end{proof}

\rlcwithclifford*

\begin{proof}
Let $\mathcal S := \supp (S)$ and $\bar {\mathcal S}:= V\setminus \mathcal S$. \\
Let $\ket \psi = \bigotimes_{u\in V}X\left(\frac {S(u)\pi}{2^r}\right)\bigotimes_{v\in V}Z\left(-\frac {\pi}{2^r}\sum_{u \in N_G(v)}S(u)\right)\ket{G}$. 
To show that $\ket{G\star^r S} = \ket \psi$, it is enough to show that for any $R\subseteq \mathcal S$, and any $L\subseteq \bar {\mathcal S}$, 
\begin{equation}\label{eq:basis}\bra+_{\mathcal S} Z_R \bra 0_{\bar {\mathcal S}} X_L \ket {G\star^r S} = \bra+_{\mathcal S} Z_R \bra 0_{\bar {\mathcal S}} X_L\ket \psi\end{equation}as $\{Z_R \ket +_{\mathcal S}\otimes X_L\ket0_{\bar {\mathcal S}}\}_{L,R}$ form a basis. Since $\bra 0 Z(\theta) = \bra 0$, $\bra + X(\theta)= \bra +$, $XZ(\theta)=e^{-i\theta} Z(-\theta)X$, and $ZX(\theta)=e^{-i\theta} X(-\theta)Z$, we get: 
\begin{align*}
\bra+_{\mathcal S} Z_R \bra 0_{\bar {\mathcal S}} X_L\ket {\psi}&= \bra+_{\mathcal S} Z_R \bra 0_{\bar {\mathcal S}} X_L\bigotimes_{u\in V}X\left(\frac {S(u)\pi}{2^r}\right)\bigotimes_{v\in V}Z\left(-\frac {\pi}{2^r}\sum_{u \in N_G(v)}S(u)\right)\ket{G}\\
&= e^{\frac{i\pi}{2^r}\left( -\sum_{u\in R} S(u)+ \sum_{v\in L}\sum_{u\in N_G(v)}S(u)\right)}  \bra+_{\mathcal S} Z_R \bra 0_{\bar {\mathcal S}} X_L\ket{G}\\
&= e^{\frac{i\pi}{2^r} \left(-S\bullet R +\sum_{v\in L}  S\bullet N_G(v)\right)} \bra+_{\mathcal S} Z_R \bra 0_{\bar {\mathcal S}} X_L\ket{G}
\end{align*}
Moreover, 
\begin{align*}
\bra+_{\mathcal S} Z_R \bra 0_{\bar {\mathcal S}} X_L \ket {G}&= (-1)^{|G[L]|}\bra+_{\mathcal S} Z_R\bra 0_{\bar {\mathcal S}} Z_{Odd_G(L)} \ket {G}\\
&=\frac{(-1)^{|G[L]|}}{\sqrt{2^{|\bar{\mathcal S}|}}} \bra +_{\mathcal S}Z_{R\Delta (Odd_G(L)\cap \mathcal S)} \ket {G[\mathcal S]}\\
&=\frac{(-1)^{|G[L]|}}{\sqrt{2^{|\bar{\mathcal S}|}}} \bra +_{\mathcal S}Z_{R\Delta (Odd_G(L)\cap \mathcal S)} \ket {+}_{\mathcal S}\\
&=\begin{cases}\frac{(-1)^{|G[L]|}}{\sqrt{2^{|\bar{\mathcal S}|}}} & \text{when $R = Odd_G(L)\cap \mathcal S$}\\0&\text{otherwise}
\end{cases}
\end{align*}
Moreover, thanks to \cref{loccompneighS}, for any $L\subseteq \mathcal S$, $Odd_G(L) \cap \mathcal S = Odd_{G\star^r S}(L) \cap \mathcal S$. So, \cref{eq:basis} is trivially satisfied when $R\neq Odd_G(L)\cap \mathcal S$. When $R= Odd_G(L)\cap \mathcal S$, notice also  that $ \sum_{u\in Odd_G(L)\cap \mathcal S} S(u) =  \sum_{u\in Odd_G(L)} S(u)$ as $\mathcal S$ is, by definition, the support of $S$. 
Thus, it is enough to show that for any $L\subseteq \bar{\mathcal S}$, \begin{equation*}\label{eq:LCR} S\bullet Odd_G(L) = 2^r(|G[L]|+ |(G\star^rS)[L]|) +  \sum_{v\in L} S\bullet N_G(v)  \bmod 2^{r+1}\end{equation*}

Thanks to \cref{cor:i-e-oddbullet}, we have: $S\bullet Odd_G(L) = \sum_{T\subseteq L} (-2)^{|T|-1}S\bullet\Lambda^T_G$. The $r$-incidence condition implies that $(-2)^{|T|-1}S\bullet\Lambda^T_G= 0\bmod 2^{r+1}$ when $|T|>2$. The $r$-incidence condition also  implies that $S\bullet\Lambda^T_G $ is a multiple of $2^{r-1}$ when $|T|=2$, so $S\bullet\Lambda^T_G  = d \bmod 2^{r}$ with $d\in \{0, 2^{r-1}\}$. By definition of the $r$-local complementation, $d\neq 0$ iff the edge between the two vertices is toggled, so $\sum_{T\subseteq L, |T|=2} S\bullet\Lambda^T_G$ is equal to $2^{r-1}$ times the numbers of edges (which both end points are in $L$) toggled by the $r$-local complementation modulo $2^r$, i.e. $2^{r-1} (|G[L]|+ |(G\star^rS)[L]|) \bmod 2^r$. Thus, 
\begin{align*}
S\bullet Odd_G(L) &= \sum_{T\subseteq L} (-2)^{|T|-1}S\bullet\Lambda^T_G\\
&= \sum_{T\subseteq L, |T| = 2} (-2)^{|T|-1}S\bullet\Lambda^T_G +  \sum_{T\subseteq L, |T| = 1} (-2)^{|T|-1}S\bullet\Lambda^T_G  \bmod 2^{r+1}  \\
&= \sum_{T\subseteq L, |T| = 2} (-2) S\bullet\Lambda^T_G +  \sum_{T\subseteq L, |T| = 1}S\bullet\Lambda^T_G \bmod 2^{r+1} \\
&= -2^{r} (|G[L]|+ |(G\star^rS)[L]|) +  \sum_{v\in  L}S\bullet N_G(v) \bmod 2^{r+1} \\
&= 2^{r} (|G[L]|+ |(G\star^rS)[L]|) +  \sum_{v\in  L}S\bullet N_G(v) \bmod 2^{r+1}
\end{align*}
\end{proof}

\section{Proof of Proposition \ref{lemma:sametypes}}
\label{sec:proof:sametypes}

We begin by proving the following lemma.

\begin{lemma} \label{lemma:closed_neighborhood_is_MLS}
    If $G$ is in standard form, for any $u \in V^G_X$
    , $\{u\} \cup N_G(u)$ is a minimal local set.
\end{lemma}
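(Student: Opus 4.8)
The plan is to show that $L := \{u\} \cup N_G(u)$ is a local set, and then that it is minimal by inclusion. That $L$ is a local set is immediate: taking $D = \{u\}$ as generator, we have $Odd_G(D) = N_G(u)$, so $D \cup Odd_G(D) = \{u\} \cup N_G(u) = L$. The real work is minimality.

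For minimality, first I would use the standard form hypothesis to pin down the types of the vertices of $L$: by definition of standard form, $u \in V^G_X$ and every neighbour $v$ of $u$ is of type $Z$. Since every vertex lies in some minimal local set (\cref{thm:MLScover}), and $L$ contains one, $L$ contains a minimal local set $L' \se L$. I want to argue $L' = L$. Suppose for contradiction $L' \subsetneq L$, and let $D'$ be a generator of $L'$. Two cases: either $u \in L'$ or $u \notin L'$. If $u \notin L'$, then $L' \se N_G(u)$, i.e.\ $L'$ consists entirely of vertices of type $Z$; but then a generator $D' \se L'$ would have every vertex of type $Z$, and since the generator is nonempty, pick $w \in D'$ — by definition of type $Z$, $w \in Odd_G(D') \sm D'$ for every generator $D'$ of every minimal local set containing $w$, contradicting $w \in D'$. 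Hence $u \in L'$. Now since $u$ is of type $X$, for the generator $D'$ of the minimal local set $L'$ containing $u$ we must have $u \in D' \sm Odd_G(D')$. I would then analyze $Odd_G(D')$: because the vertices of $D' \sm \{u\}$ are all of type $Z$ hence lie in $N_G(u)$, and $u \in V^G_X$ is (by standard form) in an independent set of type-X vertices with all neighbours of type $Z$, the odd neighbourhood $Odd_G(D')$ should be forced to contain all of $N_G(u)$ (the contribution of $u$ itself to $Odd_G(D')$ is $N_G(u)$, and the contributions of the other type-$Z$ vertices of $D'$ lie inside $N_G(u)\cup(\text{their neighbourhoods})$, but crucially each such vertex $w\in N_G(u)$ is \emph{not} adjacent to $u$'s other type-$X$-independent structure). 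The cleanest route: since $L' \se L = \{u\}\cup N_G(u)$ and $u \in L'$, we have $Odd_G(D') \se L$, so $Odd_G(D') \se \{u\}\cup N_G(u)$; but $u \notin Odd_G(D')$ (as $u$ is of type $X$ w.r.t.\ $D'$), hence $Odd_G(D') \se N_G(u)$. Combined with $D' \se L' \se L$ and $u\in D'$, this gives $L' = D'\cup Odd_G(D') \se \{u\}\cup N_G(u) = L$ with equality iff $Odd_G(D')=N_G(u)$ and $D'\se\{u\}\cup N_G(u)$ — so I need $Odd_G(D')=N_G(u)$.

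To get $Odd_G(D') = N_G(u)$ exactly: write $D' = \{u\} \cup T$ with $T \se N_G(u)$, all vertices of $T$ of type $Z$. Then $Odd_G(D') = N_G(u) \,\Delta\, Odd_G(T)$. I claim $Odd_G(T) \cap N_G(u) = \emptyset$ and $Odd_G(T) \se N_G(u)$, which together force $Odd_G(T) = \emptyset$ and hence $Odd_G(D') = N_G(u)$, giving $L' = L$. The inclusion $Odd_G(T) \se N_G(u)$ follows from $Odd_G(D') \se N_G(u)$ and $N_G(u)\cap N_G(u) = N_G(u)$ — more carefully, $Odd_G(D') \se N_G(u)$ means $N_G(u)\Delta Odd_G(T)\se N_G(u)$, which forces $Odd_G(T)\se N_G(u)$. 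For the disjointness $Odd_G(T)\cap N_G(u)=\emptyset$: any $w \in T \se N_G(u)$ is of type $Z$, and its neighbours — here I would invoke that in standard form, a type-$Z$ vertex adjacent to the type-X vertex $u$ has no neighbour in $N_G(u)$ that would survive, because... actually the honest obstacle is exactly this adjacency analysis. So the main obstacle I anticipate: controlling the edges \emph{within} $N_G(u)$ and between $T$ and $N_G(u)$ using only the standard-form axioms. I would resolve it by noting $Odd_G(T)\,\Delta\,N_G(u) = Odd_G(D')$ is a local set contained in $L$ containing $u$; if it were a \emph{strict} subset of $L$, minimality of $L'$ is not yet contradicted, so instead I argue directly that $L$ itself is minimal: any minimal local set $L''\se L$ either avoids $u$ (impossible, shown above, since it would be all type-$Z$) or contains $u$, and then by \cref{prop:MLS2cases} and the type-$X$ status of $u$, $L''$ has a unique generator $D''\ni u$ with $u\notin Odd_G(D'')$, forcing $Odd_G(D'')\se N_G(u)$ and $D''\se L''\se L$; finiteness plus the fact that $D''$ must generate exactly $L''$ then pins down, by the structure of $T\mapsto Odd_G(T)$ on the type-$Z$ set $N_G(u)$, that the only possibility consistent with minimality is $D''=\{u\}$, whence $L'' = \{u\}\cup N_G(u) = L$. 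Thus $L$ is a minimal local set.
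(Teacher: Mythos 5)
Your reduction to showing that any minimal local set $L'\se L:=\{u\}\cup N_G(u)$ equals $L$ is the right frame, and your treatment of the case $u\notin L'$ contains exactly the key observation: a type-Z vertex can never lie in the generator of a minimal local set containing it, so a nonempty generator consisting only of vertices of $N_G(u)$ is impossible. The gap is in the case $u\in L'$: there you abandon this observation and instead try to force $Odd_G(T)=\emptyset$ for $T=D'\sm\{u\}$ by an adjacency analysis inside $N_G(u)$. That analysis cannot be completed from the standard-form axioms alone — standard form constrains the neighbours of type-X vertices but says nothing about edges among the type-Z vertices of $N_G(u)$, so $Odd_G(T)\cap N_G(u)=\emptyset$ is simply not forced — and you acknowledge the obstacle. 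Your fallback paragraph then asserts that minimality \emph{pins down} $D''=\{u\}$ without an argument, and along the way relies on the incorrect intermediate claim that $Odd_G(D')$ is a local set containing $u$ (it is not: $u\notin Odd_G(D')$ since $u$ is of type X, and $Odd_G(D')$ alone need not be a local set).

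The missing step is one sentence, and it is the very argument you already used in the first case: every vertex of $D'\sm\{u\}$ lies in $L'\sm\{u\}\se N_G(u)$, hence is of type Z, and belongs to the minimal local set $L'=D'\cup Odd_G(D')$ generated by $D'$; by the definition of type Z it must lie in $Odd_G(D')\sm D'$, contradicting its membership in $D'$. Hence $D'\se\{u\}$, and since generators are nonempty, $D'=\{u\}$, so $L'=\{u\}\cup Odd_G(\{u\})=\{u\}\cup N_G(u)=L$, proving minimality with no case distinction and no computation of odd neighbourhoods. This is precisely the paper's (one-line) proof. A minor further point: the fact that $L$ contains a minimal local set follows from finiteness (take an inclusion-minimal local set inside $L$; any local set strictly contained in it would also sit inside $L$), not from \cref{thm:MLScover}, which goes in the other direction.
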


\begin{proof}
    A minimal local set $L = D \cup Odd_G(D)$ in $\{u\} \cup N_G(u)$ is such that $D=\{u\}$, as $D$ is non-empty and each vertex of $N_G(u)$ is of type Z.
\end{proof}

\sametypes*

\begin{proof}
First, by \cref{lemma:LUbottom}, $V^{G_1}_\bot = V^{G_2}_\bot$.
Let us show that $V^{G_1}_X = V^{G_2}_X$. Let $u$ be a vertex of type X in $G_1$ and suppose by contradiction that $u$ is of type Z in $G_2$. By \cref{lemma:closed_neighborhood_is_MLS}, $L \defeq \{u\} \cup N_{G_1}(u)$ is a minimal local set and $u$ is the smallest element in $L$ (in both $G_1$ and $G_2$). Moreover, $L = D \cup Odd_{G_2}(D)$ and $u \in Odd_{G_2}(D) \sm D$ by definition of type Z. Thus, $u$ is adjacent to at least one vertex $v \in D$ in $G_2$. $v$ is of type X in $G_2$ as, by definition of the standard form, there is no vertex of type Y, and $N_{G_1}(u)$ contains no vertex of type $\bot$. This is a contradiction with the definition of the standard form, as $u \prec v$. By symmetry, any vertex of type X in $G_2$ is also of type X in $G_1$. Overall, each vertex has the same type in $G_1$ and $G_2$.
By \cref{lemma:closed_neighborhood_is_MLS}, for any vertex $u$ of type X, $L \defeq \{u\} \cup N_{G_1}(u)$ is a minimal local set, thus there exists $D\subseteq L$ such that $L = D\cup Odd_{G_2}(D)$. As $D$ is non-empty and does not contain any vertex of type Z, $D=\{u\}$, implying that $N_{G_1}(u) = N_{G_2}(u)$.
\end{proof}

\section{Proof of Lemma \ref{lemma:standardform_implies_rotations}}
\label{sec:proof:standardform_implies_rotations}

\standardformrotation*

\begin{proof}
    By \cref{lemma:sametypes}, each vertex has the same type in $G_1$ and $G_2$, so, by \cref{lemma:LUconstraints} there exists $U$ such that $\ket {G_2} = U \ket {G_1}$, with $U=e^{i\phi}\bigotimes_{u\in V} U_u$ where:
    \begin{itemize}
        \item $C_u \defeq U_u$ is a Clifford operator if  $u$ is of type $\bot$,
        \item $U_u = X(\theta_u) Z^{b_u}$ if $u$ is of type X,
        \item  $U_u = Z(\theta_u) X^{b_u}$ if $u$ is of type Z.
    \end{itemize}
    with $b_u \in \{0,1\}$. Additionally, if $\ket{G_1}$ and $\ket{G_2}$ are LC$_r$-equivalent, we choose $U$ so that the angles satisfy $\theta_u = 0\bmod \pi/2^r$. Define $V_X \defeq V^{G_1}_X = V^{G_2}_X$, $V_Z \defeq V^{G_1}_Z = V^{G_2}_Z$, and $V_\bot \defeq V^{G_1}_\bot = V^{G_2}_\bot$. Summing up,

    $$ \ket {G_2} = e^{i \phi} \bigotimes_{u\in V_X}X(\alpha_u) Z^{b_u} \bigotimes_{u\in V_Z} Z(\beta_u) X^{b_u} \bigotimes_{u\in \bot}C_u \ket{G_1} $$
    As for any vertex $u \in V$, $X_u Z_{N_{G_1}(u)} \ket{G_1} = \ket{G_1}$, for possibly different angles $\beta_u$ (albeit equal modulo $\pi/2$), values $b_u$ and Clifford operators $C_u$, $$ \ket {G_2} = e^{i \phi} \bigotimes_{u\in V_X}X(\alpha_u) Z^{b_u} \bigotimes_{u\in V_Z} Z(\beta_u) \bigotimes_{u\in \bot}C_u \ket{G_1} $$

    By applying $\bra{+}_{V_X}\bra{0}_{V_Z}$ on both sides of the previous equation we get, on the LHS, $\bra{+}_{ V_X}\bra{0}_{V_Z} \ket {G_2} = \frac1{\sqrt{2^{|V_Z|}}} \bra{+}_{V_X} \ket{+}_{V_X} \ket{G_2[\bot]} = \frac1{\sqrt{2^{|V_Z|}}} \ket{G_2[\bot]}$ as $V_X$ is an independent set in $G_1$; and, on the RHS, $\begin{cases}\frac{e^{i\phi}}{\sqrt{2^{|V_Z|}}}\otimes_{u\in\bot} C_u \ket{G_1[\bot]} & \text{if $\forall u\in V_X, b_u=0$}\\0&\text{otherwise}
    \end{cases}$, since $\bra 0 Z(\theta)=\bra 0$, $\bra +X(\theta) = \bra +$ and $\bra +Z = \bra -$. Hence, every $b_u = 0$, and $\ket {G_2[\bot]} = e^{i \phi} \bigotimes_{u\in \bot}C_u \ket{G_1[\bot]}$ (if $\bot = \emptyset$ then $e^{i\phi} = 1$).

    Using \cref{prop:LCLC}, there exist a sequence of (possibly repeating) vertices $a_1, \cdots, a_m \in \bot$ and a set $D \se \bot$ such that $G_2[\bot] = G_1[\bot] \star a_1 \star a_2 \star \cdots \star a_m$ and $e^{i \phi}\bigotimes_{u\in \bot}C_u = L_{a_m} \cdots L_{a_2} L_{a_1}$, where $L_{a_i} = L_{a_i}^{{G_1}[\bot] \star a_1 \star \cdots \star a_{i-1}}$.

    These operators do not implement a local complementation on the full graph. For any $i \in [1, m]$, let us introduce $$L'_{a_i} = L_{a_i} Z\left(- \frac{\pi}{2}\right)_{N_{G_1 \star a_1 \star \cdots \star a_{i-1}}(a_i)\sm \bot} = X\left(\frac{\pi}{2}\right)_{a_i} Z\left(-\frac{\pi}{2}\right)_{N_{G_1 \star a_1 \star \cdots \star a_{i-1}}(a_i)} = L_{a_i}^{G_1 \star a_1 \star \cdots \star a_{i-1}}$$

    Then, for possibly different angles $\beta_u$ (albeit equal modulo $\pi/2$),
    \begin{equation*}
        \begin{split}
            \ket {G_2} &= \bigotimes_{u\in V_X}X(\alpha_u)  \bigotimes_{u\in V_Z} Z(\beta_u) L'_{a_m} \cdots L'_{a_2} L'_{a_1} \ket{G_1}\\
            & = \bigotimes_{u\in V_X}X(\alpha_u)  \bigotimes_{u\in V_Z} Z(\beta_u) \ket{G_1 \star a_1 \star \cdots \star a_m }
        \end{split}
    \end{equation*}

\end{proof}

\section{Proof of Lemma \ref{lemma:cond_angles}}
\label{sec:proof:cond_angles}

We begin by proving the following lemma.

\begin{lemma} \label{lem:cond_angles}
    Given $G_1$, $G_2$ in standard form, if $\ket {G_2} = \bigotimes_{u\in V^{G_1}_X}X(\alpha_u)\bigotimes_{v\in V^{G_1}_Z} Z(\beta_v) \ket{G_1}$, then for any $K \se V^{G_1}_Z$, $\sum_{v\in K}\beta_{v}+\sum_{u\in Odd_{G_1}(K)\cap V^{G_1}_X} \alpha_u=|G_1\Delta G_2[K]|\pi  \bmod 2\pi$. 
\end{lemma}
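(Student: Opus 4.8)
The plan is to compute the amplitude $\ip{\psi}{G_2}$ of $\ket{G_2}=U\ket{G_1}$ on a carefully chosen product bra $\bra{\psi}$, in two ways. Write $U:=\bigotimes_{u\in V^{G_1}_X}X(\alpha_u)\bigotimes_{v\in V^{G_1}_Z}Z(\beta_v)$, set $V_X:=V^{G_1}_X$, $V_Z:=V^{G_1}_Z$, $V_\bot:=V^{G_1}_\bot$, $A:=Odd_{G_1}(K)\cap V_X$, and take
\[
\bra{\psi}:=\bigotimes_{v\in K}\bra{1}\;\bigotimes_{v\in V_Z\sm K}\bra{0}\;\bigotimes_{u\in A}\bra{-}\;\bigotimes_{u\in V_X\sm A}\bra{+}\;\bigotimes_{w\in V_\bot}\bra{0}.
\]
I would first observe, from $\bra{0}Z(\beta)=\bra{0}$, $\bra{1}Z(\beta)=e^{i\beta}\bra{1}$, $\bra{+}X(\alpha)=\bra{+}$ and $\bra{-}X(\alpha)=e^{i\alpha}\bra{-}$, that $\bra{\psi}$ is a left eigenvector of $U$: $\bra{\psi}U=e^{i(\sum_{v\in K}\beta_v+\sum_{u\in A}\alpha_u)}\bra{\psi}$. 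Hence $\ip{\psi}{G_2}=e^{i(\sum_{v\in K}\beta_v+\sum_{u\in A}\alpha_u)}\ip{\psi}{G_1}$, so it remains to show that $\ip{\psi}{G_2}/\ip{\psi}{G_1}=(-1)^{|G_1\Delta G_2[K]|}$ (and in particular that $\ip{\psi}{G_1}\neq 0$).

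Next I would compute $\ip{\psi}{G}$ for any graph $G$ in standard form with type partition $V_X,V_Z,V_\bot$. Projecting the qubits of $W:=(V_Z\sm K)\cup V_\bot$ onto $\bra{0}$ deletes those vertices by \cref{prop:meas}, so $\ip{\psi}{G}=2^{-|W|/2}\,\bra{1}_K\bra{-}_A\bra{+}_{V_X\sm A}\ket{G[K\cup V_X]}$. Expanding $\ket{G[K\cup V_X]}$ in the computational basis, the projection $\bra{1}_K$ pins the $K$-component to the all-ones string; since $V_X$ is an independent set in $G$ (standard form), the exponent of $(-1)$ splits as $|G[K]|$ plus the number of $G$-edges between $K$ and the chosen subset of $V_X$, so the amplitude reduces to $(-1)^{|G[K]|}$ times a character sum $\sum_{z\in\{0,1\}^{V_X}}(-1)^{\sum_{u\in A}z_u+\ell(z)}$, where $\ell(z)\in\{0,1\}$ is the parity of the number of $G$-edges between $K$ and $\{u:z_u=1\}$. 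Using that, for $u\in V_X$, $u\in Odd_G(K)$ iff $|N_G(u)\cap K|$ is odd, one gets $\ell(z)\equiv\sum_{u\in Odd_G(K)\cap V_X}z_u\bmod 2$, so the sum factorises over $V_X$ and equals $2^{|V_X|}$ if $A=Odd_G(K)\cap V_X$ and $0$ otherwise. Thus $\ip{\psi}{G}=(-1)^{|G[K]|}\,2^{-(|W|+|K|)/2}$ when $A=Odd_G(K)\cap V_X$, and $0$ otherwise.

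Finally I would plug in $G=G_1$ and $G=G_2$. For $G_1$ the condition $A=Odd_{G_1}(K)\cap V_X$ holds by definition, giving $\ip{\psi}{G_1}=(-1)^{|G_1[K]|}2^{-(|W|+|K|)/2}\neq 0$. For $G_2$, since $\ket{G_2}=U\ket{G_1}$ with $U$ a local unitary we have $\ket{G_1}=_{LU}\ket{G_2}$, so by \cref{lemma:sametypes} the two graphs share the same type partition (hence the same $W$) and satisfy $N_{G_1}(u)=N_{G_2}(u)$ for all $u\in V_X$; therefore $Odd_{G_2}(K)\cap V_X=Odd_{G_1}(K)\cap V_X=A$ and $\ip{\psi}{G_2}=(-1)^{|G_2[K]|}2^{-(|W|+|K|)/2}$. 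Dividing yields $e^{i(\sum_{v\in K}\beta_v+\sum_{u\in A}\alpha_u)}=(-1)^{|G_1[K]|+|G_2[K]|}=(-1)^{|G_1\Delta G_2[K]|}$, which is the claimed congruence modulo $2\pi$. The main obstacle is the amplitude computation of the middle step — in particular checking that the character sum collapses; this relies essentially on $V_X$ being an independent set in both $G_1$ and $G_2$, which is exactly what standard form together with \cref{lemma:sametypes} provides.
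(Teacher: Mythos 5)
Your proof is correct, and it takes a route that is genuinely different in mechanics from the paper's. The paper starts from the stabilizer identity $\ket{G_2}=\mathcal P^{G_2}_K\ket{G_2}$ for $K\se V_Z^{G_1}$, commutes the Pauli $X_KZ_{Odd_{G_2}(K)}$ through the layer of X- and Z-rotations (each commutation producing a phase $e^{-i\theta}$), uses $\mathcal P^{G_1}_K\ket{G_1}=\ket{G_1}$, and then kills everything with the single $K$-independent projection $\bra{+}_{V_X}\bra{0}_{V\sm V_X}$; here \cref{lemma:sametypes} is needed to ensure $Odd_{G_1}(K)\,\Delta\,Odd_{G_2}(K)$ avoids $V_X$ so that the residual $Z$'s are absorbed by the $\bra 0$'s. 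You instead never touch the stabilizers explicitly: you choose a $K$-dependent product bra that is an eigenvector of the rotation layer with exactly the phase $e^{i(\sum_{v\in K}\beta_v+\sum_{u\in A}\alpha_u)}$, and you evaluate the two graph-state overlaps directly via \cref{prop:meas} and a computational-basis character sum, where the independence of $V_X$ (standard form) makes the sum collapse and \cref{lemma:sametypes} guarantees $Odd_{G_2}(K)\cap V_X=Odd_{G_1}(K)\cap V_X$ and the matching type partition for $G_2$, so that both overlaps are nonzero and differ only by $(-1)^{|G_1[K]|+|G_2[K]|}=(-1)^{|G_1\Delta G_2[K]|}$. The two arguments use the same two structural inputs (standard form and \cref{lemma:sametypes}) and both extract the phase through a product-state overlap; what the paper's version buys is uniformity (one fixed projector for all $K$, and a template reused elsewhere, e.g.\ in the proof of \cref{prop:clif}), while yours buys a self-contained, elementary amplitude computation in which the sign $(-1)^{|G_1\Delta G_2[K]|}$ appears transparently as the ratio of the two overlaps, at the cost of a slightly longer combinatorial calculation.
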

    
\begin{proof} For compactness, we write $V_X \defeq V^{G_1}_X = V^{G_2}_X$ and $V_Z \defeq V^{G_1}_Z = V^{G_2}_Z$.
    According to  \cref{lemma:sametypes}, for any $u\in V_X$, $N_{G_1}(u) = N_{G_2}(u)$, so for any $ v\in V \setminus V_X$, $N_{G_1}(v)\cap V_X= N_{G_2}(v)\cap V_X$. 
    Notice that for any $\theta$, $X(\theta)Z = e^{-i\theta}ZX(-\theta)$ and $Z(\theta)X= e^{-i\theta}XZ(-\theta)$ so, for any $K\subseteq V_Z$, 
    \begin{align*} \ket{G_2} &= (-1)^{|G_2[K]|}X_{K}Z_{Odd_{G_2}(K)}\ket{G_2}\\
    &=(-1)^{|G_2[K]|} X_{K}Z_{Odd_{G_2}(K)}\bigotimes_{u\in V_X}X(\alpha_u)\bigotimes_{v\in V_Z} Z(\beta_v) \ket{G_1}\\
    &=(-1)^{|G_2[K]|}X_{K}Z_{Odd_{G_2}(K)}\bigotimes_{u\in V_X}X(\alpha_u)\bigotimes_{v\in V_Z} Z(\beta_v) (-1)^{|G_1[K]|}X_{K}Z_{Odd_{G_1}(K)}\ket{G_1}\\
    &=(-1)^{|G_2\Delta G_1[K]|}e^{-i\left(\sum_{v\in K}\beta_{v}+\sum_{u\in Odd_{G_1}(K)\cap V_X} \alpha_u\right)}Z_{Odd_{G_1}(K)\Delta Odd_{G_2}(K)}\\ & \hspace{227pt} \bigotimes_{u\in V_X}X(\pm \alpha_u)\bigotimes_{v\in V_Z} Z(\pm \beta_v) \ket{G_1}
    \end{align*}
    By applying $\bra{+}_{V_X}\bra{0}_{V\setminus V_X}$ on both sides of the previous equation we get, on the LHS,  $\bra{+}_{V_X}\bra{0}_{V\setminus V_X}\ket{G_2} = \frac1{\sqrt{2^{n-|V_X|}}}\bra{+}_{V_X}\ket{G_2[V_X]} = \frac1{\sqrt{2^{n-|V_X|}}}$, as $V_X$ is an independent set in $G_2$; and, on the RHS, for any $K\subseteq V_Z$,
    \begin{align*}
        & \bra{+}_{V_X}\bra{0}_{V\setminus V_X} (-1)^{|G_2\Delta G_1[K]|} e^{-i\left(\sum_{v\in K}\beta_{v}+\sum_{u\in Odd_{G_1}(K)\cap V_X} \alpha_u\right)}Z_{Odd_{G_1}(K)\Delta Odd_{G_2}(K)} \\ & \hspace{243pt} \bigotimes_{u\in V_X}X(\pm \alpha_u)\bigotimes_{v\in V_Z} Z(\pm \beta_v) \ket{G_1}\\ & = \frac{e^{i\left(|G_1\Delta G_2[K]|\pi-\sum_{v\in K}\beta_{v}-\sum_{u\in Odd_{G_1}(K)\cap V_X} \alpha_u \right)}}{\sqrt{2^{n-|V_X|}}}
    \end{align*}
    As a consequence, $\sum_{v\in K}\beta_{v}+\sum_{u\in Odd_{G_1}(K)\cap V_X} \alpha_u=|G_1\Delta G_2[K]|\pi  \bmod 2\pi$.
\end{proof}

\conditionangles*

\begin{proof} For compactness, we write $V_X \defeq V^{G_1}_X = V^{G_2}_X$ and $V_Z \defeq V^{G_1}_Z = V^{G_2}_Z$. According to \cref{lem:cond_angles}, for any $K \se V_Z$,  $\sum_{v\in K}\beta_{v}+\sum_{u\in Odd_{G_1}(K)\cap V_X} \alpha_u=|G_1\Delta G_2[K]|\pi  \bmod 2\pi$, thus in particular,  when $K=\{v\}$, $\beta_{v}  = - \sum_{u \in N_{G_1}(v)\cap V_X}\alpha_u \bmod 2\pi$. 
We can then show, by induction on the size of $K$, that $\sum_{u \in \Lambda_{G_1}^K \cap V_X}\alpha_u  = 0\bmod \dfrac{\pi}{2^{k+\delta(k)}}$, where $k=|K|-2$.
For $k=0$, since $\Lambda_{G_1}^{\{v_0,v_1\}} = N_{G_1}(v_0)\cap N_{G_1}(v_1)$, we have 
\begin{align*}
2\sum_{u \in \Lambda_{G_1}^{\{v_0,v_1\}} \cap V_X}\alpha_u &= \sum_{u \in N_{G_1}(v_0) \cap V_X}\alpha_u  + \sum_{u \in N_{G_1}(v_1) \cap V_X}\alpha_u - \sum_{u \in Odd_{G_1}(\{v_0,v_1\}) \cap V_X}\alpha_u\\
&= -\beta_{v_0} - \beta_{v_1} + \beta_{v_0} + \beta_{v_1} + |G_1\Delta G_2 [\{v_0,v_1\}]|\pi \bmod 2\pi\\
&= |G_1\Delta G_2 [\{v_0,v_1\}]|\pi \bmod 2\pi
\end{align*}

So $\sum_{u \in \Lambda_{G_1}^{\{v_0,v_1\}} \cap V_X}\alpha_u  = 0\bmod{\frac \pi 2}$. 

To prove the general case we use the following including-excluding-like identity for symmetric differences and intersections: given a set $V$ and a family $(A_v)_{v\in K}$ s.t. for all $v\in K$, $A_v\subseteq V$, for any $(\alpha_u)_{u\in V}$,  $\sum_{u\in \Delta_{v\in K}A_v}\alpha_u =\sum_{R\subseteq K}(-2)^{|R|-1}\sum_{u\in \bigcap_{v\in R} A_v}\alpha_u$. 
 
Thus, for any $K \se V_Z$, 
\begin{align*}
& \sum_{u\in Odd_{G_1}(K)\cap V_X} \!\!\!\!\!\alpha_u\\ &= \sum_{R\subseteq K}(-2)^{|R|-1}\sum_{u\in \bigcap_{v\in \Lambda^R_{G_1}\cap V_X}}\!\!\!\!\!\alpha_u \\
&= \sum_{R\subseteq K, |R|>2}(-2)^{|R|-1}\!\!\!\!\!\!\!\!\!\sum_{u\in \bigcap_{v\in \Lambda^R_{G_1}\cap V_X}}
\!\!\!\!\!\alpha_u  -2 \sum_{v_0,v_1\in K} \sum_{u\in \bigcap_{v\in \Lambda^{\{v_0,v_1\}}_{G_1}\cap V_X}}
\!\!\!\!\!\alpha_u
+ \sum_{v_0\in K} \sum_{u\in N_{G_1}(v_0)\cap V_X}\!\!\!\!\!\alpha_u\\
&= \sum_{R\subseteq K, |R|>2}(-2)^{|R|-1}\!\!\!\!\!\!\!\!\!\sum_{u\in \bigcap_{v\in \Lambda^R_{G_1}\cap V_X}}\!\!\!\!\!\alpha_u + |G_1\Delta G_2[K]|\pi
- \sum_{v_0\in K} \beta_{v_0} \bmod 2\pi
 \end{align*}

As a consequence, for any $K \se V_Z$,
  $ \sum_{R\subseteq K, |R|>2}(-2)^{|R|-1}\sum_{u\in \bigcap_{v\in \Lambda^R_{G_1}\cap V_X}}\alpha_u =0\bmod 2\pi$, as $\sum_{v\in K}\beta_{v}+\sum_{u\in Odd_{G_1}(K)\cap V_X} \alpha_u=|G_1\Delta G_2[K]|\pi  \bmod 2\pi$.  So by induction on the size of $K$, for any $K$ s.t. $|K|>2$, $2^{|K|-1}\sum_{u \in \Lambda_{G_1}^K \cap V_X}\alpha_u =0\bmod 2\pi$, as 
\begin{align*}& \sum_{R\subseteq K, |R|>2}(-2)^{|R|-1}\sum_{u\in \bigcap_{v\in \Lambda^R_{G_1}\cap V_X}}\!\!\!\!\!\alpha_u \\
&= 2^{|K|-1}\sum_{u\in \bigcap_{v\in \Lambda^K_{G_1}\cap V_X}}\!\!\!\!\!\alpha_u + \sum_{R\subsetneq K, |R|>2}(-2)^{|R|-1}\sum_{u\in \bigcap_{v\in \Lambda^R_{G_1}\cap V_X}}\!\!\!\!\!\alpha_u\\  &= 2^{|K|-1}\sum_{u\in \bigcap_{v\in \Lambda^K_{G_1}\cap V_X}}\!\!\!\!\!\alpha_u + 0 \bmod 2\pi \end{align*}

\end{proof}

\section{Proof of Theorem \ref{thm:LU_imply_LCr}}
\label{sec:proof:LU_imply_LCr}

\lulcr*

\begin{proof}    
    Suppose $\ket{G_1}$ and $\ket{G_2}$ LU-equivalent. By \cref{prop:standardform} along with \cref{lemma:standardform_implies_rotations}, there exist $G'_1$ locally equivalent to $G_1$ and $G'_2$ locally equivalent to $G_2$ both in standard form such that $\ket {G'_2} = \bigotimes_{u\in V_X}X(\alpha_u)\bigotimes_{v\in V_Z} Z(\beta_v) \ket{G'_1}$ where $V_X$ (resp. $V_Z$) denotes the set of vertices of type X (resp. Z) in $G'_1$ and $G'_2$.

    If two vertices $u$ and $v$ are twins, i.e.~$N_G(u)=N_G(v)$, then $X_u(\theta_1)X_v(\theta_2)\ket G = X_u(\theta_1 + \theta_2) \ket G$. Indeed, $X_u(\theta)\ket G = e^{i\frac\theta 2}(\cos(\frac \theta 2)\ket G + i\sin(\frac \theta 2)X_u\ket G) = e^{i\frac\theta 2}(\cos(\frac \theta 2)\ket G + i\sin(\frac \theta 2)Z_{N_G(u)}\ket G) = X_v(\theta)\ket G $. Hence, without loss of generality, we suppose that if $u,v \in V_X$ are twins, then at most one of the angles $\alpha_u, \alpha_v$ is non-zero.

    Moreover, if a vertex $u$ is a leaf, i.e.~there exists a unique $v \in V$ such that $u \sim_{G} v$, then $X_u(\theta) Z_v(-\theta) \ket G = \ket G$. Indeed, $X_u(\theta) \ket G = e^{i\frac\theta 2}(\cos(\frac \theta 2)\ket G + i\sin(\frac \theta 2)Z_{v}\ket G) = Z_v(\theta) \ket G$. Hence, without loss of generality, we suppose that if $u \in V_X$ is a leaf, then $\theta = 0$. 
    
    If a vertex $u$ is an isolated vertex, $X_u(\theta) \ket G = \ket G$. Hence, without loss of generality, we suppose that if a vertex $u \in V_X$ is an isolated vertex, then $\alpha_u = 0$.

    Let us show that for every $u \in V_X$, $\alpha_u=0\bmod \pi/2^{|V_Z|-2+\delta(|V_Z|-2)}$ where $\delta$ is the Kronecker delta. If $|V_Z|\ls 1$, the proof is trivial as every vertex in $V_X$ is a leaf or an isolated vertex. Else, we prove by induction over the size of a set $K \se V_Z$ that for every $u\in \Lambda_{G'_1}^{V_Z\setminus K} \cap V_X$, $\alpha_u = 0\bmod \pi/2^{|V_Z|-2+\delta(|V_Z|-2)}$.

    \begin{itemize}
        \item If $K = \emptyset$: there is at most one vertex $u$ in $\Lambda_{G'_1}^{V_Z\setminus K} \cap V_X = \Lambda_{G'_1}^{V_Z} \cap V_X$ such that $\alpha_u$ is non-zero. By \cref{lemma:cond_angles}, $\alpha_u = 0\bmod \pi/2^{|V_Z|-2+\delta(|V_Z|-2)}$.
        \item Else, there is at most one vertex $u$ in $\left(\Lambda_{G'_1}^{V_Z\setminus K} \sm \bigcup_{K' \varsubsetneq K} \Lambda_{G'_1}^{V_Z\setminus K'} \right) \cap V_X$ such that $\alpha_u$ is non-zero. For any vertex $v \in \Lambda_{G'_1}^{V_Z \sm K'} \cap V_X$ such that $K' \varsubsetneq K$, $\alpha_u = 0\bmod \pi/2^{|V_Z|-2+\delta(|V_Z|-2)}$ by hypothesis of induction. If $|V_Z|-|K|=1$, then $u$ is a leaf and thus $\alpha_u = 0$. Else, by \cref{lemma:cond_angles}, $\sum_{w \in \Lambda_{G'_1}^{V_Z \sm K} \cap V_X}\alpha_w  = 0\bmod \dfrac{\pi}{2^{|V_Z|-2+\delta(|V_Z|-2)}}$, implying that $\alpha_u = 0\bmod \dfrac{\pi}{2^{|V_Z|-2+\delta(|V_Z|-2)}}$.
    \end{itemize}

    A vertex $u \in V_X$ such that for every $K \se V_Z$, $u \notin \Lambda_{G'_1}^{V_Z\setminus K}$, is an isolated vertex and thus $\alpha_u = 0$. Hence, for every $u \in V_X$, $\alpha_u=0\bmod \dfrac{\pi}{2^{|V_Z|-2+\delta(|V_Z|-2)}}$. By \cref{lemma:cond_angles}, for every vertex $v \in V_Z$, $\beta_v  = - \sum_{u \in N_{G'_1}(v)\cap V_X}\alpha_u = 0\bmod \dfrac{\pi}{2^{|V_Z|-2+\delta(|V_Z|-2)}}$. By \cref{lemma:less_than_half_Z}, $|V_Z|$ is upper bounded by $\lfloor n/2 \rfloor$, where $n$ is the order of $G_1$ and $G_2$. Thus, the angles are multiples of $\pi/2^{\lfloor n/2 \rfloor-1}$ ($\pi/2^{\lfloor n/2 \rfloor-2}$ in the case $|V_Z|\gs 3$). By \cref{lemma:implements_lc}, $\bigotimes_{u\in V_X}X(\alpha_u)\bigotimes_{v\in V_Z} Z(\beta_v)$ implements an $(\lfloor n/2 \rfloor-1)$-local complementation.    
\end{proof}

\section{Proof of Proposition \ref{prop:lu-lc}}
\label{app:proof:lu-lc}

We begin by proving the following lemma.

\begin{lemma} \label{lemma:lu-lc}
    $LU \Leftrightarrow LC$ holds for graphs in standard form if and only if any $r$-local complementation over the vertices of type X can be implemented by local complementations.
\end{lemma}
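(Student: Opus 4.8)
The plan is to prove the two implications separately. The forward implication is immediate: if $LU \Leftrightarrow LC$ holds for a graph $G$ in standard form and $G\star^r S$ is any valid $r$-local complementation with $\supp(S)\se V^G_X$, then by \cref{prop:clif} the state $\ket{G\star^r S}$ is obtained from $\ket G$ by a product of $X$- and $Z$-rotations, so $\ket{G\star^r S} =_{LU}\ket G$; the hypothesis then gives $\ket{G\star^r S} =_{LC_1}\ket G$, and by the characterisation of $LC_1$-equivalence via local complementations \cite{VandenNest04}, $G\star^r S$ is locally equivalent to $G$, i.e.\ the $r$-local complementation is implemented by local complementations.

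For the converse, assume every $r$-local complementation over $V^G_X$ can be implemented by local complementations, and let $G'$ be any graph with $\ket G =_{LU}\ket{G'}$; since $LC_1\subseteq LU$, it suffices to show $\ket G =_{LC_1}\ket{G'}$. First I would replace $G'$ by a locally equivalent graph $G''$ in standard form (\cref{prop:standardform}); then $\ket G =_{LU}\ket{G''}$, $\ket{G''} =_{LC_1}\ket{G'}$, and it is enough to prove $\ket G =_{LC_1}\ket{G''}$. By \cref{thm:LU_imply_LCr} we have $\ket G =_{LC_r}\ket{G''}$ for some $r$, so \cref{lemma:standardform_LCr_lc} (together with the commutation noted in the remark following it) provides a sequence $T$ of local complementations on vertices of type $\bot$ and an $r$-incident independent multiset $S$ with $\supp(S)\se V^G_X$ such that $G'' = (G\star T)\star^r S = (G\star^r S)\star T$.

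It then remains to transfer the hypothesis from $G$ to the $r$-local complementation appearing here. Because $G$ is in standard form, no vertex of type $X$ is adjacent to a vertex of type $\bot$, so local complementations on $\bot$-vertices leave every $X$-$Z$ edge unchanged; hence $S\bullet\Lambda^K_G = S\bullet\Lambda^K_{G\star T}$ for all $K\se V\sm V^G_X$ (this quantity is $0$ whenever $K$ meets $V^G_\bot$, and otherwise depends only on $X$-$Z$ edges), so $S$ is $r$-incident in $G$ and $G\star^r S$ is a valid $r$-local complementation over $V^G_X$. By hypothesis $G\star^r S$ is locally equivalent to $G$, and since $G'' = (G\star^r S)\star T$ is obtained from $G\star^r S$ by local complementations, $G''$ is locally equivalent to $G$. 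Therefore $\ket G =_{LC_1}\ket{G''}$, hence $\ket G =_{LC_1}\ket{G'}$, which proves $LU\Leftrightarrow LC$ for $G$.

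The main obstacle I anticipate is precisely this transfer step: one must make sure that both the validity (the $r$-incidence of $S$) and the action of the single $r$-local complementation produced by \cref{lemma:standardform_LCr_lc} are insensitive to the $\bot$-local complementations $T$, so that the hypothesis about $G$ itself — rather than about the auxiliary graph $G\star T$ — can be invoked; this is exactly what the structure of the standard form (disjointness of the $X$-incident and $\bot$-incident edges) buys us, and it is also what underlies the commutation $(G\star T)\star^r S = (G\star^r S)\star T$. Everything else is a routine chaining of the equivalences established earlier.
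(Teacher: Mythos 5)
Your proof is correct and follows essentially the same route as the paper: the forward direction via \cref{prop:clif} and the converse via \cref{thm:LU_imply_LCr}, \cref{prop:standardform} and \cref{lemma:standardform_LCr_lc}. The "transfer" step you flag as the main obstacle is exactly what the paper disposes of in the remark following \cref{lemma:standardform_LCr_lc} (no X--$\bot$ edges in standard form, so the $\bot$-local complementations commute with the $r$-local complementation), so your argument just makes that point explicit.
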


\begin{proof}
    Let $G_1$ be a graph in standard form and $G_2$ be an arbitrary graph such that $\ket{G_1} =_\text{LU} \ket{G_2}$. By \cref{thm:LU_imply_LCr}, $G_1$ and $G_2$ are $(\lfloor n/2 \rfloor -1)$-locally equivalent, where $n$ is the order of the graphs. 
    By \cref{prop:standardform}, there exists $G'_2$ in standard form locally equivalent to $G_2$. By \cref{lemma:standardform_LCr_lc}, there exist $G''_2$ in standard form\footnote{$G''_2$ is in standard form because it is obtained from $G'_2$ by local complementations on vertices of type $\bot$.} locally equivalent to $G_2$ such that $G_1$ and $G''_2$ are related by a single $(\lfloor n/2 \rfloor -1)$-local complementation over the vertices of type X. 
    If the $(\lfloor n/2 \rfloor -1)$-local complementation can be implemented using local complementations, then $G_1$ and $G''_2$ are locally equivalent, so are $G_1$ and $G_2$.

    On the contrary, let $G_2$ be the result of an $r$-local complementation on a graph $G_1$. In particular, $\ket{G_1}$ and $\ket{G_2}$ are LU-equivalent. Hence, if $LU \Leftrightarrow LC$ holds for $G_1$, then $G_1$ and $G_2$ are locally equivalent, i.e. the $r$-local complementation can be implemented by local complementations.    
\end{proof}

\lulc*

\begin{proof}
    Any graph can be put in standard form by means of local complementations by \cref{prop:standardform}. \cref{lemma:lu-lc} along with the fact that the $LU \Leftrightarrow LC$ property is invariant by LC$_1$-equivalence, proves the proposition.
\end{proof}

\section{A graph where each vertex is either a leaf or adjacent to a leaf is in standard form}
\label{app:leaf_standard_form}

Let $G$ be a graph where each vertex is either of leaf of adjacent to a leaf. We suppose without loss of generality that $G$ is connected, as $LU \Leftrightarrow LC$ holds for a graph if and only if $LU \Leftrightarrow LC$ holds for its connected components. Moreover, we suppose without loss of generality that the graph $G$ is of order at least 3, else the result is trivial.

Let us partition the vertex set into $L$ "the leafs" and $P$ "the parents": $|L| \gs |P|$, any vertex in $L$ is adjacent to exactly one vertex in $P$, and any vertex in $P$ is adjacent to at least one vertex in $L$. Let us prove that the vertices in $L$ are of type X while the vertices in $P$ are of type Z. For any vertex $u \in L$, $\{u\} \cap N_G(u)$ is a minimal local set of dimension 1 generated by $\{u\}$ (as the unique neighbour of $u$ is not of degree 1). Moreover, a minimal local set cannot be generated by a set containing a vertex in $P$. Indeed, if a local set is generated by a set containing a vertex $v \in P$ related to a vertex $u\in L$, the local set is not minimal as it strictly contains the minimal local set $\{u,v\}$. Thus, every minimal local set is generated by a set in $L$. 

Fix any ordering of the vertices where for any leaf $u$ related to a vertex $v$, $u \prec v$. Then, $G$ is in standard form.

\section{$C_{t,k}$ and $C'_{t,k}$ are in standard form.}
\label{app:c_standard_form}

Fix two integers $k$ and $t$ such that $k \gs 3$ is odd and $t\gs k+2$. We prove that each vertex in $\binom{[1,t]}{k}$ is of type X and each vertex in $[1,t]$ is of type Z in $C_{t,k}$. First, let us prove by contradiction that a minimal local set cannot be generated by a set containing a vertex in $[1,t]$. Suppose that a set $D \se V$ such that $D \cap [1,t] \neq \emptyset$, generates a minimal local set $L$ in $C_{t,k}$. By symmetry, we suppose without loss of generality that $D \cap [1,t] = [1,m]$ for some $m\in[1,t]$. Let us exhibit a local set strictly contained in $L$, implying a contradiction with the fact that $L$ is a minimal local set:
    \begin{itemize}
        \item If $m=1$ i.e. $D \cap [1,t] = \{1\}$: $L$ contains every vertex of the form $\{\{1\}\cup x\}$ for some $x\in \binom{[2,t]}{k-1}$. Let $D' = \{\{1\}\cup x ~|~ x \in \binom{[2,k+1]}{k-1}\}$. $Odd_{C_{t,k}}(D') = \{1\}$, so $D' \cup Odd_{C_{t,k}}(D')$ is a local set contained strictly in $L$ ($L$ contains other vertices, e.g. the vertex $\{1,4,5,\cdots,k+2\}$).
        \item If $m\gs 2$: let $m_\text{odd}$ be the minimum between $k$ and the largest odd integer strictly less than $m$. Let $D' = \{\{1,\cdots,m_\text{odd},m+1,\cdots,m+k-m_\text{odd}\},\{2,\cdots,m_\text{odd}+1,m+1,\cdots,m+k-m_\text{odd}\}\}$. $Odd_{C_{t,k}}(D') = \{1,m_\text{odd}+1\}$, so $D' \cup Odd_{C_{t,k}}(D')$ is a local set contained strictly in $L$ ($L$ contains other vertices, e.g. the vertex $\{1,3,\cdots,m_\text{odd}+1,m+1,\cdots,m+k-m_\text{odd}\}$).
    \end{itemize} 
    Thus, every minimal local set is generated by a set in $\binom{[1,t]}{k}$. As every vertex is covered by a minimal local set by \cref{thm:MLScover}, it follows that each vertex in $\binom{[1,t]}{k}$ is of type X and each vertex in $[1,t]$ is of type Z. The proof is the same for $C'_{t,k}$ as $C_{t,k}$ and $C'_{t,k}$ differ only by edges between vertices of $[1,t]$ thus for any set $D \se V$, $(D \cup Odd_{C_{t,k}}(D)) \cap \binom{[1,t]}{k} = (D \cup Odd_{C'_{t,k}}(D)) \cap \binom{[1,t]}{k}$.

\section{Proof of Theorem \ref{thm:existence_Ctk}}
\label{app:proof:existence_Ctk}

We begin by proving the following technical lemma.

\begin{lemma}
    Given any integers $s \ls m$, $\binom{m}{s} = 2^{w(s)+w(m-s)-w(m)}q$ where $q$ is an odd integer and $\ham x$ denotes the Hamming weight of the integer $x$.
\label{lemma:paritybinom}
\end{lemma}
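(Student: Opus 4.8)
The plan is to deduce the statement from Legendre's formula for the $2$-adic valuation of a factorial, which reduces the lemma to a one-line cancellation. Write $v_2(\cdot)$ for the $2$-adic valuation. The key ingredient is the classical identity $v_2(n!) = n - \ham n$ for every positive integer $n$. I would prove it as follows: since $v_2(n!) = \sum_{i\gs 1}\lfloor n/2^i\rfloor$ (each $k\ls n$ is counted once for every power of $2$ dividing it), and since, writing $n=\sum_{j\gs 0}b_j2^j$ with $b_j\in\{0,1\}$, one has $\lfloor n/2^i\rfloor=\sum_{j\gs i}b_j2^{j-i}$, interchanging the two sums gives $v_2(n!)=\sum_{j\gs 1}b_j\sum_{i=1}^{j}2^{j-i}=\sum_{j\gs 1}b_j(2^j-1)=n-\ham n$.

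Given this, the lemma follows immediately from $\binom{m}{s}=\frac{m!}{s!\,(m-s)!}$:
\[
v_2\!\left(\binom{m}{s}\right)=v_2(m!)-v_2(s!)-v_2((m-s)!)=(m-\ham m)-(s-\ham s)-\bigl((m-s)-\ham{m-s}\bigr)=\ham s+\ham{m-s}-\ham m.
\]
This exponent is a nonnegative integer because $\binom{m}{s}$ is an integer, so setting $q\defeq\binom{m}{s}/2^{\ham s+\ham{m-s}-\ham m}$ yields an odd integer with $\binom{m}{s}=2^{\ham s+\ham{m-s}-\ham m}\,q$, as claimed.

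There is no real obstacle in this argument; the only point that deserves a little care is the proof of Legendre's identity $v_2(n!)=n-\ham n$, which is a routine interchange of summations. As a remark, $\ham s+\ham{m-s}-\ham m$ equals the number of carries produced when adding $s$ and $m-s$ in base $2$ (this is Kummer's theorem), but the computation above is self-contained and avoids invoking it.
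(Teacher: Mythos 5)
Your proof is correct. Both you and the paper reduce the lemma to the identity $v_2(n!)=n-\ham{n}$ and then conclude by the same cancellation in $\binom{m}{s}=\frac{m!}{s!\,(m-s)!}$; the only divergence is in how that identity is established. The paper proves $n!=2^{n-\ham{n}}q$ by induction on $n$: writing $n=2^k q'$ with $q'$ odd, it tracks how the exponent and the Hamming weight change when passing from $(n-1)!$ to $n!$, using $\ham{n}=\ham{q'}$ and $\ham{n-1}=\ham{n}-1+k$. You instead invoke Legendre's formula $v_2(n!)=\sum_{i\gs 1}\lfloor n/2^i\rfloor$ and evaluate it by interchanging summations over the binary digits. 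Both arguments are complete and of comparable length; the paper's induction is self-contained (it never needs the counting formula for $v_2(n!)$), while your route is the more standard textbook computation and makes the appearance of $n-\ham{n}$ arguably more transparent. One cosmetic point: you state the identity for positive $n$, but you apply it to $s!$ and $(m-s)!$, which may be $0!$ when $s\in\{0,m\}$; the case $n=0$ holds trivially ($v_2(0!)=0=0-\ham{0}$), so nothing breaks, but it is worth a word.
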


\begin{proof}
    
    We prove by induction that for any $n \gs 0$, $n! = 2^{n-\ham n}q$ where $q$ is an odd integer. The property is trivially true for $n=0$. Given $n>0$, by induction hypothesis, there exists $q=1\bmod 2$ s.t. $n! = n (n-1)! = n2^{n-1-w(n-1)}q=2^kq'2^{n-1-w(n-1)}q= 2^{n-1+k-w(n-1)}qq'$, where $n=2^kq'$ and $q'=1\bmod 2$. Notice that $w(n)=w(q')$, as the binary notation of $n$ is nothing but the binary notation of $q'$ concatenated with $k$ zeros. Moreover, $w(n-1)=w(q')-1+k = w(n)-1+k$ as $n-1= 2^k(q'-1)+ \sum_{j=0}^{k-1}2^j$. Therefore, $n!=2^{n-w(n)}qq'$. 
   
    As a consequence there exist three odd integers $q_1$, $q_2$ and $q_3$ such that $$\binom{m}{s} = \dfrac{m!}{s! (m-s)!}=\dfrac{2^{m-w(m)}q_1}{2^{s-w(s)}q_2 2^{m-s-w(m-s)}q_3} = 2^{w(s)+w(m-s)-w(m)}\dfrac{q_1}{q_2 q_3}$$
\end{proof}

\existenceCtk*

\begin{proof}
    \cref{lemma:cond_lcr} along with \cref{lemma:cond_not_lcr} imply that for any odd $k \gs 3$ and $t\gs k+2$, $C_{t,k}$ and $C'_{t,k}$ are $r$-locally equivalent but not $(r-1)$-locally equivalent if
    \begin{equation*}
        \begin{split}
            \binom{t}{2} & = 1\bmod 2\\  
            \binom{k}{2} & = 0\bmod2^{r-1}\\ 
            \binom{t - 2}{k - 2} & = 2^{r-1}\bmod 2^r\\   
            \forall i \in [ 1, r-1 ]\text{, ~~}\binom{t - i -2}{k - i -2} & = 0\bmod 2^{r-i}                     
        \end{split}
    \end{equation*}  

    By \cref{lemma:paritybinom}, it is equivalent to prove the following equations:

    \begin{equation*}
        \begin{split}
            & \ham{t} - \ham{t-2}  = 1\\
            & \ham{k-2} - \ham{k}  \gs r-2 \\
            & \ham{k-2}+\ham{t-k}-\ham{t-2} =r -1\\   
            \forall i \in [ 1, r-1 ]\text{, ~~}& \ham{k-2 -i}+\ham{t-k} -\ham{t-2-i} \geqslant r -i            
        \end{split}
    \end{equation*}

    Let us prove that these equations are satisfied for $k = 2^r +1$ and $t = 2^r + 2^{\lceil \log_2(r+2) \rceil} -1$. We will make use of the following property of the Hamming weight: if $x\ls 2^z-1$, then $\ham{2^z-1-x}=z-\ham{x}$. In particular, for any $x \ls r+1 \ls 2^{\lceil \log_2(r+2) \rceil}-1$, $\ham{t-x}=\lceil \log_2(r+2) \rceil +1-\ham{x}$.
    \begin{itemize}
        \item $\ham{t}=\lceil \log_2(r+2) \rceil +1$, $\ham{t-2}=\lceil \log_2(r+2) \rceil$, thus $\ham{t} - \ham{t-2}  = 1$.
        
        \item $\ham{k}=2$ and $\ham{k-2}=\ham{2^r-1}=r$, thus $\ham{k-2} - \ham{k} = r-2$.
        
        \item $\ham{t-k} = \ham{2^{\lceil \log_2(r+2) \rceil}-2} = \lceil \log_2(r+2) \rceil -1$. Thus, $\ham{k-2}+\ham{t-k}-\ham{t-2}= r + \lceil \log_2(r+2) \rceil -1 - \lceil \log_2(r+2) \rceil = r -1$.
        
        \item Let $i \in [ 1, r-1 ]$. $\ham{k-2-i} = \ham{2^r-1-i}=r-\ham{i}$ and $\ham{t-2-i} = \ham{t -(i+2)} =  \lceil \log_2(r+2) \rceil +1 - \ham{i+2}$. Thus, $\ham{k-2 -i}+\ham{t-k} -\ham{t-2-i} = r-\ham{i} + \lceil \log_2(r+2) \rceil -1 - \left(\lceil \log_2(r+2) \rceil +1 - \ham{i+2}\right) = r - \left(\ham{i} - \ham{i+2} + 2\right)$. It is easy to check that $\ham{i} - \ham{i+2} + 2 \ls i$. Thus, $\ham{k-2 -i}+\ham{t-k} -\ham{t-2-i} \gs r -i$.
    \end{itemize}

Hence, $C_{t,k}$ and $C'_{t,k}$ are $r$-locally equivalent but not $(r-1)$-locally equivalent. By \cref{thm:LCr_lc}, $\ket{C_{t,k}}$ and $\ket{C'_{t,k}}$ are LC$_{r}$-equivalent but not LC$_{r-1}$-equivalent.    
\end{proof}
\end{document}